\documentclass{IEEEtaes}

\usepackage{color, array,amsthm,newtxmath}
\usepackage{graphicx}
\usepackage{cite}
\usepackage{multirow}
\usepackage{makecell}

\jvol{XX}
\jnum{XX}
\jmonth{XXXXX}
\paper{1234567}
\pubyear{2023}
\doiinfo{TAES.2023.Doi Number}

\newtheorem{theorem}{Theorem}
\newtheorem{lemma}{Lemma}
\newtheorem{proposition}{Proposition}
\setcounter{page}{1}

\begin{document}

\title{Adaptive Smooth Control via Nonsingular Fast Terminal Sliding Modes for Distributed Space Telescope Demonstration Mission by CubeSat Formation Flying}

\author{Soobin Jeon}
\affil{Ph.D Candidate, Department of Astronomy, Yonsei University, Seoul 03722, Republic of Korea} 

\author{Hancheol Cho}
\affil{Associate Professor, Department of Astronomy, Yonsei University, Seoul 03722, Republic of Korea

       Visiting Assistant Professor, Department of Aerospace Engineering, Embry-Riddle Aeronautical University, Daytona Beach, FL 32114, USA} 

\author{Sang-Young Park}
\affil{Professor, Department of Astronomy, Yonsei University, Seoul 03722, Republic of Korea}


\receiveddate{Manuscript received XXXXX 00, 0000; revised XXXXX 00, 0000; accepted XXXXX 00, 0000.\\
}

\corresp{
{\itshape Corresponding author: H. Cho}}

\authoraddress{
Soobin Jeon is with the Department of Astronomy, Yonsei University, Seoul 03722, Republic of Korea 
(e-mail: \href{mailto:forestine@yonsei.ac.kr}{forestine@yonsei.ac.kr}). 
Hancheol Cho is with the Department of Astronomy, Yonsei University, Seoul 03722, Republic of Korea (e-mail: \href{mailto:hancho37@yonsei.ac.kr}{hancho37@yonsei.ac.kr}) and the Department of Aerospace Engineering, Embry-Riddle Aeronautical University, Daytona Beach, FL 32114, USA (e-mail: \href{mailto:choh15@erau.edu}{choh15@erau.edu}). 
Sang-Young Park is with the Department of Astronomy, Yonsei University, Seoul 03722, Republic of Korea (e-mail: \href{mailto:spark@yonsei.ac.kr}{spark624@yonsei.ac.kr}).
}


\markboth{S. JEON ET AL.}{ASC via NFTSM for DST Demonstration by CubeSat Formation Flying}
\maketitle

\begin{abstract}
This paper presents a nonsingular fast terminal sliding mode-based adaptive smooth control methodology for a distributed space telescope demonstration mission. The distributed space telescope has a flexible focal length that corresponds to the relative position in the formation flying concept. The limited specification of a CubeSat generally restricts the performance of actuators – most critically the degrees of freedom of controlled motion. This investigation leads to the development of an adaptive smooth control methodology via nonsingular fast terminal sliding modes. \textcolor{black}{The adaptive smooth control algorithm that was developed for a single-input single-output system is adopted and extended to the relative orbit and attitude control systems of the distributed space telescope. The software simulation is conducted under a real mission, which means the real CubeSat structures, hardware specifications, and operational constraints. The proposed algorithm possesses only seven parameters that can be easily adjusted considering their physical meanings. Furthermore, the pre-designated error bounds are analytically derived, which enhances the applicability of the algorithm to real missions.} The simulation compares the efficiency of the adaptive smooth nonsingular fast terminal sliding mode controller with the linear quadratic regulator \textcolor{black}{and proportional derivative} algorithm\textcolor{black}{s}. The results verify that the adaptive smooth nonsingular fast terminal sliding mode control algorithm shows better control performance in the perspective of the alignment time and the fuel consumption for the distributed space telescope demonstration mission.
\end{abstract}

\begin{IEEEkeywords}
Robust control, Nonsingular fast terminal sliding mode control, Adaptive control, CubeSat, Formation flying
\end{IEEEkeywords}

\section*{Nomenclature}
\addcontentsline{toc}{section}{Nomenclature}
\begin{IEEEdescription}[\IEEEusemathlabelsep\IEEEsetlabelwidth{$\boldsymbol{\omega} = \begin{bmatrix} \omega_{1} & \omega_{2} & \omega_{3} \end{bmatrix}^{T}$}]
\item[$\textbf{\textit{r}}=\begin{bmatrix} x & y & z \end{bmatrix}^{T}$] Relative position vector in the LVLH frame
\item[$\dot{\textbf{\textit{r}}}=\begin{bmatrix} \dot{x} & \dot{y} & \dot{z} \end{bmatrix}^{T}$] Relative velocity vector in the LVLH frame
\item[$\textbf{\textit{q}} = \begin{bmatrix} \textbf{\textit{q}}_{v}^{T} & q_{4} \end{bmatrix}^{T}$] Quaternion
\item[$\textbf{\textit{q}}_{v} = \begin{bmatrix} q_{1} & q_{2} & q_{3} \end{bmatrix}^{T}$] Quaternion vector component
\item[$\boldsymbol{\omega} = \begin{bmatrix} \omega_{1} & \omega_{2} & \omega_{3} \end{bmatrix}^{T}$] Angular velocity
\item[$\textbf{\textit{d}}$] Disturbance 
\item[$d$] Lower bound of disturbance
\item[$D$] Upper bound of disturbance
\item[$m$] Mass of the deputy satellite
\item[$\textbf{R}$] Diagonal matrix comprising position errors
\item[$R_{ei}$] Position error bound of NFTSM controller
\item[$Q_{ei}$] Quaternion error bound of NFTSM controller
\item[$\textbf{Q}_{LQR}$] Q weight matrix of LQR controller
\item[$\textbf{R}_{LQR}$] R weight matrix of LQR controller
\item[$(\cdot)^{i}$] A variable or parameter in an inertial frame
\item[$(\cdot)^{l}$] A variable or parameter in the LVLH frame
\item[$(\cdot)^{b}$] A variable or parameter in the body frame
\item[$(\cdot)_{c}$] A variable or parameter of the chief satellite
\item[$(\cdot)_{d}$] A variable or parameter of the deputy satellite
\item[$(\cdot)_{p}$] A primary vector of TRIAD algorithm
\item[$(\cdot)_{s}$] A secondary vector of TRIAD algorithm
\item[$(\cdot)_{w}$] A variable or parameter of the reaction wheels
\item[$(\cdot)_{orb}$] A variable or parameter of the relative orbit control system
\item[$(\cdot)_{att}$] A variable or parameter of the attitude control system
\item[$(\cdot)_{x,y,z}$] The first, second, and third components of the orbital variable or parameter
\item[$(\cdot)_{1,2,3}$] The first, second, and third components of the attitude variable or parameter
\item[$(\cdot)_{r}$] Reference, desired, or target variable or parameter
\item[$(\cdot)_{e}$] Error variable or parameter

\end{IEEEdescription}

\section{Introduction}
\label{s:1intro}


The advancement of satellite technology has led to the concept of a distributed space system, which aims to replace medium or large single satellites with a number of small satellites\cite{DSS1,DSS2,DSS3}. The distributed space system integrates several satellites into a single architecture and encompasses satellite constellations, formation flying, and distributed space telescopes, etc. Distributed space systems offer the capability that strategically organizes the configuration of satellites in accordance with given mission objectives. Hence, it contributes to expanding the diversity and flexibility of scientific missions utilizing multiple satellites.

Conventional formation flight missions have reference relative trajectories that are designed based on the relative dynamics exhibited by two similar satellites. Namely, the reference trajectory is deliberately designed to retain its stability in the absence of a control input. The Gravity Recovery and Climate Experiment (GRACE) and the Gravity Recovery and Climate Experiment Follow-on (GRACE-FO) mission was a scientific mission that quantifies the gravitational field by analyzing alterations in relative motion\cite{GRACE1,GRACE2}. TerraSAR-X/TanDEM-X (TSX/TDX) satellites acquired ground observation data regularly by applying instantaneous thrust to maintain relative motion within the control window\cite{TSXTDX}. In \cite{CanX45,CanX45_2,CPOD} the formation flight control technology was verified by either modifying the relative orbit for a rendezvous (CanX-4/5) or executing inter-satellite docking with proximity operations (CPOD).


A distributed space telescope is a kind of distributed space system that observes inertial objects by formation flying of dual-spacecraft. The distributed space telescope that is also known as the binary-distributed telescope \cite{AWKoenig} or the virtual telescope \cite{PCCalhoun} refers to a group of multiple satellites equipped with optics and detection payload functions as a single instrument\cite{AWKoenig}. Distributed telescopes have a variable and extensive local length compared to monolithic telescopes and consist of a number of more compact structures, which possesses the advantage of relatively low launch and development costs \cite{AirForce}. This paper uses the term distributed space telescope. 

The Inertial Alignment Hold (IAH) is the technology that maintains a specific configuration toward an inertial target in order to implement distributed space telescopes. The relative distance of the formation flying satellites corresponds to the telescope's baseline or focal length and is designed to range from 1,000 to 72,000 kilometers (astrophysical targets)\cite{MASSIM,NWO} to 50 to 500 meters (heliophysical targets and X-ray images)\cite{ASPIICS,BRDennis}, depending on the observation target and wavelength band. In contrast to conventional formation flight missions, the reference trajectories of the inertial alignment hold missions should be designed as the orbits with the separation of focal lengths with respect to the inertial targets. Since the control law must retain the alignment precisely within the requirements in an unknown space environment, this paper suggests a robust control approach that compensates for any effects of disturbances using continuous thrust.


The robust control system should be able to resist these effects that are neglected in the \textcolor{black}{system design such as disturbances, measurement noise, and unmodeled dynamics} while successfully performing the tasks \cite{SMC1}. \textcolor{black}{Spacecraft disturbances can be divided into external and internal effects \cite{Dist1,Dist2}. The external disturbances arise from the spacecraft’s interaction with the space environment and tend to have low frequency on the order of the orbital rate. These environmental disturbance forces and torques in this paper are summarized in Section V.A. Common sources of orbital and attitude disturbances include Earth’s nonuniform gravity, atmospheric drag, and solar radiation pressure. Orbital dynamics are further influenced by gravitational forces from third bodies, such as the Sun and Moon, while attitude control has additional disturbance torques due to residual magnetic dipole moments. The internal disturbances include internal vibrations which break down to the spacecraft subsystem level such as attitude control, propulsion, thermal and power. In this paper, the uncertainties in mass and moment of inertia are assumed as the internal disturbances and expressed in Eqs. \eqref{eq:uncertainMass},\eqref{eq:uncertainMoI}}. Several robust control techniques have been proposed and have proven their effectiveness in a variety of practical control problems \cite{Robust1,Robust2,Robust3,Robust4}. One common solution for achieving robustness is sliding mode control (SMC) \cite{SMC2,SMC3,SMC4}. The sliding mode control approach allows the system's states to defy the natural motion and follow the designed trajectory, the so-called `sliding surface.'

The feasibility of sliding mode control in formation flying missions has been successfully shown in literature over several decades. The adaptive linear and nonsingular terminal sliding mode control approach achieved submillimeter formation keeping precision even when the operational faults were detected \cite{SMCSFF6}. The continuous and high-order sliding mode controller with consecutive pulse-width modulation was applied to the satellite formation flying in \cite{SMCSFF1}. The sliding mode approach in this application showed high tracking accuracy in the presence of bounded and matched disturbances and uncertainties. The fuzzy-SMC-based technique successfully conducted reconfigurations within a short time span with a minimum control effort for the multiple satellite formation flying problems \cite{SMCSFF5}. 

\textcolor{black}{
The adaptive sliding mode control has been utilized to address the extended concept of satellite formation such as formation antenna array \cite{SMCSFFR1}. The dual quaternion model is first introduced in \cite{SMCSFFR2}. The dual quaternion parametrizes the translational and rotational motion of a rigid body as a 6-degrees-of-freedom equations of motion. Hence, the adaptive continuous sliding mode control in \cite{SMCSFFR1} handles the coupled perturbed system successfully compared to the PD controller and classical SMC. In \cite{SMCSFFR3}, the adaptive fast terminal sliding mode control based on dual quaternion is developed to describe the coupled relative motion of spacecraft formation flying. However, the asymmetric Earth gravitational effect is not considered and the uncertainty bound is assumed to be known. Another adpative terminal sliding mode control designed based on the dual quaternion dynamics deals with the actuator faults in the spacecraft formation flying system \cite{SMCSFFR4}. The control law in \cite{SMCSFFR4} contains the signum function, which produces the discontinuous control input. The pose tracking control for spacecraft proximity operation includes the simultaneous relative position tracking and boresight pointing adjustment \cite{SMCSFFR5}. The prescribed performance control method can accomplish the proximity operations in a desired time, achieve compliance with spatial motion constraints and avoid singularity of the attitude extraction algorithm.
}



The terminal sliding mode (TSM) technique was first proposed for finite-time control for rigid robotic manipulators \cite{TSMrobo,NTSMrobo}. It successfully achieved finite-time stability, although the singularity problem existed \cite{TSMrobo}. The investigation in \cite{NTSMrobo} suggested a modified type of terminal sliding variable by defining a nonsingular terminal sliding mode (NTSM) variable. In the aerospace field, the TSM approach addressed the attitude tracking problem of spacecraft \cite{TSMaero}. However, it had the singularity problem and the research in \cite{NTSMaero} dealt with it in the same manner as in \cite{NTSMrobo}.

The finite-time control (FTC) approach \cite{FTCGZeng} combines the advantages of the linear and terminal sliding mode control approaches. This FTC algorithm as a TSM variant guarantees the convergence of tracking errors in finite time for the relative translational motion problems such as formation-keeping and reconfiguration of electromagnetic formation flying. The sliding variable proposed in \cite{FTCGZeng} is similar to the one that this paper suggests for the relative orbit control system (ROCS). It is proven that the sliding variable in \cite{FTCGZeng} shows faster convergence speed near and far from the equilibrium points than the linear and the conventional terminal sliding modes in \cite{FTCGZeng}. However, the controller contains the term $\left\lvert \textbf{\textit{e}} \right\lvert^{\gamma - 1}$ where $\textbf{\textit{e}}$ is the relative position error and $\gamma \in (0,1)$ is a constant, provoking the singularity problem that the previous research suffered. In addition, the constant gain hardly overcomes the unknown bounds of disturbances and uncertainties. This investigation leads to the proposition of a new nonsingular sliding variable for an adaptive smooth controller.

The study presented in \cite{PMTiwari} designed a nonsingular fast terminal sliding mode (NFTSM), another modified TSM technique. The NFTSM in \cite{PMTiwari} deals with a singular hyper-plane term by increasing the exponent range to $(1,2)$. Besides, the combination of the equivalent control and reaching phase control with an adaptive gain alleviates chattering and achieves fast convergence. The sliding variable for the attitude control system (ACS) in this paper defines the hyper-plane term for quaternion errors in contrast to the one in \cite{PMTiwari} in which both the quaternion and angular velocity errors were applied. Despite this minor difference, the sliding variables in both papers include the hyper-plane and continuous error, and enhance the convergence rate from an arbitrary initial state. In addition, those two sliding variables utilize the increased exponent range that guarantees nonsingularity. However, the adaptive laws in \cite{PMTiwari} might suffer from unintended chattering and need to compromise between the chattering and speed. The adaptive rule in the current study successfully sustains smooth control without any chattering. 

\textcolor{black}{
A lot of literatures address a variety of control plants with sliding mode control. The surface vehicle is one of the control plants that are commonly dealt with. The target and trajectory tracking problems of unmanned surface vehicle has been solved by introducing the barrier function-based adaptive pseudo-inverse controller and nonsingular terminal sliding mode controller with observers \cite{HHe}. The nonsingular terminal sliding mode guarantees the finite time stability to the trajectory tracking system and achieves higher accuracy compared to direct position measurement and adaptive position estimator. On the other hand, the tracking problem of the asymmetric underactuated surface vehicle has been dealt with a finite-time unknown observer-based interactive trajectory tracking control, of which the finite time observer contains the signum function \cite{NWang}. It achieves higher accuracy than interactive trajectory tracking control, finite-time unknown observer cascade-backstepping control, and cascade-backstepping control. }



\textcolor{black}{This paper aims to} achieve precise control of the relative orbit and attitude motion of multiple satellites without a priori knowledge of the system uncertainties or external disturbances. To this end, a new \textcolor{black}{nonsingular fast terminal sliding mode-based adaptive smooth control law is proposed and applied to the CubeSat formation flying mission to demonstrate a distributed space telescope}. \textcolor{black}{The novel} nonsingular fast terminal sliding mode variable is first introduced while overcoming the singularity and achieving a fast convergence to the equilibrium point from any initial state. 

Next, the adaptive smooth control law \textcolor{black}{in \cite{HCho2020} that was originally designed for a single-input single-output (SISO) system is adopted and extended to multi-input multi-output (MIMO) systems, for instance, a formation flying system.} \textcolor{black}{The original algorithm has three key properties: adaptive, smooth, and a smaller number of control parameters. Adaptive control is generally known to have better control performances than the one with a constant gain under disturbances and uncertainties. The absence of the signum function suspends chattering which is a chronic phenomenon for sliding modes. Besides, the algorithm is designed to have only a few parameters which makes a parameter tuning even easier and simpler. The proposed algorithm possesses three key properties from the original one and applied to a formation flying system while dealing with the complexity of the dynamics.}

\textcolor{black}{The efficiency of the adaptive smooth nonsingular fast terminal sliding mode control algorithm proposed in this paper is verified by applying it to a real mission. The CANYVAL (CubeSat Astronomy by NASA and Yonsei using the Virtual ALignment)-C mission aimed to demonstrate the distributed space telescope by CubeSat formation flying and was launched in 2021. Therefore, the structures, configurations, specifications, and operational constraints are all elaborately designed for a real mission. The whole system design can be referred to the reference \cite{CANYVAL-C}. The technology readiness levels (TRL) are a metric developed by the National Aeronautics and Space Administration (NASA) to assess the maturity of a particular technology\cite{TRL1}. The TRL scales in four stages: basic research (TRL 1-3), development (TRL 4-5), demonstration (TRL 6-7), and early development (TRL 8-9)\cite{TRL2}. Universities usually focus on TRL 1-4, which are basic principles observed, technology concept and/or application formulated, analytical and experimental critical function and/or characteristic proof of concept, and component and/or breadboard validation in laboratory environment\cite{TRL1},\cite{TRL3}. The proposed algorithm is built on the theoretical concept for SISO system, developed for the formation flying system, which is a more realistic control plant, and verified the control performance in software simulation. Therefore, the research in this paper enhances the TRL.}

\textcolor{black}{
The control algorithm consists of three sliding variable parameters and four control parameters. The parameters are designed to have physical meanings, so they are easily adjustable. The sliding variable parameters are relevant to the convergence speed of the sliding variable and the weights of the error states relative to the time derivatives. The control parameters determine the pre-designated error bounds and the threshold that the control gain increases or decreases.}

\textcolor{black}{
When the sliding variable is bounded by a small region, the error state is also bounded within a small value. In this investigation, the error bound according to the sliding variable’s bound is analytically derived and expressed as a function of control parameters. The mission requirements induce the desired error bounds, and the error bounds derive the simulation parameters. The simulation result verifies that the proposed algorithm successfully achieves the desired error bounds. It has proven to be beneficial for application in real missions. 
}

\textcolor{black}{The contributions are summarized as below:
\begin{itemize} 
\item A novel type of nonsingular fast terminal sliding mode is developed and applied for ROCS and ACS. 
\item The adaptive smooth control algorithm originally developed for SISO nonlinear systems is adopted and extended for an example of MIMO nonlinear systems, a formation flying system.
\item The efficiency of the proposed algorithm is verified by applying it to a real mission, incorporating real CubeSat structures, actuator specifications, and operational constraints, enhancing the technology readiness level of the theoretical algorithm.
\item The algorithm utilizes only three sliding variable parameters and four control parameters that can be designed to be physically meaningful and easily adjustable.
\item The proposed algorithm can achieve pre-designated error bounds, facilitating its application to real missions. 
\end{itemize}
}


This paper is structured as follows. The first section describes the historical background of the distributed space telescope (DST) and NFTSM, and the aim and the contributions of the current study sequentially. Section~\ref{s:2mission} describes the concept and the characteristics of the Guidance, Navigation, and Control (GN\&C) subsystem of the DST demonstration (CANYVAL-C) mission by CubeSats. The dynamic and kinematic equations of the nominal and error states are explained in Section~\ref{s:3FFS} for ACS and ROCS. Section~\ref{s:4ctrlSys} introduces the new NFTSM and adaptive smooth control law, and proves their stability based on the Lyapunov theory. The control performances described in Section~\ref{s:5sim} verify the feasibility and effectiveness of the adaptive smooth nonsingular fast terminal sliding mode controller for the CubeSat formation flying mission developed in this paper. Finally, Section~\ref{s:6conclusion} concludes the paper.

\section{Mission Description}
\label{s:2mission}

The CANYVAL project is a distributed space telescope demonstration mission by formation flying of 1U and 2U CubeSats and was developed by Yonsei University and National Aeronautics and Space Administration (NASA). The 1U satellite is a DSC (Detector Spacecraft) equipped with a visible wavelength camera, and an OSC (Optics Spacecraft), 2U satellite, is equipped with an occulter and it operates the propulsion system to perform orbit control. The nonlinear equation of the relative motion that regards the Earth as a point mass describes the relative states between 1U as the chief and 2U as the deputy. 

The configuration based on the mission concept is described in \figurename{\ref{fig:dstConfig}}. The first phase of the CANYVAL project is CANYVAL-X (CANYVAL-eXperimental) launched in 2018, and the OSC controls only the alignment angle ($\theta_{align}$) in \figurename{\ref{fig:dstConfig}} so that it has a flexible separation between DSC and OSC \cite{CANYVAL-X}. The second phase of the CANYVAL project is CANYVAL-C (CANYVAL-Coronograph) launched in 2021, and the thruster has a greater number of degrees of freedom to implement the distributed space telescope with a fixed distance of 40 meters \cite{CANYVAL-C}. Therefore, the formation flying system controls the error between the relative position of OSC (\begin{math}\textbf{\textit{r}}\end{math}) and the desired trajectory (\begin{math}\textbf{\textit{r}}_{r}\end{math}) to complete the DSC – OSC – Sun configuration.

\begin{figure}[b]
\begin{center}
\includegraphics[width=1.0\linewidth]{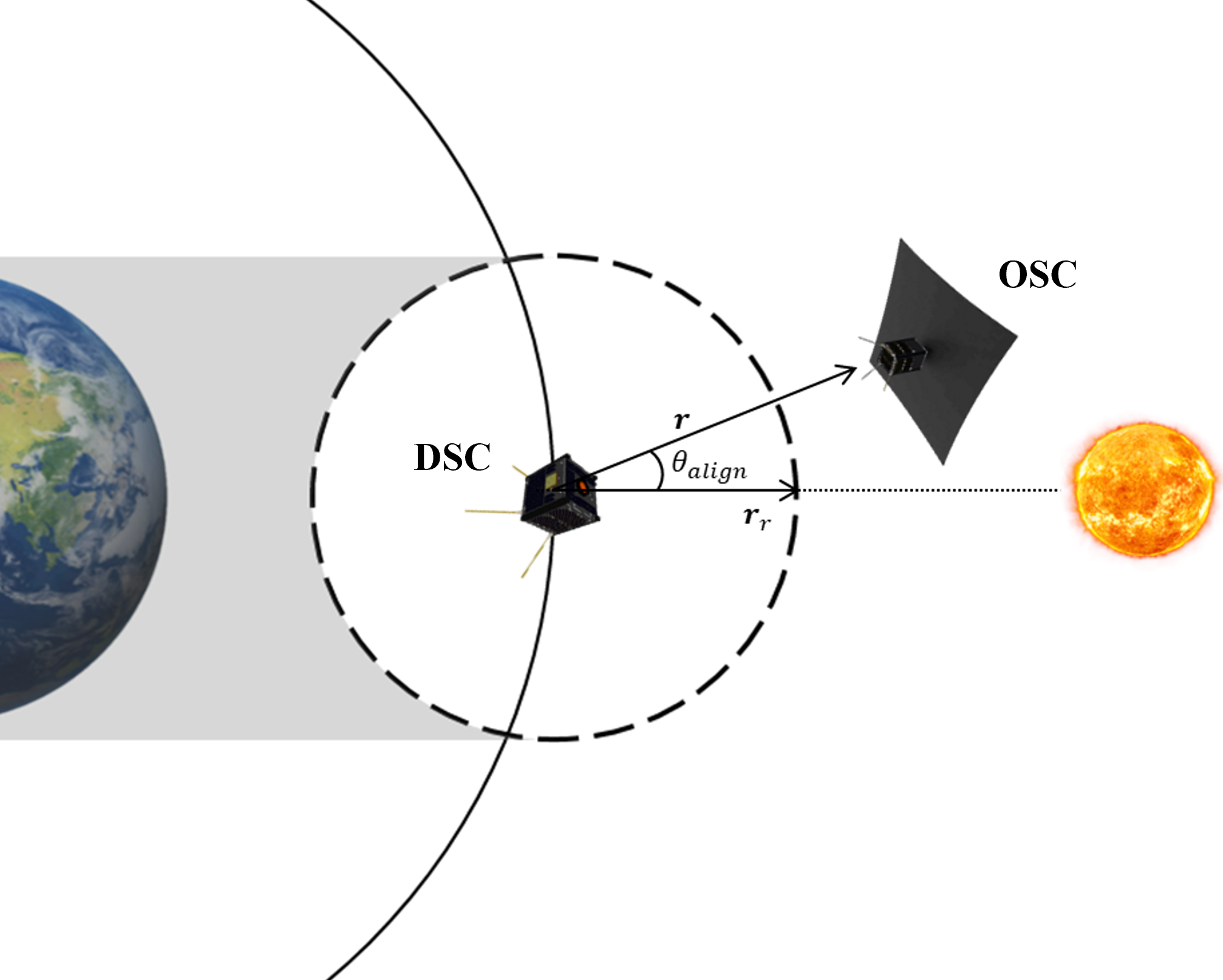}
\end{center}
\caption{Distributed space telescope configuration to demonstrate the solar coronagraph}
\label{fig:dstConfig}
\end{figure}

\figurename{\ref{fig:hwConfig}} describes the location and orientation of the payload and the propulsion system of the OSC. The CANYVAL-C mission's propulsion system utilizes GomSpace's Nanoprop, which fires cold gas through four nozzles in the \begin{math}\pm\hat{\textbf{\textit{b}}}_{x}\end{math} and \begin{math}\pm\hat{\textbf{\textit{b}}}_{y}\end{math} directions in the body fixed frame. The occulter should point to the \begin{math}-\hat{\textbf{\textit{b}}}_{z}\end{math} direction that is normal to the thrust nozzles and it constrains the controllable directions in the local-vertical, local-horizontal (LVLH) frame during alignment to the sun. The propulsion system controls only the satellite's orbit, while the attitude is controlled by a reaction wheel assembly composed of three wheels. The attitude should be controlled to fire in the desired direction as the propulsion system has two degrees of freedom. Therefore, the orbit and attitude control accuracies are dependent on each other. 

\begin{figure}[b]
\begin{center}
\includegraphics[width=1.0\linewidth]{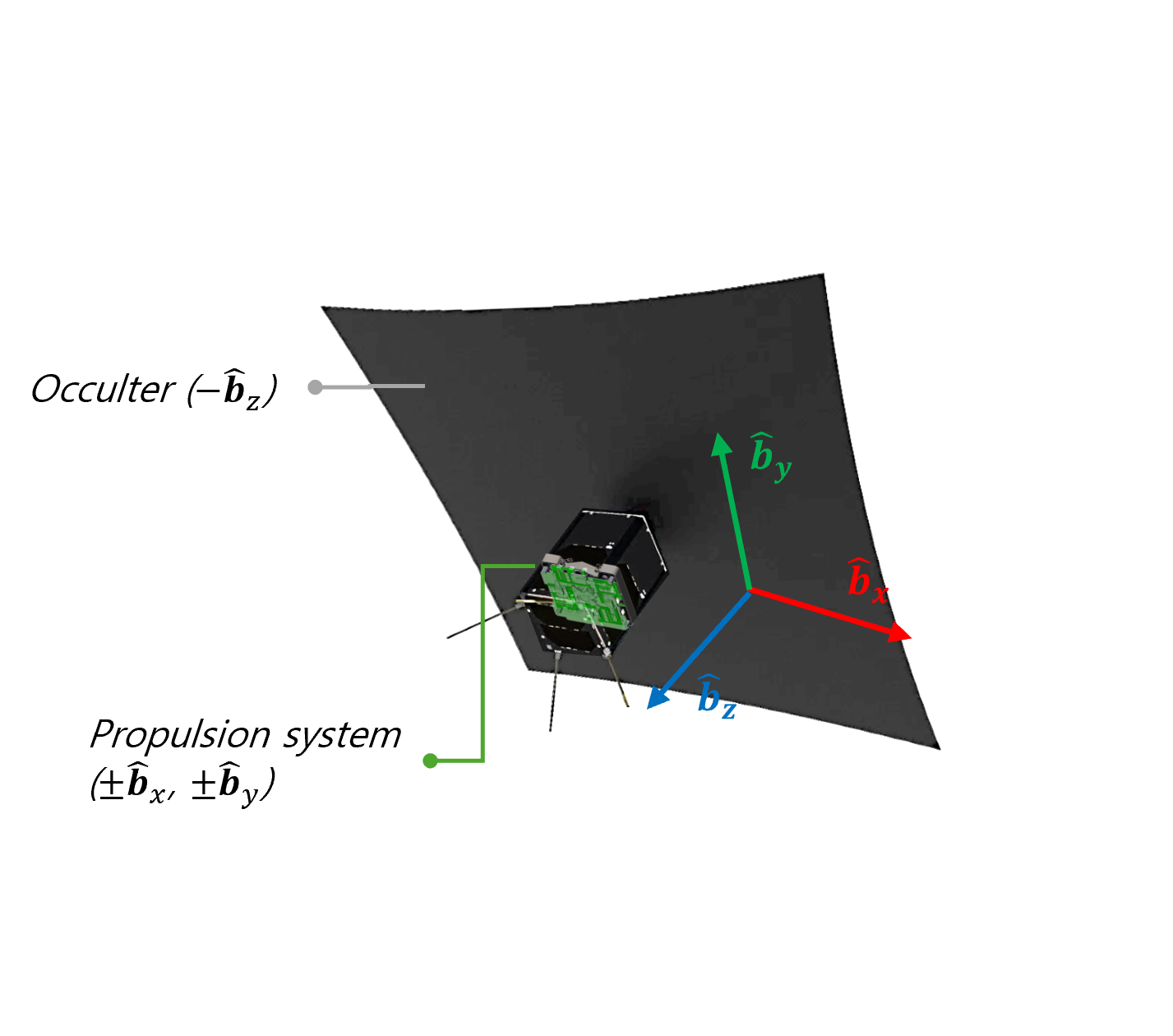}
\end{center}
\caption{Location and orientation of the occulter and propulsion system in body frame.}
\label{fig:hwConfig}
\end{figure}

\section{Description of Formation Flying System}
\label{s:3FFS}

This paper assumes that the propulsion system has two degrees of freedom and controls only the orbital states, and the ACS (Attitude Control System) consists of reaction wheels. Therefore, the attitude actuator controls the body frame so that the resultant fired thrust vector is aligned with the reference thrust vector in advance of the orbital actuator's firing. The relative orbital dynamics are described in the LVLH coordinate system, defined based on point mass gravity. The quaternions and angular velocity represent the attitude and the dynamic equations include the rotational motion of the reaction wheels.

\textcolor{black}{Section~\ref{s:3FFS}.\ref{s:3-1ROCS} and \ref{s:3-2ACS} introduce the main control plants of this paper, the relative orbit control system and attitude control system. The kinematic and dynamic equations of ROCS are \eqref{eq:r_rel}, \eqref{eq:twoBody_r_rel}, \eqref{eq:dynOrb}, and \eqref{eq:dynOrb_A1A2Eb}. The relative orbit error is defined as \eqref{eq:errorb} and follows the dynamics of \eqref{eq:dynOrbErr}. The kinematic and dynamic equations of ACS are expressed as \eqref{eq:kmtAtt}, \eqref{eq:qcrss} and \eqref{eq:dynAtt}. The quaternion and angular velocity errors are obtained as \eqref{eq:qerrAtt} and \eqref{eq:omerrAtt}, and the error dynamic equation is described as \eqref{eq:dynAttErr}. Section~\ref{s:3FFS}.\ref{s:3-3Problem} defines the problem that this paper addresses. Section ~\ref{s:3FFS}.\ref{s:3-4MissionObjectives} explains the reference trajectories and attitudes for the problem.}

\subsection{Relative Orbit Control System}
\label{s:3-1ROCS}
 The Keplerian two-body equations of motion of the chief satellite are given by \cite{KTAlfriend}
\begin{equation}{\label{eq:twoBody}}
    \Ddot{\textbf{\textit{r}}}_c \!\left(t\right) = - \frac{\mu\textbf{\textit{r}}_c\!\left(t\right)}{r_{c} ^3}
\end{equation}
where \begin{math} \textbf{\textit{r}}_c\!\left(t\right) = 
    \begin{bmatrix}
        X_{c}\!\left(t\right) & Y_{c}\!\left(t\right) & Z_{c}\!\left(t\right)
    \end{bmatrix}^T
\end{math} is the chief's position in an inertial frame with the magnitude of $r_{c}$ and \begin{math}\mu\end{math} is the standard gravitational parameter of the Earth.

The relative position and acceleration of the deputy satellite with respect to the chief are defined as
\begin{equation}{\label{eq:r_rel}}
    \textbf{\textit{r}}\!\left(t\right) = \textbf{\textit{r}}_d\!\left(t\right) - \textbf{\textit{r}}_c\!\left(t\right)
\end{equation}
\begin{equation}{\label{eq:twoBody_r_rel}}
\begin{aligned}
    &\Ddot{\textbf{\textit{r}}}\!\left(t\right) = \Ddot{\textbf{\textit{r}}}_d\!\left(t\right) - \Ddot{\textbf{\textit{r}}}_c\!\left(t\right) \\
    &=-\frac{\mu\left(\textbf{\textit{r}}\!\left(t\right) + \textbf{\textit{r}}_c\!\left(t\right)\right)}
    {\left\lVert \textbf{\textit{r}}\!\left(t\right) + \textbf{\textit{r}}_{c}\!\left(t\right) \right\lVert_{2}^3} 
    + \frac{\mu\textbf{\textit{r}}_c\!\left(t\right)}{r_c^3}
\end{aligned}
\end{equation}
where \begin{math} \textbf{\textit{r}}_d\!\left(t\right)=
    \begin{bmatrix}
        X_{d}\!\left(t\right) & Y_{d}\!\left(t\right) & Z_{d}\!\left(t\right)
    \end{bmatrix}^T
\end{math} is the inertial position of the deputy satellite with the distance from the center of the Earth $r_{d}$.

The second time derivative of the position vector in the inertial frame (\begin{math}i\end{math})
is related to the one in the LVLH frame (\begin{math}l\end{math}) through the following \cite{KTAlfriend,WEWiesel}
\begin{equation}{\label{eq:d2rdt2_NInertial}}
\begin{aligned}
    &\Ddot{\textbf{\textit{r}}}\!\left(t\right) = {^i}\left({\frac{d^{2}\textbf{\textit{r}}\!\left(t\right)}{dt^{2}}}\right) = 
    {^l}\left({\frac{d^{2}\textbf{\textit{r}}\!\left(t\right)}{dt^{2}}}\right)
    + 2{^i}\boldsymbol{\omega}{^l}\times {^l}\left(\frac{d\textbf{\textit{r}}\!\left(t\right)}{dt} \right) \\
    &+\frac{d{^i}\boldsymbol{\omega}{^l} }{dt}\times\textbf{\textit{r}}\!\left(t\right)
    +{^i}\boldsymbol{\omega}{^l}
    \times\left({^i}\boldsymbol{\omega}{^l} \times \textbf{\textit{r}}\!\left(t\right) \right)
\end{aligned}
\end{equation}
where \begin{math}{^i}\boldsymbol{\omega}{^l} \end{math}
is the angular velocity vector of the frame \begin{math}l\end{math}
with respect to the frame \begin{math}i\end{math}. It is defined as \begin{math}{^i}\boldsymbol{\omega}{^l} =
\begin{bmatrix}
0 & 0 & \dot{\theta}_c
\end{bmatrix}^T
\end{math} due to its normality to the orbital plane where \begin{math}
    \dot{\theta}_{c}
\end{math} is chief's the orbital angular speed.

Hence, the relative equations of motion are derived as
\begin{equation}{\label{eq:dynOrb}}
\begin{aligned}
    &\Ddot{\textbf{\textit{r}}}\!\left(t\right)=\textbf{A}_1\!\left(\textbf{\textit{r}}_{c},\dot{\textbf{\textit{r}}}_{c},\textbf{\textit{r}}_{d}\right) \textbf{\textit{r}} \!\left(t\right) + \textbf{A}_2\!\left(\textbf{\textit{r}}_{c},\dot{\textbf{\textit{r}}}_{c}\right) \dot{\textbf{\textit{r}}}\!\left(t\right) \\
    &+\textbf{\textit{f}}\left( \textbf{\textit{r}}_{c}, \textbf{\textit{r}}_{d} \right) + \frac{1}{m} \textbf{B} \left(\textbf{\textit{u}}_{orb}\!\left(t\right) + \textbf{\textit{d}}_{orb}\!\left(t,m,\textbf{\textit{r}}_{d},\dot{\textbf{\textit{r}}}_{d}\right)\right)
\end{aligned}
\end{equation}
where the nonlinearity of the Keplerian motion \begin{math}\textbf{\textit{f}}\left( \textbf{\textit{r}}_{c}, \textbf{\textit{r}}_{d} \right)\end{math}, the matrices \begin{math}\textbf{A}_1\!\left(\textbf{\textit{r}}_{c},\dot{\textbf{\textit{r}}}_{c},\textbf{\textit{r}}_{d}\right)\end{math}, \begin{math}\textbf{A}_2\!\left(\textbf{\textit{r}}_{c},\dot{\textbf{\textit{r}}}_{c}\right)\end{math}, and \begin{math}\textbf{B}\end{math} are defined as
\begin{equation}{\label{eq:dynOrb_A1A2Eb}}
    \begin{aligned}
    &\textbf{A}_1\!\left(\textbf{\textit{r}}_{c},\dot{\textbf{\textit{r}}}_{c},\textbf{\textit{r}}_{d}\right) = 
    \begin{bmatrix}
        \dot{\theta}_{c}^{2} - \frac{\mu}{r_d^3} & \Ddot{\theta}_{c} & 0 \\
        -\Ddot{\theta}_{c} & \dot{\theta}_{c}^2 - \frac{\mu}{r_d^3} & 0 \\
        0 & 0 & - \frac{\mu}{r_d^3}
    \end{bmatrix},\\
    &
    \textbf{A}_2\!\left(\textbf{\textit{r}}_{c},\dot{\textbf{\textit{r}}}_{c}\right) = 
    \begin{bmatrix}
        0 & 2\dot{\theta}_{c} & 0 \\
        -2\dot{\theta}_{c} & 0 & 0 \\
        0 & 0 & 0
    \end{bmatrix},\\
    &\textbf{\textit{f}}\left( \textbf{\textit{r}}_{c}, \textbf{\textit{r}}_{d}\right) =     
    \begin{bmatrix}
        \frac{\mu}{r_c^2}-\frac{\mu r_c}{r_d^3} \\ 0 \\ 0
    \end{bmatrix},\quad
    \textbf{B}=
    \begin{bmatrix}
        1 & 0 & 0\\
        0 & 1 & 0\\
        0 & 0 & 1
    \end{bmatrix}
    \end{aligned}
\end{equation}

\noindent and \begin{math}m\end{math} is the mass of the deputy, \begin{math}\theta_{c}\end{math}
is the argument of latitude of the chief, \begin{math}
    \textbf{\textit{u}}_{orb} \!\left(t\right) = 
    \begin{bmatrix}
        u_{orb,x}\!\left(t\right) & u_{orb,y}\!\left(t\right) & u_{orb,z}\!\left(t\right)
    \end{bmatrix}^T
\end{math} is the orbital control input, and \begin{math}
    \textbf{\textit{d}}_{orb}\!\left(t,m,\textbf{\textit{r}}_{d},\dot{\textbf{\textit{r}}}_{d}\right) =
    \begin{bmatrix}
        d_{orb,x}\!\left(t\right) & d_{orb,y}\!\left(t\right) & d_{orb,z}\!\left(t\right)
    \end{bmatrix}^T
\end{math} is the external disturbance force and its three elements have the same arguments but omitted here for brevity. 
The mass \begin{math}m\end{math} is assumed to have uncertainty as described in Section~\ref{s:4ctrlSys}.

Let the tracking error for the relative orbital states be defined by subtracting the reference state
\begin{math}\textbf{\textit{r}}_{r}\!\left(t\right)\end{math} from the current state \begin{math}\textbf{\textit{r}}\!\left(t\right)\end{math}:
\begin{equation}{\label{eq:errorb}}
\begin{aligned}
    &\textbf{\textit{r}}_e\!\left(t\right)=\textbf{\textit{r}}\!\left(t\right)-\textbf{\textit{r}}_{r}\!\left(t\right)\\
    &=
    \begin{bmatrix}
        x\!\left(t\right)-x_{r}\!\left(t\right) & y\!\left(t\right)-y_{r}\!\left(t\right) & z\!\left(t\right)-z_{r}\!\left(t\right)
    \end{bmatrix}^T
\end{aligned}
\end{equation}
where \begin{math}\textbf{\textit{r}}_e\!\left(t\right) =
    \begin{bmatrix}
        x_e\!\left(t\right) & y_e\!\left(t\right) & z_e\!\left(t\right)
    \end{bmatrix}^T
\end{math}.

Hence, the equations of motion of the orbital errors are derived by substituting \eqref{eq:errorb} into \eqref{eq:dynOrb} as
\begin{equation}{\label{eq:dynOrbErr}}
\begin{aligned}
    &\Ddot{\textbf{\textit{r}}}_e \!\left(t\right)=\textbf{A}_1\left(\textbf{\textit{r}}_{c},\dot{\textbf{\textit{r}}}_{c},\textbf{\textit{r}}_{d}\right)\left(\textbf{\textit{r}}_e\!\left(t\right)+\textbf{\textit{r}}_{r}\!\left(t\right)\right)\\
    &+\textbf{A}_2\left(\textbf{\textit{r}}_{c},\dot{\textbf{\textit{r}}}_{c}\right)\left(\dot{\textbf{\textit{r}}}_e\!\left(t\right)+\dot{\textbf{\textit{r}}}_{r}\!\left(t\right)\right)\\
    &+\textbf{\textit{f}} \left( \textbf{\textit{r}}_{c}, \textbf{\textit{r}}_{d}\right)
    +\frac{1}{m}\textbf{B}\left(\textbf{\textit{u}}_{orb}\!\left(t\right)+\textbf{\textit{d}}_{orb}\!\left(t,m,\textbf{\textit{r}}_{d},\dot{\textbf{\textit{r}}}_{d}\right)\right)
    -\Ddot{\textbf{\textit{r}}}_{r}\!\left(t\right)
\end{aligned}
\end{equation}
where \begin{math}\dot{\textbf{\textit{r}}}_e\!\left(t\right)\end{math} and 
\begin{math}\Ddot{\textbf{\textit{r}}}_e\!\left(t\right)\end{math}
are the first and second-time derivatives of the tracking error and 
\begin{math}\Ddot{\textbf{\textit{r}}}_{r}\!\left(t\right)\end{math}
is the second-time derivative of the reference trajectory.

\subsection{Attitude Control System}
\label{s:3-2ACS}

The kinematic differential equations for the quaternions and the angular velocity that represent the attitude control system of a satellite are written as follows \cite{JRWertz,BWie,PCHughes,PMTiwariAST,PMTiwari}
\begin{equation}{\label{eq:kmtAtt}}
\begin{aligned}
    \dot{\textbf{\textit{q}}}\!\left(t\right)
    &=
    \begin{bmatrix}
        \dot{\textbf{\textit{q}}}_{v}\!\left(t\right) \\ \dot{q}_{4}\!\left(t\right)
    \end{bmatrix} \\
    &=\frac{1}{2}
    \begin{bmatrix}
        q_{4}\!\left(t\right)\textbf{I}_{3\times3}+\textbf{\textit{q}}^{\times}\!\left(t\right) & \textbf{\textit{q}}_{v}\!\left(t\right) \\
        -\textbf{\textit{q}}_{v}^{T}\!\left(t\right) & q_{4}\!\left(t\right)
    \end{bmatrix}
    \begin{bmatrix}
        \boldsymbol{\omega}\!\left(t\right) \\ 0
    \end{bmatrix} 
\end{aligned}
\end{equation}
where the quaternion \begin{math}
    \textbf{\textit{q}}\!\left(t\right)=
    \begin{bmatrix}
        \textbf{\textit{q}}_{v}^{T}\!\left(t\right) & q_{4}\!\left(t\right)
    \end{bmatrix}^T
\end{math} consists of the vector \begin{math}\textbf{\textit{q}}_v\!\left(t\right)=
    \begin{bmatrix}
        q_{1}\!\left(t\right) & q_{2}\!\left(t\right) & q_{3}\!\left(t\right)
    \end{bmatrix}^{T}\end{math} and the scalar $q_{4}\!\left(t\right)$ components, \begin{math}\boldsymbol{\omega}\!\left(t\right)=
    \begin{bmatrix}
        \omega_{1}\!\left(t\right) & \omega_{2}\!\left(t\right) & \omega_{3}\!\left(t\right)
    \end{bmatrix}^{T}\end{math} is the angular velocity vector, \begin{math}\textbf{I}_{3\times3}\end{math} 
is the 3 by 3 identity matrix, and the skew-symmetric matrix \begin{math}\textbf{\textit{q}}^{\times}\!\left(t\right)\end{math}
is defined as 
\begin{equation}{\label{eq:qcrss}}
    \textbf{\textit{q}}^{\times}\!\left(t\right)=
    \begin{bmatrix}
        0 & -q_{3}\!\left(t\right) & q_{2}\!\left(t\right) \\
        q_{3}\!\left(t\right) & 0 & -q_{1}\!\left(t\right) \\
        -q_{2}\!\left(t\right) & q_{1}\!\left(t\right) & 0
    \end{bmatrix}
\end{equation}

The attitude dynamic equations of motion with the control torque generated by the reaction wheels under external disturbances can be expressed as
\begin{equation}{\label{eq:dynAtt}}
    \textbf{J}\dot{\boldsymbol{\omega}}\!\left(t\right)=-
    \boldsymbol{\omega}\!\left(t\right)\times\left(\textbf{J}\boldsymbol{\omega}\!\left(t\right)+\textbf{J}_{w}\boldsymbol{\omega}_{w}\!\left(t\right)\right)
    +\textbf{\textit{u}}_{att}\!\left(t\right)+\textbf{\textit{d}}_{att}\!\left(t\right)
\end{equation}
where $\textbf{J}$ is the deputy's moment of inertia matrix that contains uncertainty, $\textbf{J}_{w}$ is the moment of inertia matrix of the reaction wheels, $\boldsymbol{\omega}_{w}$ is the angular velocity of the reaction wheels, 
\begin{math}\textbf{\textit{u}}_{att}\!\left(t\right)\end{math} is the control torque, and
\begin{math}\textbf{\textit{d}}_{att}\!\left(t\right)\end{math} is the external disturbance torque.

The rotation formula from the current body axes to the reference body axes constructs the attitude tracking error quaternion \cite{JRWertz,BWie,PCHughes,PMTiwariAST,PMTiwari}
\begin{equation}{\label{eq:qerrAtt}}
\begin{aligned}
    \textbf{\textit{q}}_{e}\!\left(t\right)&=
    \begin{bmatrix}
        \textbf{\textit{q}}_{ev}\!\left(t\right) \\ q_{e4}\!\left(t\right)
    \end{bmatrix} \\
    &=
    \begin{bmatrix}
        q_{r4}\!\left(t\right)\textbf{\textit{q}}_{v}\!\left(t\right)-q_{4}\!\left(t\right)\textbf{\textit{q}}_{rv}\!\left(t\right)-\textbf{\textit{q}}_{rv}^{\times}\!\left(t\right)\textbf{\textit{q}}_{v}\!\left(t\right) \\
        q_{4}\!\left(t\right)q_{r4}\!\left(t\right)+\textbf{\textit{q}}_{rv}^{T}\!\left(t\right)\textbf{\textit{q}}_{v}\!\left(t\right)
    \end{bmatrix}
\end{aligned}
\end{equation}
where \begin{math}
    \textbf{\textit{q}}_{r}\!\left(t\right)=
    \begin{bmatrix}
        \textbf{\textit{q}}_{rv}^{T}\!\left(t\right) & q_{r4}\!\left(t\right)
    \end{bmatrix}^T
\end{math}
is the reference attitude trajectory and \begin{math}
    \textbf{\textit{q}}\!\left(t\right)=
    \begin{bmatrix}
        \textbf{\textit{q}}_{v}^{T}\!\left(t\right) & q_{4}\!\left(t\right)
    \end{bmatrix}^{T}
\end{math} represents the current attitude.

\textcolor{black}{
The rigid-body-attitude space, so-called $SO(3)$ is a boundaryless compact manifold and not a vector space. When a unit quaternion parametrizes $SO(3)$, its vector space $\mathbb{S}^{3}$ provides a double covering \cite{Unwind1},\cite{Unwind2}. Since $\textbf{\textit{q}}$ and $-\textbf{\textit{q}}$ represent the same orientation, there are two possible trajectories to connect two different quaternions. The termiology ‘unwinding phenomenon’ was first utilized in \cite{Unwind1} to describe the phenomenon that the spacecraft rotate along the longer path. This behavior produces stability of a single point in $\mathbb{S}^{3}$ and instability of the antipodal point at the same time, though two points physically represent the same orientation. Many previous studies address this phenomenon with hybrid-dynamic algorithm \cite{Unwind2} and modified sliding surface based on dual quaternions \cite{Unwind3}. This investigation prevents the unwinding phenomenon by choosing the sign of the scalar quaternion positive \cite{Unwind3}. }

The angular velocity error is derived as \cite{JRWertz,BWie,PCHughes,PMTiwariAST,PMTiwari}
\begin{equation}{\label{eq:omerrAtt}}
    \boldsymbol{\omega}_{e}\!\left(t\right)=\boldsymbol{\omega}\!\left(t\right)-\textbf{C}_{r}^{b}\!\left(t\right)\boldsymbol{\omega}_{r}\!\left(t\right)
\end{equation}
where \begin{math}
    \boldsymbol{\omega}\!\left(t\right)=
    \begin{bmatrix}
        \omega_{1}\!\left(t\right) & \omega_{2}\!\left(t\right) & \omega_{3}\!\left(t\right)
    \end{bmatrix}^T
\end{math} is the current angular velocity and \begin{math}
    \boldsymbol{\omega}_{r}\!\left(t\right)=
    \begin{bmatrix}
        \omega_{r1}\!\left(t\right) & \omega_{r2}\!\left(t\right) & \omega_{r3}\!\left(t\right)
    \end{bmatrix}^T
\end{math} is the reference angular velocity.
The matrix \begin{math}\textbf{C}_{r}^{b}\!\left(t\right)\end{math}
represents a rotation matrix from the reference to the current frames and can be expressed for the error quaternion as
\begin{equation}{\label{eq:q2rot}}
    \begin{aligned}
    \textbf{C}_{r}^{b}\!\left(t\right)&=\left(q_{e4}^{2}\!\left(t\right)-\textbf{\textit{q}}_{ev}^{T}\!\left(t\right)\textbf{\textit{q}}_{ev}\!\left(t\right)\right)\textbf{I}_{3\times3}
    +2\textbf{\textit{q}}_{ev}\!\left(t\right)\textbf{\textit{q}}_{ev}^{T}\!\left(t\right)\\
    &-2q_{e4}\!\left(t\right)\textbf{\textit{q}}_{ev}^{\times}\!\left(t\right)
    \end{aligned}
\end{equation}
where \begin{math}\left\lvert \textbf{C}_{r}^{b}\!\left(t\right) \right\rvert=1\end{math} and the derivative of the rotation matrix can derive the angular rate of the body fixed frame as
\begin{equation}{\label{eq:drotdt}}
    \dot{\textbf{C}}_{r}^{b}\!\left(t\right)=-\boldsymbol{\omega}_{e}^{\times}\!\left(t\right)\textbf{C}_{r}^{b}\!\left(t\right)
\end{equation}

Hence, the error kinematic and dynamic equations are written as \cite{JRWertz,BWie,PCHughes,PMTiwariAST,PMTiwari}
\begin{equation}{\label{eq:kmtAttErr}}
\begin{aligned}
    \dot{\textbf{\textit{q}}}_{e}\!\left(t\right)
    &=
    \begin{bmatrix}
        \dot{\textbf{\textit{q}}}_{ev}\!\left(t\right) \\ \dot{q}_{e4}\!\left(t\right)
    \end{bmatrix}\\
    &=\frac{1}{2}
    \begin{bmatrix}
    q_{e4}\!\left(t\right)\textbf{I}_{3\times3}+\textbf{\textit{q}}_{ev}^{\times}\!\left(t\right) & \textbf{\textit{q}}_{ev}\!\left(t\right)\\
        -\textbf{\textit{q}}_{ev}^{T}\!\left(t\right)    & q_{e4}\!\left(t\right)
    \end{bmatrix}
    \begin{bmatrix}
        \boldsymbol{\omega}_{e}\!\left(t\right) \\ 0
    \end{bmatrix} 
\end{aligned}
\end{equation}

\begin{equation}{\label{eq:dynAttErr}}
\begin{aligned}
    \textbf{J}\dot{\boldsymbol{\omega}}_{e} \!\left(t\right)
    &= -\left(\boldsymbol{\omega}_{e}\!\left(t\right) + \textbf{C}^{b}_{r}\!\left(t\right)\boldsymbol{\omega}_{r}\!\left(t\right)\right) \\
    &\times\left(\textbf{J}\left(\boldsymbol{\omega}_{e}\!\left(t\right)+\textbf{C}^{b}_{r}\!\left(t\right)\boldsymbol{\omega}_{r}\!\left(t\right)\right)+\textbf{J}_{w}\boldsymbol{\omega}_{w}\!\left(t\right) \right)\\
    &+ \textbf{\textit{u}}_{att}\!\left(t\right) + \textbf{\textit{d}}_{att}\!\left(t\right) \\
    &+\textbf{J}\left(\boldsymbol{\omega}_{e}^{\times}\!\left(t\right)\textbf{C}^{b}_{r}\!\left(t\right)\boldsymbol{\omega}_{r}\!\left(t\right)
    -\textbf{C}^{b}_{r}\!\left(t\right)\dot{\boldsymbol{\omega}}_{r}\!\left(t\right)\right)
\end{aligned}
\end{equation}

\subsection{Problem Statement}
\label{s:3-3Problem}

To summarize Sections \ref{s:2mission} and \ref{s:3FFS}, the relative orbit control problem of the CANYVAL-C mission is simplified into \textcolor{black}{a tracking problem}. The objective of this paper is to design a sliding surface and a control law for the formation flying system so that the errors of the relative orbit and attitude are ultimately bounded by small regions in finite time, that is:
\begin{equation}{\label{eq:ctrlObj_r}}
    \lim_{t \to\ t_{f,orb}} \left(\textbf{\textit{r}}\!\left(t\right)-\textbf{\textit{r}}_{r}\!\left(t\right)\right)=\textbf{\textit{R}}_{e}
\end{equation}
\begin{equation}{\label{eq:ctrlObj_qv}}
\begin{aligned}
    \lim_{t \to\ t_{f,att}} &\left(q_{r4}\!\left(t\right)\textbf{\textit{q}}_{v}\!\left(t\right)-q_{4}\!\left(t\right)\textbf{\textit{q}}_{rv}\!\left(t\right) \right.\\
    &\left.-\textbf{\textit{q}}_{rv}^{\times}\!\left(t\right)\textbf{\textit{q}}_{v}\!\left(t\right)\right)=\textbf{\textit{Q}}_{e}
\end{aligned}
\end{equation}
where the $t_{f,orb}$ and $t_{f,att}$ are the finite time instants of ROCS and ACS at which the errors converge to the desired regions. The error bounds $\textbf{\textit{R}}_{e}$ and $\textbf{\textit{Q}}_{e}$ are analytically derived in Lemma \ref{lemBoundOnRe} and \ref{lemBoundOnQe}, respectively. The errors are defined in \textcolor{black}{the next subsection.}

\textcolor{black}{
\subsection{Mission Objectives}
\label{s:3-4MissionObjectives}
}

The operational orbit is a sun-synchronous orbit (SSO), and the local time of the ascending node (LTAN) is 11:00. Since the sun-synchronous orbit is circular, the eccentricity of the chief's orbit is assumed to be zero. \textcolor{black}{The description about the inertial alignment hold in detail can be referred to \cite{SJeon}.} \textcolor{black}{The deputy commences the alignment toward the sun in the LVLH frame.} \textcolor{black}{Since the focal length of the coronagraph is 40 meters, the reference trajectory is derived as:}
\begin{equation}{\label{eq:refAlign}}
    \textbf{\textit{r}}_{r} (t)=
    40\cdot\frac{\textbf{\textit{r}}_{\odot} (t)}{\left\lVert{\textbf{\textit{r}}_{\odot} (t)}\right\rVert_{2}}\quad \left(\textnormal{m}\right)
\end{equation}
where \begin{math}\textbf{\textit{r}}_{\odot} (t)\end{math}
is the sun vector that orients from 1U to the sun and is described in the LVLH frame.

\figurename{\ref{fig:refTrj}} describes the reference orbital trajectories in the LVLH frame. \textcolor{black}{The epoch time and the LTAN determine the reference trajectory for the alignment because it depends on the position of the sun by \eqref{eq:refAlign}.} \figurename{\ref{fig:refTrj}} shows the reference trajectory of LTAN 11:00 at the vernal equinox when the sun is exactly at the equator in which the satellites were planned to operate for the CANYVAL-C mission.

\begin{figure}[t] 
\begin{center}
\includegraphics[width=1.0\linewidth]{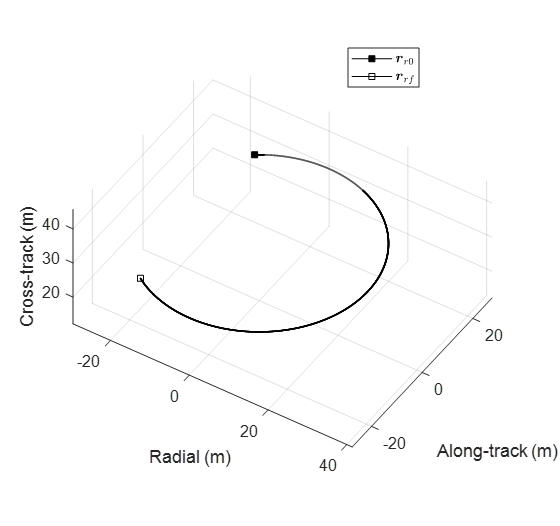}
\end{center}
\caption{Reference \textcolor{black}{trajectory} described in LVLH frame (\textcolor{black}{F}illed and empty squares: the initial and final reference states, solid line: the reference trajectory)}
\label{fig:refTrj}
\end{figure}

The reference attitude is derived from the TRIAD algorithm that defines a coordinate frame with two vectors described in the reference and body frames \cite{TRIAD1,JRWertz}:
\begin{equation}{\label{eq:rotTriad}}
    \textbf{C}^{b}\cdot\left(\textbf{C}^{i}\right)^{T}=\textbf{C}_{i}^{b}
\end{equation}
where the superscripts \begin{math}b\end{math} and \begin{math}i\end{math}
of the direction cosine matrix \begin{math}\textbf{C}\end{math}
refer to the body and reference frames, respectively. Also, the superscript \begin{math}T\end{math} denotes the transpose of the corresponding matrix. Given the primary and secondary vectors of TRIAD method 
\begin{math}\textbf{\textit{r}}_{p}\end{math} and 
\begin{math}\textbf{\textit{r}}_{s}\end{math}, the direction cosine matrix is derived as
\begin{equation}{\label{eq:rotVectors}}
    \textbf{C}=
    \begin{bmatrix}
        \hat{\textbf{\textit{r}}}_{1} & \hat{\textbf{\textit{r}}}_{2} & \hat{\textbf{\textit{r}}}_{3} 
    \end{bmatrix}
\end{equation}
where \begin{math}
    \hat{\textbf{\textit{r}}}_{1}={\textbf{\textit{r}}_{p}}/\left\lVert{\textbf{\textit{r}}_{p}}\right\rVert
\end{math}, 
\begin{math}
    \hat{\textbf{\textit{r}}}_{2}={\textbf{\textit{r}}}_{p}\times{\textbf{\textit{r}}}_{s}/\left\lVert {\textbf{\textit{r}}}_{p}\times{\textbf{\textit{r}}}_{s} \right\rVert
\end{math}, and
\begin{math}
    \hat{\textbf{\textit{r}}}_{3}= \textbf{\textit{r}}_{p}\times \left({\textbf{\textit{r}}}_{p}\times{\textbf{\textit{r}}}_{s}\right)
    /\left\lVert \textbf{\textit{r}}_{p}\times \left({\textbf{\textit{r}}}_{p}\times{\textbf{\textit{r}}}_{s}\right) \right\rVert
\end{math}.

The quaternion that is also known as the Euler parameters is defined as
\begin{equation}{\label{eq:qEulerParams}}
    \textbf{\textit{q}}_{v}=\textbf{\textit{a}}\sin\left(\theta/2\right),\quad
    q_{4}=\cos\left(\theta/2\right)
\end{equation}
where \begin{math}\textbf{\textit{a}}\end{math} and \begin{math}\theta
\end{math} are the Euler axis and the rotation angle about the Euler axis\cite{BWie,PCHughes}.

If \begin{math}q_{4}\end{math} is not zero, the quaternion is able to be derived from the direction cosine matrix as follows \cite{BWie,PCHughes}:
\begin{equation}{\label{eq:rot2qv}}
    \textbf{\textit{q}}_{v}=\frac{1}{4q_{4}}
    \begin{bmatrix}
        C_{23}-C_{32} & C_{31}-C_{13} & C_{12}-C_{21}
    \end{bmatrix}^{T}
\end{equation}
\begin{equation}{\label{eq:rot2q4}}
    \textit{q}_{4}= \textcolor{black}{\pm} \frac{1}{2}\left(1+C_{11}+C_{22}+C_{33}\right)^{\frac{1}{2}}
\end{equation}
where \begin{math}\textbf{C}=\left[C_{ij}\right]\end{math}.

The primary vector of the attitude control in the alignment mode is to orient the occulter toward the sun. Since the propulsion system has nozzles only along the $x$ and $y$ directions in the body frame, if it needs $z$ directional control, it rotates by 10 degrees with respect to the $y$ direction as an ad-hoc strategy. Therefore, the vectors that define the reference attitude frame during the alignment mode are:
\begin{equation}{\label{eq:rprsAlignInertial}}
    \textbf{\textit{r}}_{p}^{i}=\textbf{\textit{r}}_{\odot}^{i},\quad
    \textbf{\textit{r}}_{s}^{i}=\textbf{C}_{l}^{i}\cdot
    \begin{bmatrix}
    0 & 0 & 1
    \end{bmatrix}^{T}
\end{equation}
\begin{equation}{\label{eq:rpAlignBody}}
    \textbf{\textit{r}}_{p}^{b}=
    \begin{bmatrix}
        0 & \pm\sin{\theta} & -1-\cos{\theta}
    \end{bmatrix}^{T}
\end{equation}
\begin{equation}{\label{eq:rsAlignBody}}
    \textbf{\textit{r}}_{s}^{b}=
    \begin{bmatrix}
        1 & 0 & 0
    \end{bmatrix}^{T}
\end{equation}
where the subscripts $p$ and $s$ represent the primary and secondary vectors, respectively, \begin{math}\textbf{C}_{l}^{i}\end{math} is the rotation matrix from the LVLH frame to the inertial frame, 
\begin{math}\theta\end{math} is the rotation angle that is set as 10 degrees, and the positive sign in 
\begin{math}\textbf{\textit{r}}_{p}^{b}\end{math}
indicates that the $y$ and $z$ components of the thrust vector have the same sign. 
The secondary vectors in the inertial and body frames are derived from the 
angle \begin{math}\beta\end{math} of the sun-synchronous orbit that is the angle between the orbital plane and the sun vector. 

The deputy should be located within a sphere of a 3-meter radius with respect to the reference state to image the solar corona and determine the mission's success \cite{SJeon,CANYVAL-C}. The occulter should point to the sun with less than a 3-degree error to shade the sunlight in the image \cite{HKang,CANYVAL-C}. As a result, the requirements of ROCS and ACS are summarized as follows:
\begin{equation}\label{eq:reqROCS}
    \left\lVert \textbf{\textit{r}} - \textbf{\textit{r}}_{r} \right\lVert_{2} \leq 3 \quad\textnormal{(m)}
\end{equation}
\begin{equation}\label{eq:reqACS}
    \theta_{e} \leq 3 \quad\textnormal{(deg)}
\end{equation}
where $\theta_{e}$ is the Euler angle of the error quaternion in \eqref{eq:qEulerParams}.

\section{Control System Design}
\label{s:4ctrlSys}

\textcolor{black}{In this section, the assumptions about the observability and disturbances are explained, and then adaptive control law is introduced with Lemma~\ref{lemCtrlLaw} and \ref{lemAdtCtrlLaw}. In Section~\ref{s:4ctrlSys}.\ref{s:4-1ROCS}, \eqref{eq:sOrb} defines the sliding variable for ROCS, and Proposition \ref{proROCSErr} proves its stability. Theorem \ref{thmROCSCtrl} proves that the adaptive sliding mode control of \eqref{eq:ctrlLaw}, \eqref{eq:adtLaw}, and \eqref{eq:sOrb} is stable for ROCS. Lemma~\ref{lemBoundOnRe} derives the analytic orbital error bound. Section~\ref{s:4ctrlSys}.\ref{s:4-2ACS} follows the same structure as the previous section as the sliding variable for ACS is defined in \eqref{eq:sAtt}, Proposition \ref{proACSErr} proves its stability, and then Theorem \ref{thmACSCtrl} shows that the adaptive sliding mode control of \eqref{eq:ctrlLaw}, \eqref{eq:adtLaw}, and \eqref{eq:sAtt} guarantees the stability of ACS. Lemma~\ref{lemBoundOnQe} derives the attitude error bound in the same manner as Lemma~\ref{lemBoundOnRe}.}

\emph{\textbf{Assumption 1.}}
The state vectors of the satellite (\begin{math}
    \textbf{\textit{r}}, \dot{\textbf{\textit{r}}}, \textbf{\textit{q}}_{v}, \boldsymbol{\omega}
\end{math}) are fully measurable and available throughout the spaceflight mission.

\emph{\textbf{Assumption 2.}}
The uncertain structural properties in the orbit and attitude control systems are the mass and moments of inertia that have unknown lower and upper bounds. In the current paper, the mass and moments of inertia are supposed to be uncertain but bounded as
\begin{equation}{\label{eq:uncertainMass}}
    m\!\left(t\right) = m_{0} + \delta m\!\left(t\right)
\end{equation}
\begin{equation}{\label{eq:uncertainMoI}}
    \textbf{J}\!\left(t\right) = \textbf{J}_{0} + \delta\textbf{J}\!\left(t\right)
\end{equation}
where the subscripts `0' and `\begin{math}\delta\end{math}' denote the nominal (known) and uncertain values of the variable, respectively.


\emph{\textbf{Assumption 3.}}
The external disturbances are perturbations that include the asymmetric gravity of the Earth, atmospheric drag, and solar radiation pressure. They induce disturbance forces and torques that are bounded by unknown values:
\begin{equation}{\label{eq:distOrb}}
    0 < d_{f} < \left\lVert \textbf{\textit{d}}_{orb}\!\left(t,m,\textbf{\textit{r}}_{d},\dot{\textbf{\textit{r}}}_{d}\right) \right\rVert_{2} < D_{f}
\end{equation}
\begin{equation}{\label{eq:distAtt}}
    0 < d_{\tau} < \left\lVert \textbf{\textit{d}}_{att}\!\left(t\right) \right\rVert_{2} < D_{\tau}
\end{equation}
where (\begin{math}d_{f}\end{math}, \begin{math}D_{f}\end{math}) and (\begin{math}d_{\tau}\end{math}, \begin{math}D_{\tau}\end{math}) are the lower and upper bounds of the orbital and attitude disturbances, respectively, and are all unknown. 

\emph{\textbf{Adaptive control law}}

The adaptive smooth control law is based on sliding modes and drives the sliding variable into a user-specified domain without any chattering phenomenon. The robust control law designed for a SISO system in \cite{HCho2020} is elaborately extended to each of the orbit and attitude MIMO control systems:
\begin{equation}{\label{eq:ctrlLaw}}
    \textbf{\textit{u}}(t)=-\frac{K(t)}{\varepsilon}\textbf{\textit{s}}(t)
\end{equation}
where \begin{math}K(t)\end{math} is the smooth time-varying gain that obeys the adaptive law defined in \eqref{eq:adtLaw}, \begin{math}    \textbf{\textit{s}}(t)\end{math} is the sliding variable, and \begin{math}\varepsilon\end{math} is a user-specified positive constant. 

The gain is updated by the adaptation law as follows \cite{HCho2020}:
\begin{equation}\label{eq:adtLaw}
    \dot{K}(t)=\eta\left(\left\lVert \textbf{\textit{u}}(t) \right\rVert_{2}-K(t) + K_{0} \right),  K(0)\geq K_{0}
\end{equation}
where \begin{math}\textbf{\textit{u}}(t)\end{math} is the adaptive smooth control law in \eqref{eq:ctrlLaw}, 
\begin{math}\eta\end{math} is a user-specified positive constant, and \begin{math}K_{0}\end{math}
is the constant lower bound of the control gain \begin{math}K(t)\end{math} (see Lemma~\ref{lemAdtCtrlLaw}) that is positive and user-specified, and 
\begin{math}\left\lVert \cdot \right\rVert_{2}\end{math} is a vector 2-norm.

\begin{lemma}\label{lemCtrlLaw}
    Assume that the time derivative of the sliding variable can be expressed as
    \begin{equation}\label{eq:tmp27}
        \dot{\textbf{\textit{s}}}(t) = \Tilde{\lambda}_{max}^{-1}\Tilde{\textbf{\textit{d}}}(t) + \Tilde{\textbf{\textup{b}}}_{0}^{-1}\textbf{\textit{u}}(t)
    \end{equation}
    where $\Tilde{\textbf{\textup{b}}}_{0}$ is the nominal control matrix which is real-symmetric and positive definite, $\Tilde{\lambda}_{max}$ is the maximum eigenvalue of $\Tilde{\textbf{\textup{b}}}_{0}$, and $\Tilde{\textbf{\textit{d}}}(t)$ is the disturbance that has unknown bounds such that $\Tilde{d}<\left\lVert \Tilde{\textbf{\textit{d}}}(t) \right\lVert_{2}<\Tilde{D}$.
    For example, $\Tilde{\textbf{\textup{b}}}_{0}^{-1}$ is derived to be $\textbf{\textup{J}}_{0}^{-1}$ for ACS in this Section. $\Tilde{\textbf{\textit{d}}}(t)$ may include other terms than the dynamic forces and torques of Assumption 3 if it is bounded. Assume that the control gain $K(t)$ satisfies $K(t)>\Tilde{D}$ for $\forall t$. Then, as soon as the condition $\left\lVert\textbf{\textit{s}}\right\lVert_{2} > \varepsilon$ is met, the adaptive smooth control law proposed in \eqref{eq:ctrlLaw} makes the sliding variable converge into the region $\left\lVert \textbf{\textit{s}}\right\lVert_{2} \leq \varepsilon$ in finite time and retains it thereafter. 
    
\end{lemma}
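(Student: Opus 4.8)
The plan is to establish finite-time reaching through a Lyapunov argument using the candidate $V(t)=\tfrac{1}{2}\textbf{\textit{s}}^{T}(t)\textbf{\textit{s}}(t)=\tfrac{1}{2}\left\lVert\textbf{\textit{s}}(t)\right\rVert_{2}^{2}$. First I would differentiate $V$ along the trajectory, substitute the closed-loop dynamics \eqref{eq:tmp27} together with the control law \eqref{eq:ctrlLaw}, and obtain
\begin{equation}
    \dot{V}(t)=\Tilde{\lambda}_{max}^{-1}\textbf{\textit{s}}^{T}(t)\Tilde{\textbf{\textit{d}}}(t)-\frac{K(t)}{\varepsilon}\textbf{\textit{s}}^{T}(t)\Tilde{\textbf{\textup{b}}}_{0}^{-1}\textbf{\textit{s}}(t).
\end{equation}
The two terms are then bounded separately: the disturbance term is controlled by Cauchy--Schwarz and the hypothesis $\left\lVert\Tilde{\textbf{\textit{d}}}(t)\right\rVert_{2}<\Tilde{D}$, giving $\Tilde{\lambda}_{max}^{-1}\textbf{\textit{s}}^{T}\Tilde{\textbf{\textit{d}}}<\Tilde{\lambda}_{max}^{-1}\left\lVert\textbf{\textit{s}}\right\rVert_{2}\Tilde{D}$, while the control term is handled by the symmetry and positive definiteness of $\Tilde{\textbf{\textup{b}}}_{0}$. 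Since the smallest eigenvalue of $\Tilde{\textbf{\textup{b}}}_{0}^{-1}$ is $1/\Tilde{\lambda}_{max}$, the Rayleigh quotient yields $\textbf{\textit{s}}^{T}\Tilde{\textbf{\textup{b}}}_{0}^{-1}\textbf{\textit{s}}\geq\Tilde{\lambda}_{max}^{-1}\left\lVert\textbf{\textit{s}}\right\rVert_{2}^{2}$.

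Combining these estimates produces
\begin{equation}
    \dot{V}(t)<\frac{\left\lVert\textbf{\textit{s}}(t)\right\rVert_{2}}{\Tilde{\lambda}_{max}}\left(\Tilde{D}-\frac{K(t)}{\varepsilon}\left\lVert\textbf{\textit{s}}(t)\right\rVert_{2}\right).
\end{equation}
On the region $\left\lVert\textbf{\textit{s}}\right\rVert_{2}>\varepsilon$ one has $\tfrac{K}{\varepsilon}\left\lVert\textbf{\textit{s}}\right\rVert_{2}>K>\Tilde{D}$, so the bracket is strictly negative and $\dot{V}<\tfrac{\left\lVert\textbf{\textit{s}}\right\rVert_{2}}{\Tilde{\lambda}_{max}}(\Tilde{D}-K)<0$. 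To upgrade this to finite time, I would show the adaptive gain is nondecreasing on this region: inserting $\left\lVert\textbf{\textit{u}}\right\rVert_{2}=\tfrac{K}{\varepsilon}\left\lVert\textbf{\textit{s}}\right\rVert_{2}$ into \eqref{eq:adtLaw} gives $\dot{K}=\eta\big(K(\tfrac{\left\lVert\textbf{\textit{s}}\right\rVert_{2}}{\varepsilon}-1)+K_{0}\big)>0$ whenever $\left\lVert\textbf{\textit{s}}\right\rVert_{2}>\varepsilon$. Hence, writing $t_{0}$ for the instant $\left\lVert\textbf{\textit{s}}\right\rVert_{2}>\varepsilon$ is first met, $K(t)\geq K(t_{0})>\Tilde{D}$ throughout the reaching phase, so $\alpha_{0}:=K(t_{0})-\Tilde{D}>0$ is a fixed positive lower bound. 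Substituting $\left\lVert\textbf{\textit{s}}\right\rVert_{2}=\sqrt{2V}$ then yields the differential inequality $\dot{V}<-\tfrac{\sqrt{2}\,\alpha_{0}}{\Tilde{\lambda}_{max}}\sqrt{V}$, whose separation and integration give $\sqrt{V(t)}<\sqrt{V(t_{0})}-\tfrac{\alpha_{0}}{\sqrt{2}\,\Tilde{\lambda}_{max}}(t-t_{0})$, so the level $V=\varepsilon^{2}/2$, i.e. $\left\lVert\textbf{\textit{s}}\right\rVert_{2}=\varepsilon$, is reached no later than $t_{0}+\Tilde{\lambda}_{max}(\left\lVert\textbf{\textit{s}}(t_{0})\right\rVert_{2}-\varepsilon)/\alpha_{0}$. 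Finally I would confirm positive invariance of $\left\lVert\textbf{\textit{s}}\right\rVert_{2}\leq\varepsilon$: on the boundary $\left\lVert\textbf{\textit{s}}\right\rVert_{2}=\varepsilon$ the same estimate reads $\dot{V}<\tfrac{\varepsilon}{\Tilde{\lambda}_{max}}(\Tilde{D}-K)<0$, so trajectories cannot exit through the sphere and the variable is retained thereafter.

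The main obstacle I anticipate is securing the strictly positive, time-independent lower bound $\alpha_{0}$ on $K(t)-\Tilde{D}$ rather than merely a pointwise-negative $\dot{V}$; a negative $\dot{V}$ alone only gives asymptotic approach, not finite-time arrival. The monotonicity of $K$ furnished by \eqref{eq:adtLaw} on the reaching set is precisely what closes this gap. Care is needed because $K$ is itself a state governed by the coupled adaptation law, so the argument must treat the pair $(\textbf{\textit{s}},K)$ jointly and invoke the standing hypothesis $K(t)>\Tilde{D}$ to anchor $\alpha_{0}$ at the entry time $t_{0}$.
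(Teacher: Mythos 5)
Your proof is correct, and while it shares the paper's overall Lyapunov strategy (quadratic function of $\textbf{\textit{s}}$, Cauchy--Schwarz on the disturbance, reaching condition on $\left\lVert\textbf{\textit{s}}\right\rVert_{2}>\varepsilon$), it differs in two ways worth recording. First, the paper uses the weighted candidate $V=\frac{1}{2}\textbf{\textit{s}}^{T}\Tilde{\textbf{\textup{b}}}_{0}\textbf{\textit{s}}$, so the control term collapses exactly via $\textbf{\textit{s}}^{T}\Tilde{\textbf{\textup{b}}}_{0}\Tilde{\textbf{\textup{b}}}_{0}^{-1}\textbf{\textit{u}}=-\frac{K}{\varepsilon}\left\lVert\textbf{\textit{s}}\right\rVert_{2}^{2}$ and the eigenvalue argument is spent on the disturbance term; you keep the unweighted $V=\frac{1}{2}\textbf{\textit{s}}^{T}\textbf{\textit{s}}$, handle the disturbance directly by Cauchy--Schwarz, and spend the eigenvalue argument on the Rayleigh bound $\textbf{\textit{s}}^{T}\Tilde{\textbf{\textup{b}}}_{0}^{-1}\textbf{\textit{s}}\geq\Tilde{\lambda}_{max}^{-1}\left\lVert\textbf{\textit{s}}\right\rVert_{2}^{2}$. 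These are equivalent in substance, though your route happens to avoid the paper's loosely written intermediate step $\textbf{\textit{s}}^{T}\Tilde{\textbf{\textup{b}}}_{0}\Tilde{\textbf{\textit{d}}}\leq\Tilde{\lambda}_{max}\textbf{\textit{s}}^{T}\Tilde{\textbf{\textit{d}}}$, which as stated holds only when $\textbf{\textit{s}}^{T}\Tilde{\textbf{\textit{d}}}\geq0$ (the paper's end-to-end bound is nonetheless valid). Second, and more valuably, your finite-time step is tighter than the paper's. The paper stops at $\dot{V}<-\kappa(t)V^{1/2}$ with $\kappa(t)=\sqrt{2\Tilde{\lambda}_{max}^{-1}}\left(K(t)-\Tilde{D}\right)$ merely positive and time-varying, and then appeals to the finite-time convergence result (Lemma~\ref{lemVdotV}), which is stated for a constant coefficient; positivity alone does not exclude $\kappa(t)$ decaying toward zero, which would yield only asymptotic convergence. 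You close exactly this gap by observing from \eqref{eq:adtLaw} that $\dot{K}=\eta\left(K\left(\left\lVert\textbf{\textit{s}}\right\rVert_{2}/\varepsilon-1\right)+K_{0}\right)>0$ whenever $\left\lVert\textbf{\textit{s}}\right\rVert_{2}>\varepsilon$, so that $K(t)-\Tilde{D}\geq K(t_{0})-\Tilde{D}=\alpha_{0}>0$ uniformly on the reaching interval, which converts the estimate into a genuine constant-coefficient differential inequality with an explicit reaching-time bound. Your closing check that trajectories cannot exit through the sphere $\left\lVert\textbf{\textit{s}}\right\rVert_{2}=\varepsilon$ likewise makes explicit the retention claim, which the paper asserts but does not argue.
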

\begin{proof}
    The Lyapunov candidate $V(t) = \frac{1}{2} \textbf{\textit{s}}^{T} \Tilde{\textbf{\textup{b}}}_{0}\textbf{\textit{s}}$ obtains its time derivative as
    \begin{equation}\label{eq:tmp28}
    \begin{aligned}
        \dot{V} &= \textbf{\textit{s}}^{T}\Tilde{\textbf{\textup{b}}}_{0}\dot{\textbf{\textit{s}}} \\
        &= \textbf{\textit{s}}^{T} \left( \Tilde{\lambda}_{max}^{-1}\Tilde{\textbf{\textup{b}}}_{0}\Tilde{\textbf{\textit{d}}} + \textbf{\textit{u}} \right) \\
        &= \textbf{\textit{s}}^{T} \left( \Tilde{\lambda}_{max}^{-1}\Tilde{\textbf{\textup{b}}}_{0}\Tilde{\textbf{\textit{d}}} - \frac{K(t)}{\varepsilon} \cdot\textbf{\textit{s}} \right)
    \end{aligned}
    \end{equation}
    where \eqref{eq:tmp27} and \eqref{eq:ctrlLaw} are substituted in order. 
    
    The spectral theorem decomposes any real symmetric matrix $\textbf{\textup{A}} \in \mathbb{R}^{n \times n}$ into
    \begin{equation}
    \begin{aligned}
        \textbf{A}&=\textbf{P}\textbf{D}\textbf{P}^{T} \\
        &= \sum_{i=1}^{n} \lambda_{i} \textbf{\textit{p}}_{i}\textbf{\textit{p}}_{i}^{T}
    \end{aligned}
    \end{equation}
    where $\lambda_{i}$ and $\textbf{\textit{p}}_{i}$ are the eigenvalue and eigenvector of $\textbf{\textup{A}}$ for $i=1,...,n$, $\textbf{\textup{P}}=
    \begin{bmatrix}
        \textbf{\textit{p}}_{1} & \textbf{\textit{p}}_{2} & ... & \textbf{\textit{p}}_{n}
    \end{bmatrix}$ is a matrix composed of eigenvectors, and $\textbf{\textup{D}} = diag\{\lambda_{1}, \lambda_{2}, ..., \lambda_{n} \}$ has the eigenvalues of $\textbf{\textup{A}}$ as its diagonal elements.
    In the same manner, $\textbf{\textit{s}}^{T}\Tilde{\textbf{\textup{b}}}_{0}\Tilde{\textbf{\textit{d}}}$ satisfies the inequality condition as
    \begin{equation}
        \textbf{\textit{s}}^{T}\Tilde{\textbf{\textup{b}}}_{0}\Tilde{\textbf{\textit{d}}} \leq \Tilde{\lambda}_{max} \textbf{\textit{s}}^{T} \Tilde{\textbf{\textit{d}}}   
    \end{equation}
    Therefore, \eqref{eq:tmp28} is derived as
    \begin{equation}
    \begin{aligned}
        \dot{V} &\leq \textbf{\textit{s}}^{T} \left( \Tilde{\textbf{\textit{d}}} - \frac{K(t)}{\varepsilon} \cdot \textbf{\textit{s}} \right) \\
        &< \left\lVert \textbf{\textit{s}} \right\lVert_{2} \Tilde{D} - \frac{K(t)}{\varepsilon} \left\lVert \textbf{\textit{s}} \right\lVert_{2}^{2} \\
        &= \left\lVert \textbf{\textit{s}} \right\lVert_{2} \left( \Tilde{D} - \frac{K(t)}{\varepsilon} \left\lVert \textbf{\textit{s}} \right\lVert_{2} \right)
    \end{aligned}
    \end{equation}
    If the sliding variable is in the region $\left\lVert \textbf{\textit{s}} \right\lVert_{2} > \varepsilon$, then
    \begin{equation}
    \begin{aligned}
        \dot{V} &< \left\lVert \textbf{\textit{s}} \right\lVert_{2} \left( \Tilde{D} - K(t) \right) \\
        &\leq - \kappa(t) \cdot V^{1/2}
    \end{aligned}
    \end{equation}
    where $\kappa(t)=\sqrt{2\lambda_{max}^{-1}} \left( K(t)-\Tilde{D} \right) > 0$.
    Therefore, the control law proposed in \eqref{eq:ctrlLaw} guarantees that the sliding variable moves into the region $\left\lVert \textbf{\textit{s}} \right\lVert_{2} \leq \varepsilon$ in finite time. 
    
\end{proof}

\begin{lemma}\label{lemAdtCtrlLaw}
    The gain adaptation law of \eqref{eq:adtLaw} implies that the control gain always satisfies $K(t) \geq K_{0}$ for $\forall t$.
\end{lemma}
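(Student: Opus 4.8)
The plan is to treat $\left\lVert \textbf{\textit{u}}(t)\right\rVert_{2}$ as a non-negative forcing term along the closed-loop trajectory and to read \eqref{eq:adtLaw} as a scalar linear first-order ODE in $K$. First I would introduce the shifted gain $\phi(t)=K(t)-K_{0}$, so that the adaptation law becomes
\begin{equation}
    \dot{\phi}(t)+\eta\phi(t)=\eta\left\lVert \textbf{\textit{u}}(t)\right\rVert_{2},\quad \phi(0)=K(0)-K_{0}\geq 0.
\end{equation}
Because a vector $2$-norm is always non-negative, the right-hand side is non-negative for every $t$, irrespective of the explicit form of $\textbf{\textit{u}}$.

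Next I would multiply through by the integrating factor $e^{\eta t}$, which is admissible since $\eta>0$ is a fixed constant, to obtain
\begin{equation}
    \frac{d}{dt}\left(e^{\eta t}\phi(t)\right)=\eta\,e^{\eta t}\left\lVert \textbf{\textit{u}}(t)\right\rVert_{2}\geq 0.
\end{equation}
Integrating from $0$ to $t$ then yields the variation-of-constants representation
\begin{equation}
    \phi(t)=e^{-\eta t}\phi(0)+\eta\int_{0}^{t}e^{-\eta(t-\tau)}\left\lVert \textbf{\textit{u}}(\tau)\right\rVert_{2}\,d\tau.
\end{equation}
Both summands are non-negative---the first because $\phi(0)\geq 0$ by the initial condition in \eqref{eq:adtLaw}, the second because it is the integral of a non-negative integrand---so $\phi(t)\geq 0$, that is, $K(t)\geq K_{0}$ for all $t$. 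Translating back gives the claim, along with the slightly stronger conclusion that $K(t)\geq K_{0}+e^{-\eta t}\left(K(0)-K_{0}\right)$.

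The only subtlety worth flagging is the coupling: the control input in \eqref{eq:ctrlLaw} itself depends on $K$ through $\textbf{\textit{u}}=-\frac{K}{\varepsilon}\textbf{\textit{s}}$, so \eqref{eq:adtLaw} is in fact one equation of a coupled nonlinear closed-loop system rather than a self-contained linear ODE. This is not an obstacle here, because the argument never uses the exact value of $\left\lVert \textbf{\textit{u}}\right\rVert_{2}$; only its sign matters, and a norm is non-negative along any admissible trajectory on which a solution exists. An equivalent and more intuitive route is a boundary argument: at any instant where $K(t)=K_{0}$ one has $\dot{K}(t)=\eta\left\lVert\textbf{\textit{u}}(t)\right\rVert_{2}\geq 0$, so $K$ cannot cross below the level $K_{0}$. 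I prefer the integrating-factor computation because it establishes the inequality directly, without having to invoke a ``first crossing time'' and argue about its existence.
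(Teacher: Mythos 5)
Your proof is correct and is essentially the paper's own argument: the paper likewise shifts to $\Delta K(t) = K(t)-K_{0}$, uses $\left\lVert \textbf{\textit{u}}(t)\right\rVert_{2}\geq 0$ to get $\Delta\dot{K}\geq -\eta\,\Delta K$, and then shows that $\Delta K(t)/\psi(t)$ with $\psi(t)=\psi(0)e^{-\eta t}$ is nondecreasing --- which is exactly your integrating-factor computation written as a ratio. The only difference is cosmetic: you keep the exact forcing term and integrate (variation of constants), which yields the slightly sharper bound $K(t)\geq K_{0}+e^{-\eta t}\left(K(0)-K_{0}\right)$, whereas the paper discards it into an inequality before comparing.
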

\begin{proof}
    Let us define $\Delta K(t) \equiv K(t) - K_{0}$, then
    \begin{equation}\label{eq:tmp22}
        \Delta \dot{K}(t) = \dot{K}(t)
    \end{equation}
    Substitution of the adaptive rule into \eqref{eq:tmp22} leads to 
    \begin{equation}\label{eq:tmp23}
    \begin{aligned}
        \Delta \dot{K}(t) &= \dot{K} (t) \\
        &= \eta \left( \left\lVert \textbf{\textit{u}}(t) \right\lVert_{2} - K(t) + K_{0} \right) \\
        &= \eta \left( \left\lVert \textbf{\textit{u}}(t) \right\lVert_{2} - \Delta K(t) \right)
    \end{aligned}
    \end{equation}
    Since $\left\lVert \textbf{\textit{u}}(t) \right\lVert_{2} \geq 0$ is always satisfied, 
    \eqref{eq:tmp23} becomes
    \begin{equation}\label{eq:tmp25}
    \begin{aligned}
        \Delta \dot{K}(t) 
        &\geq -\eta \cdot \Delta K(t)
    \end{aligned}
    \end{equation}
    Let us introduce the function $\psi(t)=\psi(0)\exp{\left(-\eta t\right)}> 0$ where $\psi(0)$ is an arbitrary positive constant, then
    \begin{equation}
        \dot{\psi}(t) = -\eta\cdot\psi(t)
    \end{equation}
    The time derivative of $\frac{\Delta K(t)}{\psi(t)}$ is calculated as
    \begin{equation}\label{eq:tmp26}
    \begin{aligned}
        \frac{d}{dt} \left( \frac{\Delta K(t)}{\psi(t)} \right) &= \frac{\Delta \dot{K}(t) \psi(t) - \Delta K(t) \dot{\psi}(t)}{\psi^{2}(t)} \\
        &= \frac{\Delta \dot{K}(t) \psi(t) + \eta\cdot\Delta K(t)\psi(t)}{\psi^{2}(t)} \\
        &= \frac{\Delta \dot{K}(t) + \eta\cdot\Delta K(t)}{\psi(t)}
    \end{aligned}
    \end{equation}
    Since $\Delta \dot{K}(t) + \eta \cdot \Delta K(t) \geq 0$ by \eqref{eq:tmp25}, it proves that
    \begin{equation}\label{eq:KK_0ddtFunction}
        \frac{d}{dt}\left( \frac{\Delta K(t)}{ \psi(t)} \right) \geq 0.
    \end{equation}
    The initial conditions of $\Delta K(t)$ and $\psi(t)$
    yield 
    \begin{equation}\label{eq:KK_0initialFunction}
        \frac{\Delta K(0)}{\psi (0)} \geq 0
    \end{equation}
    where $\Delta K(0) = K(0) - K_{0} \geq 0$ and $\psi (0) > 0$.
    
    \noindent Consequently, \eqref{eq:KK_0ddtFunction} and \eqref{eq:KK_0initialFunction} proves that $\frac{\Delta K(t)}{\psi(t)}$ is equal to or greater than zero, leading to
    \begin{equation}
    \begin{aligned}
        &\Delta K(t) = K(t) - K_{0} \geq 0
    \end{aligned}
    \end{equation}
    Therefore, the gain is updated with the lower bound as $K(t) \geq K_{0}$.
\end{proof}

\begin{lemma}\label{Kbounded}
    The adaptive gain shown in \eqref{eq:adtLaw} has a finite upper bound such that $K \leq K^{*}$ and does not grow infinitely. It is proven in \cite{HCho2020}.
\end{lemma}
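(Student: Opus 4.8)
The plan is to build on the two preceding lemmas: Lemma~\ref{lemCtrlLaw} traps the sliding variable in $\left\lVert\textbf{\textit{s}}\right\rVert_{2}\le\varepsilon$ after a finite time, while Lemma~\ref{lemAdtCtrlLaw} supplies the lower bound $K(t)\ge K_{0}$. The key observation is that the control magnitude $\left\lVert\textbf{\textit{u}}(t)\right\rVert_{2}=\frac{K(t)}{\varepsilon}\left\lVert\textbf{\textit{s}}(t)\right\rVert_{2}$ stays bounded however large $K$ grows, because the same large gain that multiplies $\textbf{\textit{s}}$ inside $\textbf{\textit{u}}$ simultaneously contracts $\textbf{\textit{s}}$ in inverse proportion to $K$. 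Once $\left\lVert\textbf{\textit{u}}\right\rVert_{2}$ is shown to be bounded, the adaptation law closes the argument at once.

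First I would quantify the contraction of $\textbf{\textit{s}}$. Differentiating $\left\lVert\textbf{\textit{s}}\right\rVert_{2}^{2}$ along $\dot{\textbf{\textit{s}}}=\Tilde{\lambda}_{max}^{-1}\Tilde{\textbf{\textit{d}}}+\Tilde{\textbf{\textup{b}}}_{0}^{-1}\textbf{\textit{u}}$ with $\textbf{\textit{u}}=-\frac{K}{\varepsilon}\textbf{\textit{s}}$, then using the positive-definiteness estimate $\textbf{\textit{s}}^{T}\Tilde{\textbf{\textup{b}}}_{0}^{-1}\textbf{\textit{s}}\ge\Tilde{\lambda}_{max}^{-1}\left\lVert\textbf{\textit{s}}\right\rVert_{2}^{2}$ together with $\left\lVert\Tilde{\textbf{\textit{d}}}\right\rVert_{2}<\Tilde{D}$, I obtain a differential inequality of the form $\frac{d}{dt}\left\lVert\textbf{\textit{s}}\right\rVert_{2}\le\Tilde{\lambda}_{max}^{-1}\left(\Tilde{D}-\frac{K}{\varepsilon}\left\lVert\textbf{\textit{s}}\right\rVert_{2}\right)$ away from $\textbf{\textit{s}}=\textbf{0}$ (the origin being handled by working with $\left\lVert\textbf{\textit{s}}\right\rVert_{2}^{2}$ or a Dini derivative). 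This shows $\left\lVert\textbf{\textit{s}}\right\rVert_{2}$ is always attracted toward the value $\varepsilon\Tilde{D}/K$, so the product $\left\lVert\textbf{\textit{u}}\right\rVert_{2}=\frac{K}{\varepsilon}\left\lVert\textbf{\textit{s}}\right\rVert_{2}$ is pulled toward $\Tilde{D}$ rather than toward infinity.

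To convert this attraction into a genuine bound, I would run an invariant-set argument on the coupled pair $(\left\lVert\textbf{\textit{s}}\right\rVert_{2},K)$. Fixing any constant $c>\Tilde{D}$ and evaluating $\frac{d}{dt}\left(\frac{K}{\varepsilon}\left\lVert\textbf{\textit{s}}\right\rVert_{2}\right)$ on the surface $\left\lVert\textbf{\textit{u}}\right\rVert_{2}=c$, with $\dot{K}=\eta(\left\lVert\textbf{\textit{u}}\right\rVert_{2}-K+K_{0})$ and the inequality above substituted, the contribution proportional to $\Tilde{D}-c<0$ dominates for every sufficiently large $K$ and makes the derivative negative; hence trajectories cannot escape outward, and $\left\lVert\textbf{\textit{u}}(t)\right\rVert_{2}\le M$ eventually for some finite $M=M(\Tilde{D},\varepsilon,\eta,K_{0})$. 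The adaptation law then yields $\dot{K}\le\eta(M-K+K_{0})$, so the comparison principle gives $K(t)\le\max\{K(T),\,M+K_{0}\}=:K^{*}$, because the scalar comparison flow $\dot{y}=\eta(M-y+K_{0})$ relaxes monotonically to $M+K_{0}$ without overshoot.

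I expect the middle step to be the main obstacle: upgrading ``attracted toward $\varepsilon\Tilde{D}/K$'' to a uniform bound on $\left\lVert\textbf{\textit{u}}\right\rVert_{2}$ is delicate, since $K$ and $\textbf{\textit{s}}$ evolve on coupled, comparable time scales and $K$ cannot simply be frozen while $\textbf{\textit{s}}$ relaxes. The invariant-set computation on the joint dynamics is exactly what makes the bound rigorous. As the SISO version of this result is already established in \cite{HCho2020}, the residual task is to confirm that each scalar inequality survives the replacement by the real-symmetric positive-definite $\Tilde{\textbf{\textup{b}}}_{0}$, which it does via the eigenvalue bounds already used in Lemma~\ref{lemCtrlLaw}.
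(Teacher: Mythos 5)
Your overall strategy---first bound $\left\lVert\textbf{\textit{u}}\right\rVert_{2}$, then close with the comparison principle on $\dot{K}=\eta\left(\left\lVert\textbf{\textit{u}}\right\rVert_{2}-K+K_{0}\right)$---is the right one, and it is worth noting that the paper itself offers no proof of this lemma at all (it simply defers to the SISO result in \cite{HCho2020}), so you are attempting something the paper does not carry out. However, your middle step has a genuine hole. You establish negativity of $\frac{d}{dt}\left\lVert\textbf{\textit{u}}\right\rVert_{2}$ on the surface $\left\lVert\textbf{\textit{u}}\right\rVert_{2}=c$ only ``for every sufficiently large $K$,'' and from this you conclude that trajectories cannot escape the set $\left\{\left\lVert\textbf{\textit{u}}\right\rVert_{2}\leq c\right\}$. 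That inference fails, because the surface also contains points with $K$ near $K_{0}$, and there the adaptation term, which on the surface equals
\[
\frac{\dot{K}}{K}\left\lVert\textbf{\textit{u}}\right\rVert_{2}
=\eta\, c\,\frac{c+K_{0}-K}{K},
\]
blows up like $\eta c^{2}/K_{0}$ as $K\downarrow K_{0}$, while your stabilizing term $\frac{K}{\varepsilon\Tilde{\lambda}_{max}}\left(\Tilde{D}-c\right)$ is only of order $K_{0}$. For the paper's own parameter values ($K_{0}=10^{-8}$ for ROCS) the derivative of $\left\lVert\textbf{\textit{u}}\right\rVert_{2}$ is genuinely positive on that portion of the surface for any fixed $c>\Tilde{D}$, so the set is not invariant and the barrier argument as stated does not close; ``eventually bounded'' does not follow either, since nothing you wrote rules out repeated outward crossings at moderate $K$.

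The missing ingredient is an inequality you never invoke: $\left\lVert\textbf{\textit{u}}\right\rVert_{2}\leq K$, which is exactly what Lemma~\ref{lemCtrlLaw} supplies, since after the finite trapping time $\left\lVert\textbf{\textit{s}}\right\rVert_{2}\leq\varepsilon$ gives $\left\lVert\textbf{\textit{u}}\right\rVert_{2}=\frac{K}{\varepsilon}\left\lVert\textbf{\textit{s}}\right\rVert_{2}\leq K$ and hence $\dot{K}\leq\eta K_{0}$. This removes the low-$K$ escape route in either of two ways: (a) the surface points with $K<c$ satisfy $\left\lVert\textbf{\textit{s}}\right\rVert_{2}>\varepsilon$ and are therefore unreachable after trapping, so only $K\geq c$ matters, and choosing $c$ with $c\left(c-\Tilde{D}\right)>\eta K_{0}\varepsilon\Tilde{\lambda}_{max}$ turns your computation into a true barrier; or, more cleanly, (b) combine $\frac{\dot{K}}{K}\left\lVert\textbf{\textit{u}}\right\rVert_{2}\leq\eta K_{0}$ with your contraction estimate to get $\frac{d}{dt}\left\lVert\textbf{\textit{u}}\right\rVert_{2}\leq\eta K_{0}+\frac{K}{\varepsilon\Tilde{\lambda}_{max}}\left(\Tilde{D}-\left\lVert\textbf{\textit{u}}\right\rVert_{2}\right)\leq\eta\left(K_{0}-K\right)\leq 0$ whenever $\left\lVert\textbf{\textit{u}}\right\rVert_{2}\geq\Tilde{D}+\eta\varepsilon\Tilde{\lambda}_{max}$, so that $\left\lVert\textbf{\textit{u}}\right\rVert_{2}$ is eventually confined below $M=\Tilde{D}+\eta\varepsilon\Tilde{\lambda}_{max}$ (or its value at the trapping time, if larger); your final comparison step is then correct. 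One further caveat you should flag: Lemma~\ref{lemCtrlLaw} is itself conditional on $K(t)>\Tilde{D}$ for all $t$, which Lemma~\ref{lemAdtCtrlLaw} does not provide (it only gives $K\geq K_{0}$); a fully self-contained proof must either assume $K_{0}>\Tilde{D}$ or treat the pre-trapping phase with the augmented Lyapunov function $\frac{1}{2}\textbf{\textit{s}}^{T}\textbf{\textit{s}}+\frac{1}{2\gamma}\left(K-K^{*}\right)^{2}$ used in Theorems~\ref{thmROCSCtrl} and~\ref{thmACSCtrl}, exactly as \cite{HCho2020} does.
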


\subsection{Relative Orbit Control System}
\label{s:4-1ROCS}

\emph{\textbf{Sliding variable design}}

The sliding variable for the relative orbit control system is designed based on the concept of the nonsingular fast terminal sliding mode. This novel NFTSM variable guarantees the asymptotic stability to the equilibrium point while maintaining the fast convergent characteristics of the traditional TSM.
It is defined as follows:
\begin{equation}{\label{eq:sOrb}}
    \textbf{\textit{s}}_{orb}(t)
    =
    \dot{\textbf{\textit{r}}}_{e}(t)
    +
    \boldsymbol{\upalpha}_{orb}
    \textup{sig}^{\rho_{orb}}
    \textbf{\textit{r}}_{e}(t)
    +
    \boldsymbol{\upbeta}_{orb}
    \textbf{\textit{r}}_{e}(t)
\end{equation}
where \begin{math}\rho_{orb}\in(1,2)\end{math} is a constant, \begin{math}
    \boldsymbol{\upalpha}_{orb} = diag\left\{\alpha_{orb,1}, \alpha_{orb,2}, \alpha_{orb,3}\right\}
\end{math}
, \begin{math}
    \boldsymbol{\upbeta}_{orb} = diag\left\{\beta_{orb,1}, \beta_{orb,2}, \beta_{orb,3}\right\}
\end{math} 
are user-specified diagonal matrices, where 
\begin{math}\alpha_{orb,i=1,2,3}>0\end{math} and
\begin{math}\beta_{orb,i=1,2,3}>0\end{math}
are constants. Besides, \begin{math}\textnormal{sig}^{\rho}\boldsymbol{\xi}=
\begin{bmatrix}
{\left|\xi_{1} \right|}^{\rho}\textnormal{sgn}\left(\xi_{1}\right) & 
{\left|\xi_{2} \right|}^{\rho}\textnormal{sgn}\left(\xi_{2}\right) & 
{\left|\xi_{3} \right|}^{\rho}\textnormal{sgn}\left(\xi_{3}\right)
\end{bmatrix}^T
\end{math} is defined for a vector \begin{math}\boldsymbol{\xi}=
\begin{bmatrix}
\xi_{1} & \xi_{2} & \xi_{3}
\end{bmatrix}^T
\end{math}. The subscripts and arguments of the variables and parameters will be suppressed for brevity unless required.


\begin{proposition}\label{proROCSErr}
The relative position and velocity errors (\begin{math}\textbf{\textit{r}}_{e}, \dot{\textbf{\textit{r}}}_{e}\end{math}) are asymptotically stable if the sliding variable
\begin{math}\textbf{\textit{s}}\end{math} defined in \eqref{eq:sOrb} is zero.
\end{proposition}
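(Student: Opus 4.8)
The plan is to substitute $\textbf{\textit{s}}_{orb}(t)=\textbf{0}$ into the definition \eqref{eq:sOrb}, which confines the closed-loop motion to the sliding manifold and leaves the reduced first-order dynamics
\[
\dot{\textbf{\textit{r}}}_{e}(t) = -\boldsymbol{\upalpha}_{orb}\,\textnormal{sig}^{\rho_{orb}}\textbf{\textit{r}}_{e}(t) - \boldsymbol{\upbeta}_{orb}\textbf{\textit{r}}_{e}(t).
\]
Because $\boldsymbol{\upalpha}_{orb}$ and $\boldsymbol{\upbeta}_{orb}$ are diagonal with strictly positive entries, this reduced system decouples into three independent scalar equations; nevertheless I would keep the vector form throughout for compactness, since the Lyapunov argument goes through without invoking the decoupling explicitly.

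Next I would adopt the Lyapunov candidate $V(t)=\frac{1}{2}\textbf{\textit{r}}_{e}^{T}\textbf{\textit{r}}_{e}$, which is positive definite and radially unbounded, and differentiate it along the reduced dynamics to get $\dot{V}=\textbf{\textit{r}}_{e}^{T}\dot{\textbf{\textit{r}}}_{e}$. Substituting the sliding-phase expression and using the componentwise identity $\xi\,|\xi|^{\rho}\textnormal{sgn}(\xi)=|\xi|^{\rho+1}$ for the signed-power term yields
\[
\dot{V} = -\sum_{i=1}^{3}\alpha_{orb,i}\,|r_{e,i}|^{\rho_{orb}+1} - \sum_{i=1}^{3}\beta_{orb,i}\,r_{e,i}^{2}.
\]
Since $\alpha_{orb,i}>0$, $\beta_{orb,i}>0$ and $\rho_{orb}+1>0$, both sums are nonnegative and vanish only at $\textbf{\textit{r}}_{e}=\textbf{0}$, so $\dot{V}$ is negative definite.

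By Lyapunov's direct method this establishes asymptotic stability of $\textbf{\textit{r}}_{e}=\textbf{0}$, and the radial unboundedness of $V$ upgrades this to a global conclusion. Convergence of the velocity error then follows at no extra cost: the right-hand side of the reduced dynamics is continuous in $\textbf{\textit{r}}_{e}$ and vanishes at the origin, so $\dot{\textbf{\textit{r}}}_{e}\to\textbf{0}$ as $\textbf{\textit{r}}_{e}\to\textbf{0}$, giving asymptotic stability of the pair $(\textbf{\textit{r}}_{e},\dot{\textbf{\textit{r}}}_{e})$. The only delicate point — and the step I would present most carefully — is the handling of the $\textnormal{sig}^{\rho_{orb}}$ term; the identity above is exactly what converts it into a clean negative-definite contribution. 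I would also remark that the choice $\rho_{orb}\in(1,2)$ makes the reduced vector field continuously differentiable at the origin (the factor $\rho_{orb}|r_{e,i}|^{\rho_{orb}-1}$ tends to zero there), which removes any singularity or non-uniqueness concern and is precisely why only asymptotic, rather than finite-time, stability is claimed in this proposition.
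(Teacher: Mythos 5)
Your proof is correct and takes essentially the same approach as the paper: the same Lyapunov candidate $V=\frac{1}{2}\textbf{\textit{r}}_{e}^{T}\textbf{\textit{r}}_{e}$, the same substitution of the sliding-manifold dynamics, and the same conclusion from negative definiteness of $\dot{V}$ (you retain the exact componentwise sums where the paper bounds them using $\alpha_{min}$ and $\beta_{min}$). Your added remarks --- the explicit argument that $\dot{\textbf{\textit{r}}}_{e}\to\textbf{0}$ and the smoothness of the reduced vector field for $\rho_{orb}\in(1,2)$ --- are sound refinements of points the paper leaves implicit.
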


\begin{proof}
Let the Lyapunov candidate be defined as
\begin{equation}{\label{eq:VOrb1}}
    V = \frac{1}{2}\textbf{\textit{r}}_{e}^{T}\textbf{\textit{r}}_{e}
\end{equation}

\noindent The time derivative is derived as
\begin{equation}{\label{eq:VOrbDot1}}
    \begin{aligned}
    \dot{V}
    &=\textbf{\textit{r}}_{e}^{T}\dot{\textbf{\textit{r}}}_{e}\\
    &=\textbf{\textit{r}}_{e}^{T}\left(-\boldsymbol{\upalpha}\textup{sig}^{\rho}\textbf{\textit{r}}_{e}-\boldsymbol{\upbeta}\textbf{\textit{r}}_{e}\right)\\    
    &\leq - \alpha_{min}\left\lVert\textbf{\textit{r}}_{e}\right\rVert_{\rho+1}^{\rho+1} - \beta_{min}\left\lVert\textbf{\textit{r}}_{e}\right\rVert_{2}^{2}
    \end{aligned}
\end{equation}
where $\textbf{\textit{s}} = \textbf{0}$ is used, \begin{math}\alpha_{min}=\min\limits_{i=1,2,3}\alpha_{i}\end{math} and \begin{math}\beta_{min}=\min\limits_{i=1,2,3}\beta_{i}\end{math}.

\noindent Hence, when the sliding variable is maintained to be zero, the orbital error states converge to zero asymptotically.
\end{proof}

\begin{lemma}\label{lemVdotV}

Assume that \begin{math}V\end{math} is a \begin{math}C^{1}\end{math} smooth positive definite function that is defined on \begin{math}U \subset \mathbb{R}^{n}\end{math}. If \begin{math}\dot{V} + cV^{r}\end{math} is a negative semi-definite function where \begin{math}c\in\mathbb{R}^{+}\end{math} and \begin{math}r \in (0,1)\end{math}, then there exists a region \begin{math}U_{0} \subset \mathbb{R}^{n}\end{math} such that \begin{math}V\end{math} can approach zero in finite time from the initial value \begin{math}V_{0}\end{math} on the region \begin{math}U_{0}\end{math}. The convergence time is known as \begin{math}t \leq \frac{V_{0}^{1-r}}{c(1-r)}\end{math}\cite{lemConv1,lemConv2,lemConv3,lemConv4,lemConv5}.

\end{lemma}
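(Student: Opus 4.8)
The plan is to collapse the hypothesis into a scalar first-order differential inequality for $V$ along trajectories and then integrate it by separation of variables, which is the standard finite-time Lyapunov argument. Saying that $\dot{V}+cV^{r}$ is negative semi-definite means precisely that $\dot{V}\le -cV^{r}$ holds along every solution, so I would regard $V$ as a nonnegative scalar function of time and analyze $\dot{V}\le -cV^{r}$ on its own.

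First I would fix the region $U_{0}$. Since $V$ is positive definite and $C^{1}$ on $U$, I would take $U_{0}$ to be a sublevel set $\{x\in U : V(x)\le \ell\}$ with $\ell>0$ chosen small enough that its closure lies inside $U$. Because $\dot{V}\le -cV^{r}\le 0$, the function $V$ is nonincreasing, so $U_{0}$ is forward invariant and the inequality remains valid along the whole trajectory; this is what makes the local claim that ``there exists a region $U_{0}$'' precise.

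Next, on any subinterval where $V>0$ I would rewrite the inequality as $V^{-r}\dot{V}\le -c$ and integrate from $0$ to $t$, using $r\in(0,1)$ so that $1-r>0$. This gives
\begin{equation}
\frac{V(t)^{1-r}-V_{0}^{1-r}}{1-r}\le -ct,
\end{equation}
hence $V(t)^{1-r}\le V_{0}^{1-r}-c(1-r)t$. The right-hand side hits zero at $t^{*}=V_{0}^{1-r}/[c(1-r)]$, forcing $V$ to vanish no later than $t^{*}$; since $V\ge 0$ and $\dot{V}\le 0$, it then stays at zero. This delivers both finite-time convergence and the stated estimate $t\le V_{0}^{1-r}/[c(1-r)]$.

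I expect the main obstacle to be justifying the passage to the limit $V\to 0$. Because $r\in(0,1)$, the map $V\mapsto V^{r}$ fails to be Lipschitz at the origin, so standard existence/uniqueness and continuation arguments for $\dot{V}=-cV^{r}$ do not apply verbatim near zero. I would resolve this with the comparison lemma: letting $W$ solve the equality $\dot{W}=-cW^{r}$ with $W(0)=V_{0}$, whose explicit solution is $W(t)=\left(V_{0}^{1-r}-c(1-r)t\right)^{1/(1-r)}$ for $t\le t^{*}$ and $W(t)\equiv 0$ afterward, the inequality $\dot{V}\le -cV^{r}$ yields $0\le V(t)\le W(t)$, so $V$ is squeezed to zero by $t^{*}$. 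The crucial analytic point, which the non-Lipschitz behavior produces rather than obstructs, is that the improper integral $\int_{0}^{V_{0}}V^{-r}\,dV=V_{0}^{1-r}/(1-r)$ is finite, and this finiteness is exactly why the convergence is finite-time instead of merely asymptotic.
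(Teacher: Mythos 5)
Your proof is correct. Note that the paper itself gives no proof of this lemma---it is stated as a known result and deferred to the cited references---and your argument (interpret the hypothesis as the scalar differential inequality $\dot{V}\le -cV^{r}$, take a forward-invariant sublevel set as $U_{0}$, integrate by separation of variables, and invoke the comparison lemma to handle the non-Lipschitz behavior of $V\mapsto V^{r}$ at the origin) is precisely the standard finite-time-stability proof found in those references, including the bound $t^{*}=V_{0}^{1-r}/\left[c(1-r)\right]$.
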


\begin{theorem}\label{thmROCSCtrl}
Consider the relative orbit control system that is nonlinear and uncertain \eqref{eq:dynOrb} with the adaptive smooth feedback controller \eqref{eq:ctrlLaw}-\eqref{eq:adtLaw} where the sliding manifold is designed as in \eqref{eq:sOrb}. Then, as soon as the condition \begin{math}\left\lVert\textbf{\textit{s}}\right\rVert_{2}>\varepsilon\end{math} is met, the orbital states converge to the region \begin{math}\left\lVert\textbf{\textit{s}}\right\rVert_{2}\leq\varepsilon\end{math} in finite time.
\end{theorem}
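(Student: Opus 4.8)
The plan is to cast the closed-loop sliding dynamics into the canonical form \eqref{eq:tmp27} required by Lemma~\ref{lemCtrlLaw} and then invoke that lemma directly. First I would differentiate the sliding variable \eqref{eq:sOrb} once in time, obtaining $\dot{\textbf{\textit{s}}}_{orb} = \ddot{\textbf{\textit{r}}}_e + \rho_{orb}\boldsymbol{\upalpha}_{orb}\,\mathrm{diag}\!\left(|r_{e,i}|^{\rho_{orb}-1}\right)\dot{\textbf{\textit{r}}}_e + \boldsymbol{\upbeta}_{orb}\dot{\textbf{\textit{r}}}_e$, where I use that the componentwise derivative of $\mathrm{sig}^{\rho_{orb}}(\xi)$ is $\rho_{orb}|\xi|^{\rho_{orb}-1}\dot\xi$. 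I would then substitute the error dynamics \eqref{eq:dynOrbErr} for $\ddot{\textbf{\textit{r}}}_e$, so that the control input enters only through $\frac{1}{m}\textbf{B}\,\textbf{\textit{u}}_{orb}$ with $\textbf{B}=\textbf{I}_{3\times3}$.

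The next step is the key algebraic identification. Using the mass decomposition \eqref{eq:uncertainMass}, I would split $\frac{1}{m}\textbf{I} = \frac{1}{m_0}\textbf{I} + \left(\frac{1}{m}-\frac{1}{m_0}\right)\textbf{I}$ and declare the nominal control matrix to be $\tilde{\textbf{b}}_0^{-1} = \frac{1}{m_0}\textbf{I}$, i.e. $\tilde{\textbf{b}}_0 = m_0\textbf{I}$ with maximum eigenvalue $\tilde\lambda_{max}=m_0$; this matrix is trivially real-symmetric and positive definite, as Lemma~\ref{lemCtrlLaw} requires. Everything else---the drift terms $\textbf{A}_1(\textbf{\textit{r}}_e+\textbf{\textit{r}}_r)$, $\textbf{A}_2(\dot{\textbf{\textit{r}}}_e+\dot{\textbf{\textit{r}}}_r)$, $\textbf{\textit{f}}$, $-\ddot{\textbf{\textit{r}}}_r$, the two sliding-variable velocity terms, the force $\frac{1}{m}\textbf{\textit{d}}_{orb}$, and the matched residual $\left(\frac{1}{m}-\frac{1}{m_0}\right)\textbf{\textit{u}}_{orb}$---would be collected into a single lumped disturbance $\tilde{\textbf{\textit{d}}}$ through the identity $\tilde\lambda_{max}^{-1}\tilde{\textbf{\textit{d}}}$ equal to that sum, so that $\dot{\textbf{\textit{s}}}_{orb}$ takes exactly the form \eqref{eq:tmp27}.

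I would then verify that $\tilde{\textbf{\textit{d}}}$ admits an unknown but finite bound $\tilde D$, which is the step I expect to be the main obstacle, since this is where the real content lies. Boundedness of the chief-orbit quantities in $\textbf{A}_1,\textbf{A}_2$ (a circular sun-synchronous orbit gives $\dot\theta_c$ essentially constant and $\ddot\theta_c\approx0$), of the Keplerian term $\textbf{\textit{f}}$, and of the smooth reference $\textbf{\textit{r}}_r,\dot{\textbf{\textit{r}}}_r,\ddot{\textbf{\textit{r}}}_r$ from \eqref{eq:refAlign} is routine. The delicate pieces are: (i) the differentiated sliding term $\rho_{orb}\,\mathrm{diag}(|r_{e,i}|^{\rho_{orb}-1})\dot{\textbf{\textit{r}}}_e$, which stays finite precisely because the exponent $\rho_{orb}-1\in(0,1)$ is positive---this is exactly the nonsingularity built into the NFTSM design; (ii) the force $\frac{1}{m}\textbf{\textit{d}}_{orb}$, bounded by Assumption~3 together with $m>0$ and the bounded mass uncertainty of Assumption~2; and (iii) the matched term $\left(\frac{1}{m}-\frac{1}{m_0}\right)\textbf{\textit{u}}_{orb}=-\frac{\delta m}{m\,m_0}\textbf{\textit{u}}_{orb}$, which couples back to the control. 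I would bound this last term using $K(t)\le K^{*}$ from Lemma~\ref{Kbounded} and the fact that the Lyapunov argument inside Lemma~\ref{lemCtrlLaw} keeps $\|\textbf{\textit{s}}\|_2$ non-increasing while outside the target set, so no finite escape occurs and $\tilde{\textbf{\textit{d}}}$ remains bounded.

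Finally, with $\tilde{\textbf{\textit{d}}}$ bounded, the adaptive law \eqref{eq:adtLaw} drives $K(t)$ upward whenever $\|\textbf{\textit{s}}\|_2>\varepsilon$: there $\|\textbf{\textit{u}}\|_2 = \frac{K}{\varepsilon}\|\textbf{\textit{s}}\|_2 > K$, so $\dot K>\eta K_0>0$, forcing $K(t)$ to exceed the finite $\tilde D$ in finite time. All hypotheses of Lemma~\ref{lemCtrlLaw} are then met, and applying that lemma to $V(t)=\frac12\textbf{\textit{s}}_{orb}^{T}\tilde{\textbf{b}}_0\textbf{\textit{s}}_{orb}$ yields $\dot V\le -\kappa(t)V^{1/2}$ with $\kappa(t)>0$; by Lemma~\ref{lemVdotV} this gives finite-time convergence of $\textbf{\textit{s}}_{orb}$ into $\|\textbf{\textit{s}}\|_2\le\varepsilon$, completing the proof.
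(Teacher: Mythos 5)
Your reduction to Lemma~\ref{lemCtrlLaw} founders on the step where you lump the matched mass-uncertainty residual into the disturbance. Writing $\frac{1}{m}=\frac{1}{m_{0}}+\bigl(\frac{1}{m}-\frac{1}{m_{0}}\bigr)$ and putting $\bigl(\frac{1}{m}-\frac{1}{m_{0}}\bigr)\textbf{\textit{u}}_{orb}=-\frac{\delta m}{m m_{0}}\textbf{\textit{u}}_{orb}$ into $\tilde{\textbf{\textit{d}}}$ makes $\tilde{\textbf{\textit{d}}}$ contain the term $\frac{\delta m}{m m_{0}}\frac{K(t)}{\varepsilon}\textbf{\textit{s}}$, whose norm is proportional to $K(t)\left\lVert\textbf{\textit{s}}\right\rVert_{2}$. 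No constant bound $\tilde{D}$ for this quantity exists a priori, because $\left\lVert\textbf{\textit{s}}\right\rVert_{2}$ is exactly what the theorem is supposed to prove bounded and convergent. Your attempted repair is circular: you invoke ``the Lyapunov argument inside Lemma~\ref{lemCtrlLaw}'' to keep $\left\lVert\textbf{\textit{s}}\right\rVert_{2}$ from escaping, but that argument is only valid under the lemma's hypothesis $K(t)>\tilde{D}$ for all $t$, and here $\tilde{D}$ itself scales with $\sup_{t}\left\lVert\textbf{\textit{s}}(t)\right\rVert_{2}$; you cannot use the lemma's conclusion to certify its own hypothesis. A second, related gap: even if $\tilde{D}$ were a genuine constant, Lemma~\ref{lemCtrlLaw} assumes $K(t)>\tilde{D}$ for \emph{all} $t$, whereas your bootstrap (the observation $\dot{K}\geq\eta K_{0}>0$ while $\left\lVert\textbf{\textit{s}}\right\rVert_{2}>\varepsilon$ is itself correct) only delivers the gain condition after a waiting period, during which the growth of $\left\lVert\textbf{\textit{s}}\right\rVert_{2}$ feeds back into the ``bound'' you assigned to $\tilde{\textbf{\textit{d}}}$.

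The fix is to not separate the mass uncertainty from the control at all: since $\frac{1}{m_{0}}-\frac{\delta m}{m m_{0}}=\frac{1}{m}>0$, the control term keeps a positive, stabilizing coefficient when left intact, and only the state-dependent drift terms and the external force need to be lumped into a disturbance $\overline{\textbf{\textit{d}}}$ with an assumed unknown constant bound $\overline{D}$ (the paper makes this same boundedness assumption, so your items (i)--(ii) are unobjectionable). This is what the paper's proof does, and it then bypasses Lemma~\ref{lemCtrlLaw} entirely, precisely because the lemma's standing gain hypothesis cannot be guaranteed: it uses the composite Lyapunov function $V=\frac{1}{2}\textbf{\textit{s}}^{T}\textbf{\textit{s}}+\frac{1}{2\gamma_{orb}}\left(K-K^{*}\right)^{2}$ with $K^{*}>m_{max}\overline{D}$ and $K^{*}\geq K(t)$ (Lemma~\ref{Kbounded}), keeps the true factor $\frac{1}{m}$ on $\textbf{\textit{u}}$ and bounds it by $m_{min}$, $m_{max}$, derives $\dot{V}<-\lambda V^{1/2}-L$, shows $L>0$ provided $\gamma_{orb}<\eta m_{min}K_{0}/\left(\lambda_{K}m_{min}+\varepsilon\right)$, and concludes by Lemma~\ref{lemVdotV}. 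The adaptation-error term in $V$ is what absorbs the phase in which $K(t)$ is still too small to dominate the disturbance --- exactly the phase your argument cannot handle.
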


\begin{proof}
Let us define the Lyapunov function \begin{math}V(t)\end{math} as
\begin{equation}{\label{eq:VOrb2}}
    V=\frac{1}{2}\textbf{\textit{s}}^{T}\textbf{\textit{s}}
    +\frac{1}{2\gamma_{orb}}\left(K-K^{*}\right)^{2}
\end{equation}
where \begin{math}\gamma_{orb}\end{math} and \begin{math}K^{*}\end{math} are positive constants and \begin{math}K^{*}\end{math} is sufficiently large. The variable $\gamma$ refers to the orbital variable from here on.
The time derivative is derived as
\begin{equation}{\label{eq:VOrbDot2}}
    \dot{V}=\textbf{\textit{s}}^{T}\dot{\textbf{\textit{s}}}+\frac{1}{\gamma}\left(K-K^{*}\right)\dot{K}\\
\end{equation}
The time derivative of the orbital sliding variable yields
\begin{equation}{\label{eq:sOrbDot}}
    \dot{\textbf{\textit{s}}}=\ddot{\textbf{\textit{r}}}_{e}+ \boldsymbol{\upalpha}\rho\textbf{R}\dot{\textbf{\textit{r}}}_{e}+\boldsymbol{\upbeta}\dot{\textbf{\textit{r}}}_{e}
\end{equation}
where \begin{math}\textbf{R} = diag\left\{|x_{e}|^{\rho-1}, |y_{e}|^{\rho-1}, |z_{e}|^{\rho-1}\right\}
\end{math}. Substituting \eqref{eq:dynOrbErr} into \eqref{eq:sOrbDot} yields
\begin{equation}{\label{eq:sdottmp}}
\begin{aligned}
    \dot{\textbf{\textit{s}}}=&\textbf{A}_{1}\left(\textbf{\textit{r}}_{e}+\textbf{\textit{r}}_{r}\right)
    +\textbf{A}_{2}\left(\dot{\textbf{\textit{r}}}_{e}+\dot{\textbf{\textit{r}}}_{r}\right)
    +\textbf{\textit{f}}+\frac{1}{m}\textbf{B}\left(\textbf{\textit{u}} +\textbf{\textit{d}}\right)\\
    &-\ddot{\textbf{\textit{r}}}_{r}+\boldsymbol{\upalpha}\rho\textbf{R}\dot{\textbf{\textit{r}}}_{e}
    +\boldsymbol{\upbeta}\dot{\textbf{\textit{r}}}_{e}
\end{aligned}
\end{equation}
The terms in \eqref{eq:sdottmp} are rearranged according to their uncertainties
\begin{equation}
    \begin{aligned}
        \dot{\textbf{\textit{s}}}=&\left[\textbf{A}_{1}\left(\textbf{\textit{r}}_{e}+\textbf{\textit{r}}_{r}\right)
    +\textbf{A}_{2}\left(\dot{\textbf{\textit{r}}}_{e}+\dot{\textbf{\textit{r}}}_{r}\right)+\textbf{\textit{f}}
    -\ddot{\textbf{\textit{r}}}_{r}+\boldsymbol{\upalpha}\rho\textbf{R}\dot{\textbf{\textit{r}}}_{e}
    +\boldsymbol{\upbeta}\dot{\textbf{\textit{r}}}_{e}\right]\\
    &+\frac{1}{m}\textbf{B}\textbf{\textit{d}}+\frac{1}{m}\textbf{B} \textbf{\textit{u}}
    \end{aligned}
\end{equation}
With \eqref{eq:dynOrb_A1A2Eb}, the time derivative of the sliding variable simply leads to 
\begin{equation}
    \dot{\textbf{\textit{s}}}=\frac{1}{m}\textbf{\textit{u}}+\overline{\textbf{\textit{d}}}
\end{equation}
where \begin{math}
    \overline{\textbf{\textit{d}}}=[\textbf{A}_{1}\left(\textbf{\textit{r}}_{e}+\textbf{\textit{r}}_{r}\right)
    +\textbf{{A}}_{2}\left(\dot{\textbf{\textit{r}}}_{e}+\dot{\textbf{\textit{r}}}_{r}\right)
    +\textbf{\textit{f}}-\Ddot{\textbf{\textit{r}}}_{r}
    +\boldsymbol{\upalpha}\rho\textbf{R}\dot{\textbf{\textit{r}}}_{e}
    \end{math}
    \begin{math}
    +\boldsymbol{\upbeta}\dot{\textbf{\textit{r}}}_{e}]
    \end{math} includes the nonlinearities of the dynamics and uncertainties with an unknown upper bound
\begin{math}
    \left\lVert\overline{\textbf{\textit{d}}}\right\rVert_{2}<\overline{D}
\end{math}. 
Assume that \begin{math}K^{*}\end{math} is selected such that \begin{math}K^{*}>m_{max}\overline{D}\end{math} where $m_{max}$ is the maximum value of the uncertain mass and \begin{math}K^{*} \geq K(t)\end{math} for all time \begin{math}t\geq0\end{math} as \begin{math}K(t)\end{math} updated by \eqref{eq:adtLaw} is bounded (see Lemma \ref{Kbounded}), then
 \begin{equation}{\label{eq:tmp}}
 \begin{aligned}
     \dot{V}&=\textbf{\textit{s}}^{T}\dot{\textbf{\textit{s}}}
     +\frac{1}{\gamma}\left(K-K^{*}\right)\dot{K}\\
     &=\textbf{\textit{s}}^{T}\left(\overline{\textbf{\textit{d}}}+\frac{1}{m}\textbf{\textit{u}}\right)+\frac{1}{\gamma}\left(K-K^{*}\right)\dot{K}\\
     &=\textbf{\textit{s}}^{T}\left(\overline{\textbf{\textit{d}}}-\frac{K}{m\varepsilon}\textbf{\textit{s}}\right)+\frac{1}{\gamma}\left(K-K^{*}\right)\dot{K}\\
     &<\left\lVert\textbf{\textit{s}}\right\rVert_{2}\overline{D}-\frac{K}{m\varepsilon}\textbf{\textit{s}}^{T}\textbf{\textit{s}}+\frac{1}{\gamma}\left(K-K^{*}\right)\dot{K}\\
     &=\frac{\left\lVert\textbf{\textit{s}}\right\rVert_{2}}{m}\left(m\overline{D}-\frac{K}{\varepsilon}\left\lVert\textbf{\textit{s}}\right\rVert_{2}\right)+\frac{1}{\gamma}\left(K-K^{*}\right)\dot{K}
 \end{aligned}
 \end{equation}
 Let us assume the case where the sliding variable stays in the region
 \begin{math}\left\lVert\textbf{\textit{s}}\right\rVert_{2} > \varepsilon\end{math}.
 Then \eqref{eq:tmp} is derived as
 \begin{equation}
\begin{aligned}
     \dot{V}&<\frac{\left\lVert\textbf{\textit{s}}\right\rVert_{2}}{m}\left(m\overline{D}-K\right)+\frac{1}{\gamma}\left(K-K^{*}\right)\dot{K}\\
     &=\frac{\left\lVert\textbf{\textit{s}}\right\rVert_{2}}{m}\left(m\overline{D}-K^{*}\right)+\frac{1}{\gamma}\left(K-K^{*}\right)\dot{K}
     -\frac{\left\lVert\textbf{\textit{s}}\right\rVert_{2}}{m}\left(K-K^{*}\right)\\
     &=\left\lVert\textbf{\textit{s}}\right\rVert_{2}\left(\overline{D}-\frac{K^{*}}{m}\right)+\frac{1}{\gamma}\left(K-K^{*}\right)\dot{K}-\frac{\left\lVert\textbf{\textit{s}}\right\rVert_{2}}{m}\left(K-K^{*}\right)\\
     &<\left\lVert\textbf{\textit{s}}\right\rVert_{2}\left(\overline{D}-\frac{K^{*}}{m_{max}}\right)+\frac{1}{\gamma}\left(K-K^{*}\right)\dot{K}-\frac{\left\lVert\textbf{\textit{s}}\right\rVert_{2}}{m}\left(K-K^{*}\right)\\
     &=-\lambda_{s}\left\lVert\textbf{\textit{s}}\right\rVert_{2}-\left\lvert K-K^{*}\right\rvert\left(-\frac{\left\lVert\textbf{\textit{s}}\right\rVert_{2}}{m}+\frac{1}{\gamma}\dot{K}\right)
\end{aligned}
 \end{equation}
 where \begin{math}\lambda_{s}\triangleq\frac{K^{*}-m_{max}\overline{D}}{m_{max}}>0\end{math} is defined and the equality condition \begin{math}K-K^{*}=-\left\lvert K-K^{*} \right\rvert\end{math} is utilized.
 Now, a new parameter \begin{math}\lambda_{K}>0\end{math} is introduced such that
 \begin{equation}{\label{eq:tmp2}}
     \begin{aligned}
         \dot{V}&<-\lambda_{s}\left\lVert\textbf{\textit{s}}\right\rVert_{2}-\left\lvert K-K^{*}\right\rvert\left(-\frac{\left\lVert\textbf{\textit{s}}\right\rVert_{2}}{m}+\frac{1}{\gamma}\dot{K}\right)+\lambda_{K} \left\lvert K-K^{*}\right\rvert\\
         &\quad-\lambda_{K} \left\lvert K-K^{*}\right\rvert\\
         &=-\lambda_{s}\left\lVert\textbf{\textit{s}}\right\rVert_{2}-\left\lvert K-K^{*}\right\rvert
         \left(-\frac{\left\lVert\textbf{\textit{s}}\right\rVert_{2}}{m}+\frac{1}{\gamma}\dot{K}-\lambda_{K}\right)\\
         &\quad-\lambda_{K}\left\lvert K-K^{*}\right\rvert\\
         &<-\lambda_{s}\left\lVert\textbf{\textit{s}}\right\rVert_{2}-\left\lvert K-K^{*}\right\rvert\left(-\frac{\left\lVert\textbf{\textit{s}}\right\rVert_{2}}{m_{min}}+\frac{1}{\gamma}\dot{K}-\lambda_{K}\right)\\
         &\quad-\lambda_{K}\left\lvert K-K^{*}\right\rvert\\
         &=-\lambda_{s}\left\lVert\textbf{\textit{s}}\right\rVert_{2}-\lambda_{K}\left\lvert K-K^{*}\right\rvert-L
     \end{aligned}
 \end{equation}
 where \begin{math}
     L\triangleq\left\lvert K-K^{*}\right\rvert\left(-\frac{\left\lVert\textbf{\textit{s}}\right\rVert_{2}}{m_{min}}+\frac{1}{\gamma}\dot{K}-\lambda_{K}\right)
 \end{math}.
 We write \eqref{eq:tmp2} as
 \begin{equation}
     \begin{aligned}
         \dot{V}&<-\lambda_{s}\left\lVert\textbf{\textit{s}}\right\rVert_{2}-\lambda_{K} \left\lvert K-K^{*}\right\rvert-L\\
         &=-\lambda_{s}\sqrt{2}\cdot\frac{\left\lVert\textbf{\textit{s}}\right\rVert_{2}}{\sqrt{2}}
         -\lambda_{K}\sqrt{2\gamma}\cdot\frac{\left\lvert K-K^{*}\right\rvert}{\sqrt{2\gamma}}-L\\
         &\leq-\min\left(\lambda_{s}\sqrt{2},\lambda_{K}\sqrt{2\gamma}\right)\cdot
         \left(\frac{\left\lVert\textbf{\textit{s}}\right\rVert_{2}}{\sqrt{2}}+\frac{\left\lvert K-K^{*}\right\rvert}{\sqrt{2\gamma}}\right)-L\\
         &\leq-\lambda\cdot V^{1/2}-L
     \end{aligned}
 \end{equation}
 where \begin{math}
     \lambda\triangleq\sqrt{2}\min\left(\lambda_{s},\lambda_{K}\sqrt{\gamma}\right)>0
 \end{math}.
 
The parameter \begin{math}L\end{math} could be positive by selecting a proper \begin{math}\gamma\end{math}. Since $\dot{K} > 0$ when $\left\lVert \textbf{\textit{s}} \right\lVert_{2} > \varepsilon$, the positiveness condition \begin{math}L>0\end{math} is equivalent to
 \begin{equation}
     \gamma_{orb}<\frac{m_{min}\dot{K}}{\lambda_{K}m_{min}+\left\lVert\textbf{\textit{s}}\right\rVert_{2}}
 \end{equation}
Let us define $\Gamma_{orb}(t) \triangleq \frac{m_{min}\dot{K} }{\lambda_{K} m_{min} + \left\lVert \textbf{\textit{s}} \right\lVert_{2}}$, then substitution of \eqref{eq:adtLaw} leads to
\begin{equation}
\begin{aligned}
    \Gamma_{orb}(t) &\triangleq \eta \frac{ m_{min} \left( K(t)\left( \frac{\left\lVert \textbf{\textit{s}} \right\lVert_{2}}{\varepsilon} -1 \right) +K_{0} \right)}{\lambda_{K} m_{min} + \left\lVert \textbf{\textit{s}} \right\lVert_{2}} \\
    &\geq \eta \frac{m_{min} K_{0} \frac{\left\lVert \textbf{\textit{s}}\right\lVert_{2}}{\varepsilon}}{\lambda_{K}m_{min} + \left\lVert \textbf{\textit{s}} \right\lVert_{2}}\\
    &=\eta \cdot \frac{m_{min} \frac{K_{0}}{\varepsilon}}{\frac{\lambda_{K}m_{min}}{\left\lVert \textbf{\textit{s}} \right\lVert_{2}}+1}
\end{aligned}
\end{equation}
As the condition $\left\lVert \textbf{\textit{s}}\right\lVert_{2} > \varepsilon$ is satisfied, 
\begin{equation}
    \Gamma_{orb}(t) > \eta \frac{m_{min}\frac{K_{0}}{\varepsilon}}{\frac{\lambda_{K}m_{min}}{\varepsilon}+1}
    = \eta \frac{m_{min}K_{0}}{\lambda_{K} m_{min} + \varepsilon}
\end{equation}
As a result, one should select $\gamma_{orb}$ such that
\begin{equation}
    \gamma_{orb}<\frac{\eta m_{min} K_{0}}{\lambda_{K}m_{min} + \varepsilon}
\end{equation}

\noindent If the parameter \begin{math}L\end{math} is always positive, we can conclude that \begin{math}
     \dot{V}<-\lambda V^{1/2}-L<-\lambda V^{1/2}
 \end{math}. Hence, the sliding surface converges into the region \begin{math}
     \left\lVert\textbf{\textit{s}}\right\rVert_{2} \leq\varepsilon
 \end{math} in finite time by Lemma \ref{lemVdotV}.  
\end{proof}

\begin{lemma}\label{lemBoundOnRe}
When the $i$th element of the orbital sliding variable is bounded by $\left\lvert s_{i} \left(t\right) \right\lvert \leq \varepsilon_{i}$, the relative position error is bounded by $\left\lvert r_{ei} \left(t\right) \right\lvert \leq R_{ei}$, where $x_{e}$, $y_{e}$, and $z_{e}$ are written as $r_{ei}$ for $i=1,2,3$, respectively, to possess the expression of uniformity. The error bound $R_{ei}$ can be analytically obtained, being the unique solution of the algebraic equation $F(R_{ei})=0$, where $F(x)=\varepsilon_{i} - \alpha_{i}\textnormal{sig}^{\rho}x - \beta_{i}x$. Because $R_{ei}$ is a positive constant, $\textnormal{sig}^{\rho}x$ is equivalent to $x^{\rho}$ and it leads to $F(x)=\varepsilon_{i} - \alpha_{i} x^{\rho} - \beta_{i}x$. \textcolor{black}{This relation enables the derivation of control parameters that are physically meaningful and consistent with the control requirement.}
\end{lemma}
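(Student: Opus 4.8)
The plan is to regard the bounded sliding variable as an exogenous input to the scalar error dynamics and then apply a standard Lyapunov ultimate-boundedness argument. First I would isolate the $i$th component of the definition in \eqref{eq:sOrb} and solve for the error rate, obtaining $\dot{r}_{ei} = s_{i} - \alpha_{i}|r_{ei}|^{\rho}\textnormal{sgn}(r_{ei}) - \beta_{i}r_{ei}$, where the hypothesis supplies the bound $|s_{i}(t)| \leq \varepsilon_{i}$. This reduces the claim to a one-dimensional boundedness question driven by the bounded ``disturbance'' $s_{i}$.

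Next I would take the Lyapunov candidate $V = \frac{1}{2}r_{ei}^{2}$ and differentiate it along these dynamics. Using the identities $r_{ei}\,\textnormal{sgn}(r_{ei}) = |r_{ei}|$ and $r_{ei}|r_{ei}|^{\rho}\textnormal{sgn}(r_{ei}) = |r_{ei}|^{\rho+1}$, and bounding the cross term by $r_{ei}s_{i} \leq |r_{ei}|\,\varepsilon_{i}$, I would factor out $|r_{ei}|$ to reach $\dot{V} \leq |r_{ei}|\,F(|r_{ei}|)$ with $F(x) = \varepsilon_{i} - \alpha_{i}x^{\rho} - \beta_{i}x$. This is exactly the function named in the statement, and the reduction of $\textnormal{sig}^{\rho}$ to an ordinary power is legitimate here because $F$ is evaluated only at the nonnegative magnitude $|r_{ei}|$ and at the positive constant $R_{ei}$.

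The remaining step is the scalar analysis of $F$ on $[0,\infty)$. Since $F(0) = \varepsilon_{i} > 0$, $F'(x) = -\alpha_{i}\rho x^{\rho-1} - \beta_{i} < 0$ for every $x > 0$ (both gains positive and $\rho > 1$), and $F(x) \to -\infty$ as $x \to \infty$, the intermediate value theorem together with strict monotonicity yields a unique positive root $R_{ei}$ of $F(R_{ei}) = 0$. Consequently, whenever $|r_{ei}| > R_{ei}$ one has $F(|r_{ei}|) < 0$ and hence $\dot{V} < 0$, so $V$ (and therefore $|r_{ei}|$) strictly decreases outside the interval $[-R_{ei}, R_{ei}]$; this renders the set $\{|r_{ei}| \leq R_{ei}\}$ positively invariant and attractive, which is the asserted bound $|r_{ei}(t)| \leq R_{ei}$.

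I expect the only real subtlety to be the sign bookkeeping that collapses $\textnormal{sig}^{\rho}$ to a plain power once the argument is known nonnegative, together with the argument that the single root $R_{ei}$ functions as an invariant ultimate bound rather than a mere equilibrium; both are settled immediately by the strict monotonicity of $F$, so the proof should be short once the scalar dynamics and the Lyapunov derivative are written down.
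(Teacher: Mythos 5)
Your proof is correct, and it rests on the same backbone as the paper's: the Lyapunov function $V = \tfrac{1}{2}r_{ei}^{2}$, the strict monotonicity of $F(x)=\varepsilon_{i}-\alpha_{i}x^{\rho}-\beta_{i}x$ together with $F(0)=\varepsilon_{i}>0$ to extract a unique positive root $R_{ei}$, and the conclusion $\dot{V}<0$ whenever $\left\lvert r_{ei}\right\rvert > R_{ei}$. Where you genuinely differ is the sign bookkeeping. The paper never forms a single inequality for $\dot{V}$: it sandwiches the error rate as $G(r_{ei}) \leq \dot{r}_{ei} \leq F(r_{ei})$ with $G(x)=-\varepsilon_{i}-\alpha_{i}\,\textnormal{sig}^{\rho}x-\beta_{i}x$, analyzes the regions $r_{ei}>0$ and $r_{ei}<0$ separately (using $F$ to force $\dot{r}_{ei}<0$ on the right and $G$ to force $\dot{r}_{ei}>0$ on the left), and then needs an extra symmetry step $G(R_{ei,n})=-F(-R_{ei,n})=0$ to identify the negative root of $G$ with $-R_{ei}$ so that a single bound emerges. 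Your estimate $r_{ei}s_{i} \leq \left\lvert r_{ei}\right\rvert \varepsilon_{i}$ combined with $r_{ei}\,\textnormal{sig}^{\rho}(r_{ei}) = \left\lvert r_{ei}\right\rvert^{\rho+1}$ collapses both half-lines into the single inequality $\dot{V} \leq \left\lvert r_{ei}\right\rvert F\!\left(\left\lvert r_{ei}\right\rvert\right)$, which eliminates $G$, the case split, and the symmetry argument at once --- a leaner route to the same conclusion, at the cost of hiding the explicit two-sided differential inequality that the paper reuses verbatim in Lemma \ref{lemBoundOnQe}. One caveat applies equally to both arguments: what is actually established is that the set $\{\left\lvert r_{ei}\right\rvert \leq R_{ei}\}$ is positively invariant and attracting (an ultimate bound), not that $\left\lvert r_{ei}(t)\right\rvert \leq R_{ei}$ holds for all $t$ irrespective of the initial error; your closing sentence phrases this correctly, and the paper's ``domain of attraction'' language carries the same meaning.
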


\begin{proof}
        
    Let us define the Lyapunov function $V=\frac{1}{2}r_{ei}^{2}$, then its time derivative is derived as
    \begin{equation} \label{eq:tmp15}
        \dot{V}=r_{ei}\dot{r}_{ei}
    \end{equation}
    The bound on the sliding surface $\left|s_{i}\right|\leq\varepsilon_{i}$ constrains $\dot{r}_{ei}$ as
    \begin{equation} \label{eq:tmp18}
        G(r_{ei}) \leq \dot{r}_{ei} \leq F(r_{ei}) \\
    \end{equation}
    where $F(x)$ and $G(x)$ are defined as in \eqref{eq:tmp19} and \eqref{eq:tmp20}
    \begin{equation} \label{eq:tmp19}
        F(x) = \varepsilon_{i} - \alpha_{i} \textnormal{sig}^{\rho} x - \beta_{i} x
    \end{equation}
    \begin{equation} \label{eq:tmp20}
        G(x) = -\varepsilon_{i} - \alpha_{i} \textnormal{sig}^{\rho} x - \beta_{i}x
    \end{equation}
    The description of $F(x)$ depends on the sign of $x$ and achieves its derivative $\frac{dF(x)}{dx}$ as
    \begin{equation}\label{eq:lemOrbfx}
        F(x) = \begin{cases}
            \varepsilon_{i} + \alpha_{i}(-x)^{\rho} + \beta_{i}(-x) >0 & \text{if $x<0$}, \\
            \varepsilon_{i} >0 & \text{if $x=0$}, \\
             \varepsilon_{i} - \alpha_{i}x^{\rho} - \beta_{i}x & \text{if $x>0$}. \\
        \end{cases}
    \end{equation}
    \begin{equation}\label{eq:lemOrbfxPr}
        \frac{dF(x)}{dx} = - \rho\alpha_{i}\left|x\right|^{\rho-1} - \beta_{i} < 0 \quad\quad\quad \text{for $\forall x \in \mathbb{R}$}.
    \end{equation}
    Since $F(0)>0$, $F(x)$ is positive for $\forall x \in \mathbb{R}^{-}$ and monotonically decreasing for $\forall x \in \mathbb{R}$, it is obvious that there exits an unique positive solution $R_{ei,p} \in \mathbb{R}^{+}$ such that $F(R_{ei,p})=0$. 
    The sign of $F(x)$ is calculated as
    \begin{equation}\label{eq:tmp16}
        \begin{cases}
        F(x) > 0 & \text{for $x < R_{ei,p}$}, \\
        F(x) = 0 & \text{for $x = R_{ei,p} > 0$}, \\
        F(x) < 0 & \text{for $x > R_{ei,p} > 0$}. \\
        \end{cases}
    \end{equation}

    On the other hand, $G(x) = F(x) - 2 \varepsilon_{i}$, which means $G(x)$ is translated $F(x)$ along the $y$ direction by $-2\varepsilon_{i}$. 
    \begin{equation}\label{eq:lemOrbfx2}
        G(x) = \begin{cases}
            - \varepsilon_{i} + \alpha_{i}(-x)^{\rho} + \beta_{i}(-x) & \text{if $x<0$}, \\
            - \varepsilon_{i} < 0 & \text{if $x=0$}, \\
             - \varepsilon_{i} - \alpha_{i}x^{\rho} - \beta_{i}x < 0 & \text{if $x>0$}. \\
        \end{cases}
    \end{equation}
    \begin{equation}\label{eq:lemOrbfxPr2}
        \frac{dG(x)}{dx} = \frac{dF(x)}{dx} < 0.
    \end{equation}
    Likewise, the sign of $G(x)$ is determined according to $R_{ei,n}\in \mathbb{R}^{-}$ which is the unique solution of $G(R_{ei,n})=0$.
    \begin{equation}\label{eq:tmp17}
        \begin{cases}
        G(x) > 0 & \text{for $x < R_{ei,n} < 0$}, \\
        G(x) = 0 & \text{for $x = R_{ei,n} < 0$}, \\
        G(x) < 0 & \text{for $x > R_{ei,n}$}. \\
        \end{cases}
    \end{equation}
    As a result, substitution of \eqref{eq:tmp16}, \eqref{eq:tmp17} to \eqref{eq:tmp18} derives 
    \begin{equation}
        \begin{cases}
            G(r_{ei}) \leq \dot{r}_{ei} \leq F(r_{ei}) < 0 & \text{if $0 < R_{ei,p} < r_{ei}$} \\
            0 < G(r_{ei}) \leq \dot{r}_{ei} \leq F(r_{ei}) & \text{if $r_{ei} < R_{ei,n} < 0$} \\
        \end{cases}
    \end{equation}

    The parallel translation between $F(x)$ and $G(x)$ derives the relationship between $R_{ei,p}$ and $R_{ei,n}$ as
    \begin{equation}
    \begin{aligned}
        G(R_{ei,n}) &= - \varepsilon_{i} - \alpha_{i} \textnormal{sig}^{\rho} R_{ei,n} - \beta_{i} R_{ei,n} \\
        & = - \varepsilon_{i} + \alpha_{i} (-R_{ei,n})^{\rho} + \beta_{i} (-R_{ei,n}) \\
        & = - F(-R_{ei,n}) \\
        & = 0
    \end{aligned}
    \end{equation}
    Because $F(R_{ei,p})=F(- R_{ei,n})=0$ and $F(x)=0$ has the unique solution, it implies that $R_{ei,p} = - R_{ei,n} = R_{ei}$.
    The time derivative of the Lyapunov function in \eqref{eq:tmp15} become negative when $ \left| r_{ei} \right| > R_{ei}$ as
    \begin{equation}\label{eq:tmp21}
        \dot{V} = r_{ei} \dot{r}_{ei} < 0
    \end{equation}
    which proves that the region $\mathbb{L}_{1} = \{r_{ei} \in \mathbb{R}\, |\, \left| r_{ei} \right| \leq R_{ei}\}$ is the domain of attraction. \figurename{\ref{fig:boundOnRe}} explains this region of attraction as the shaded region satisfies \eqref{eq:tmp21}.
\begin{figure}[b]
\begin{center}
\includegraphics[width=0.9\linewidth]{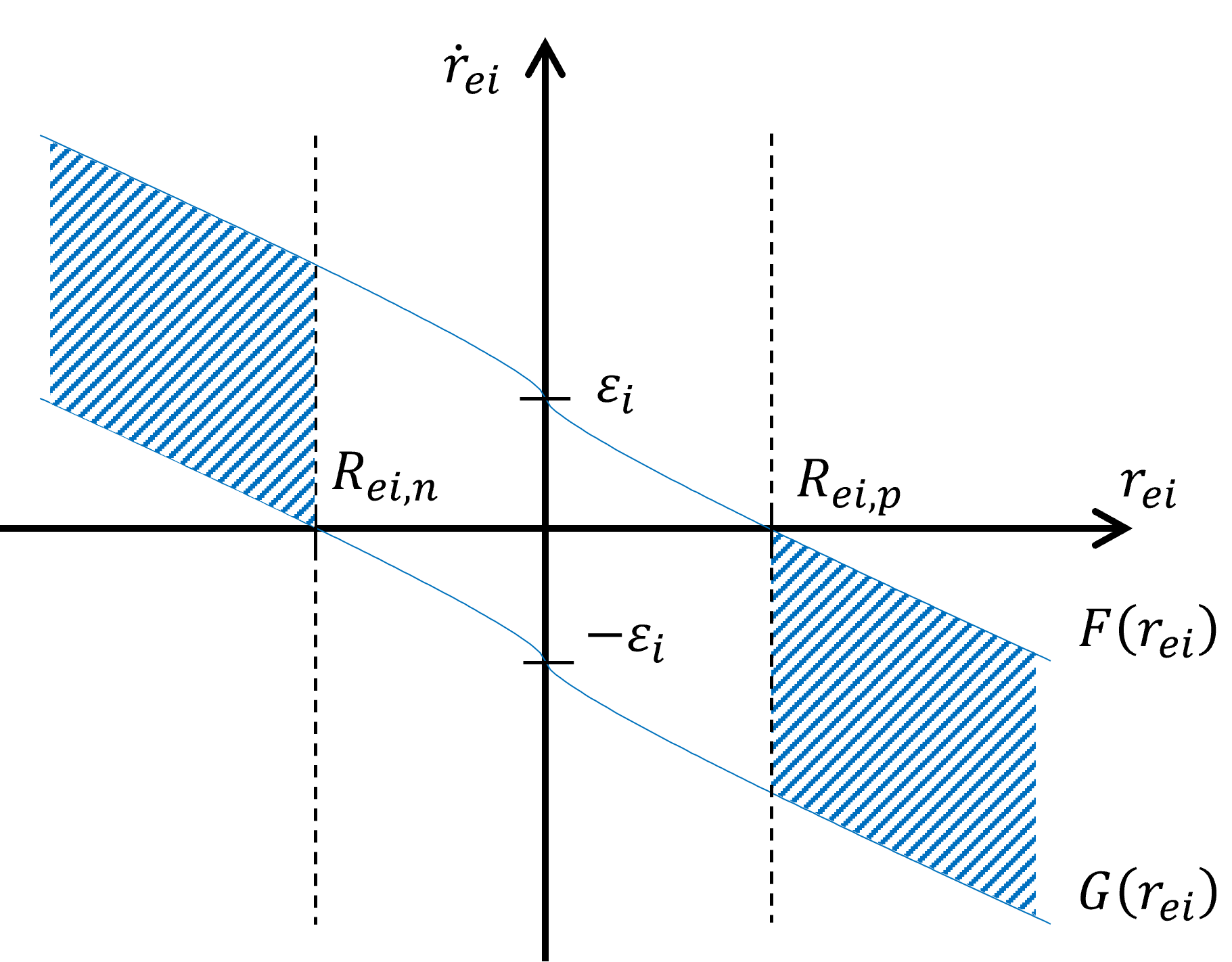}
\end{center}
\caption{Bound on $r_{ei}$ in the region $\left\lvert s_{i}\right\lvert \leq \varepsilon_{i}$ }
\label{fig:boundOnRe}
\end{figure}

\end{proof}

\subsection{Attitude Control System}
\label{s:4-2ACS}

\emph{\textbf{Sliding variable design}}

The sliding variable for the attitude control system is designed based on nonsingular fast terminal sliding mode as follows:
\begin{equation}{\label{eq:sAtt}}
    \textbf{\textit{s}}_{att}=\boldsymbol{\omega}_{e}
    +\boldsymbol{\upalpha}_{att}\textup{sig}^{\rho_{att}}\textbf{\textit{q}}_{ev}
    +\boldsymbol{\upbeta}_{att}\textbf{\textit{q}}_{ev}    
\end{equation}
where the definition of \begin{math}\textnormal{sig}^{\rho} \boldsymbol{\xi}\end{math}, $\boldsymbol{\upalpha}$, and $\boldsymbol{\upbeta}$ are the same with the one in \eqref{eq:sOrb}, but related to the ACS.

\begin{proposition}\label{proACSErr}
The quaternion and angular velocity errors (\begin{math}\textbf{\textit{q}}_{ev}, \boldsymbol{\omega}_{e}\end{math}) are asymptotically stable if the sliding variable
\begin{math}\textbf{\textit{s}}\end{math} in \eqref{eq:sAtt} is zero.
\end{proposition}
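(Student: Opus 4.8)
The plan is to follow the template of the proof of Proposition \ref{proROCSErr}, replacing the purely algebraic velocity--position relation by the quaternion kinematics of \eqref{eq:kmtAttErr}. I would take the quadratic Lyapunov candidate $V=\frac{1}{2}\textbf{\textit{q}}_{ev}^{T}\textbf{\textit{q}}_{ev}$, the direct attitude analogue of \eqref{eq:VOrb1}. Its time derivative is $\dot{V}=\textbf{\textit{q}}_{ev}^{T}\dot{\textbf{\textit{q}}}_{ev}$, and substituting the error kinematics $\dot{\textbf{\textit{q}}}_{ev}=\frac{1}{2}\left(q_{e4}\textbf{I}_{3\times3}+\textbf{\textit{q}}_{ev}^{\times}\right)\boldsymbol{\omega}_{e}$ from \eqref{eq:kmtAttErr} gives $\dot{V}=\frac{1}{2}\textbf{\textit{q}}_{ev}^{T}\left(q_{e4}\textbf{I}_{3\times3}+\textbf{\textit{q}}_{ev}^{\times}\right)\boldsymbol{\omega}_{e}$.

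The first simplification I would exploit is that the skew-symmetric contribution vanishes: since $\textbf{\textit{q}}_{ev}^{\times}\boldsymbol{\omega}_{e}=\textbf{\textit{q}}_{ev}\times\boldsymbol{\omega}_{e}$ is orthogonal to $\textbf{\textit{q}}_{ev}$, we have $\textbf{\textit{q}}_{ev}^{T}\textbf{\textit{q}}_{ev}^{\times}\boldsymbol{\omega}_{e}=0$, leaving $\dot{V}=\frac{1}{2}q_{e4}\,\textbf{\textit{q}}_{ev}^{T}\boldsymbol{\omega}_{e}$. Next I would invoke the hypothesis $\textbf{\textit{s}}=\textbf{0}$, which by \eqref{eq:sAtt} forces $\boldsymbol{\omega}_{e}=-\boldsymbol{\upalpha}\,\textup{sig}^{\rho}\textbf{\textit{q}}_{ev}-\boldsymbol{\upbeta}\,\textbf{\textit{q}}_{ev}$. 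Substituting this yields $\dot{V}=-\frac{1}{2}q_{e4}\left(\textbf{\textit{q}}_{ev}^{T}\boldsymbol{\upalpha}\,\textup{sig}^{\rho}\textbf{\textit{q}}_{ev}+\textbf{\textit{q}}_{ev}^{T}\boldsymbol{\upbeta}\,\textbf{\textit{q}}_{ev}\right)$, and because $\boldsymbol{\upalpha},\boldsymbol{\upbeta}$ are diagonal with positive entries, both quadratic forms are nonnegative and, exactly as in \eqref{eq:VOrbDot1}, can be bounded below, so that $\dot{V}\leq-\frac{1}{2}q_{e4}\left(\alpha_{min}\left\lVert\textbf{\textit{q}}_{ev}\right\rVert_{\rho+1}^{\rho+1}+\beta_{min}\left\lVert\textbf{\textit{q}}_{ev}\right\rVert_{2}^{2}\right)$.

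The main obstacle, and the only genuine difference from the orbital case, is the scalar factor $q_{e4}$: the sign of $\dot{V}$ is governed by the sign of $q_{e4}$, which is not fixed a priori because $\pm\textbf{\textit{q}}$ describe the same orientation. Here I would appeal to the sign convention already adopted in the paper to preclude the unwinding phenomenon, namely that the positive branch in \eqref{eq:rot2q4} enforces $q_{e4}>0$; since $\dot{V}\leq0$ keeps $V$ nonincreasing, the trajectory remains in this region where $\dot{V}$ is sign-definite. Under $q_{e4}>0$ the derivative is negative definite in $\textbf{\textit{q}}_{ev}$, vanishing only at $\textbf{\textit{q}}_{ev}=\textbf{0}$ (equivalently $q_{e4}=1$), so $\textbf{\textit{q}}_{ev}\to\textbf{0}$ asymptotically. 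Finally, passing this limit back through $\textbf{\textit{s}}=\textbf{0}$ shows $\boldsymbol{\omega}_{e}=-\boldsymbol{\upalpha}\,\textup{sig}^{\rho}\textbf{\textit{q}}_{ev}-\boldsymbol{\upbeta}\,\textbf{\textit{q}}_{ev}\to\textbf{0}$, giving asymptotic stability of both error states and completing the proof.
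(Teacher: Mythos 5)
Your proof is correct, but it follows a genuinely different route from the paper's. The paper sidesteps what you call the ``main obstacle'' --- the $q_{e4}$ prefactor --- by choosing a different Lyapunov candidate, $V=\textbf{\textit{q}}_{ev}^{T}\textbf{\textit{q}}_{ev}+\left(1-q_{e4}\right)^{2}$ (borrowed from \cite{YJCheon}), which on the unit quaternion sphere equals $2\left(1-q_{e4}\right)$; along the kinematics \eqref{eq:kmtAttErr} its derivative is exactly $\dot{V}=-2\dot{q}_{e4}=\textbf{\textit{q}}_{ev}^{T}\boldsymbol{\omega}_{e}$, with no $q_{e4}$ factor and hence no sign issue, so substituting the sliding constraint immediately yields $\dot{V}\leq-\alpha_{min}\left\lVert\textbf{\textit{q}}_{ev}\right\rVert_{\rho+1}^{\rho+1}-\beta_{min}\left\lVert\textbf{\textit{q}}_{ev}\right\rVert_{2}^{2}$, negative definite in $\textbf{\textit{q}}_{ev}$ regardless of the sign of $q_{e4}$. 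Your quadratic candidate $\frac{1}{2}\textbf{\textit{q}}_{ev}^{T}\textbf{\textit{q}}_{ev}$ keeps the argument uniform with Proposition \ref{proROCSErr}, but it forces you to (i) invoke the anti-unwinding convention $q_{e4}>0$ and (ii) prove forward invariance of that hemisphere. On point (ii) your wording is circular as written --- ``$\dot{V}\leq0$ keeps $V$ nonincreasing'' is only available while $q_{e4}>0$, which is precisely what you are trying to establish --- so you should make the bootstrap explicit: on any interval where $q_{e4}>0$ one has $\dot{V}\leq 0$, hence $q_{e4}^{2}=1-2V$ is nondecreasing there, hence $q_{e4}\geq q_{e4}(0)>0$, and a first crossing of $q_{e4}=0$ cannot occur. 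With that repair (and noting that your decay estimate then carries the factor $q_{e4}(0)$, so the rate degrades as $q_{e4}(0)\to 0^{+}$), your argument is sound; the trade-off is that your conclusion is conditional on starting in the open upper hemisphere, whereas the paper's candidate is a genuine distance to the identity $\left(\textbf{\textit{q}}_{ev},q_{e4}\right)=\left(\textbf{0},1\right)$ and excludes only the antipodal point.
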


\begin{proof}

    Let us define the Lyapunov candidate as \cite{YJCheon}
    \begin{equation}
        V=\textbf{\textit{q}}_{ev}^{T}\textbf{\textit{q}}_{ev} + (1-q_{e4})^2,
    \end{equation} then its time derivative is derived by substituting \eqref{eq:kmtAttErr} as
    \begin{equation}{\label{eq:tmp3}}
        \dot{V}=-2\dot{q}_{e4}=\textbf{\textit{q}}_{ev}^{T}\boldsymbol{\omega}_{e}
    \end{equation}
    Since the attitude sliding variable is assumed to be zero, then it is derived by substituting \eqref{eq:sAtt}
    \begin{equation}
    \begin{aligned}
        \dot{V}&\leq \textbf{\textit{q}}_{ev}^{T}
        \left(-\boldsymbol{\upalpha}\textup{sig}^{\rho}\textbf{\textit{q}}_{ev}-\boldsymbol{\upbeta}\textbf{\textit{q}}_{ev}\right)\\
        &\leq-\alpha_{min}\textbf{\textit{q}}_{ev}^{T}\textup{sig}^{\rho}\textbf{\textit{q}}_{ev}
        -\beta_{min}\textbf{\textit{q}}_{ev}^{T}\textbf{\textit{q}}_{ev}\\
        &=-\alpha_{min}\left\lVert\textbf{\textit{q}}_{ev}\right\lVert_{\rho+1}^{\rho+1}
        -\beta_{min}\left\lVert\textbf{\textit{q}}_{ev}\right\lVert_{2}^{2}
    \end{aligned}
    \end{equation}
    where \begin{math}\alpha_{min}=\min\limits_{i=1,2,3}\alpha_{i}\end{math} and \begin{math}\beta_{min}=\min\limits_{i=1,2,3}\beta_{i}\end{math}.
    
    Therefore, when the attitude sliding variable is retained at zero, the quaternion error converges to zero. If the sliding variable and the quaternion error are zero, the angular velocity error is also zero according to the definition of the sliding variable in \eqref{eq:sAtt}. As a result, it proves the asymptotic stability of the quaternion and angular velocity errors when the sliding variable is zero. 
\end{proof}

\begin{theorem}\label{thmACSCtrl}
Consider the attitude control system that is nonlinear and uncertain \eqref{eq:dynAtt} with the adaptive smooth feedback control \eqref{eq:ctrlLaw} and \eqref{eq:adtLaw}. The sliding manifold is designed as \eqref{eq:sAtt}. Then, when the condition \begin{math}\left\lVert\textbf{\textit{s}}\right\rVert_{2}>\varepsilon\end{math} is met, the attitude states converge to the region \begin{math}\left\lVert\textbf{\textit{s}}\right\rVert_{2}\leq\varepsilon\end{math} in finite time.
\end{theorem}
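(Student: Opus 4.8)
The strategy is to reproduce the argument of Theorem~\ref{thmROCSCtrl}, with the scalar mass uncertainty $1/m$ replaced by the matrix inertia uncertainty and with $\textbf{J}_{0}$ taking the role of the nominal control matrix $\tilde{\textbf{b}}_{0}$ of Lemma~\ref{lemCtrlLaw} (recall $\tilde{\textbf{b}}_{0}^{-1}=\textbf{J}_{0}^{-1}$ for the ACS). First I would adopt the augmented Lyapunov candidate
\begin{equation}
    V=\frac{1}{2}\textbf{\textit{s}}^{T}\textbf{J}_{0}\textbf{\textit{s}}+\frac{1}{2\gamma_{att}}\left(K-K^{*}\right)^{2},
\end{equation}
where $\gamma_{att}>0$ and $K^{*}$ is a sufficiently large positive constant with $K^{*}\geq K(t)$ for all $t$ (guaranteed by Lemma~\ref{Kbounded}). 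Since $\textbf{J}_{0}$ is real-symmetric and positive definite, the same spectral-theorem estimates used in Lemma~\ref{lemCtrlLaw} apply, and the time derivative reads $\dot{V}=\textbf{\textit{s}}^{T}\textbf{J}_{0}\dot{\textbf{\textit{s}}}+\frac{1}{\gamma_{att}}(K-K^{*})\dot{K}$.

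Next I would differentiate the attitude sliding variable \eqref{eq:sAtt}, obtaining $\dot{\textbf{\textit{s}}}=\dot{\boldsymbol{\omega}}_{e}+\boldsymbol{\upalpha}\rho\textbf{Q}\dot{\textbf{\textit{q}}}_{ev}+\boldsymbol{\upbeta}\dot{\textbf{\textit{q}}}_{ev}$ with the diagonal matrix $\textbf{Q}=diag\{|q_{ev1}|^{\rho-1},|q_{ev2}|^{\rho-1},|q_{ev3}|^{\rho-1}\}$, in direct analogy with $\textbf{R}$ in \eqref{eq:sOrbDot}. Substituting the error dynamics \eqref{eq:dynAttErr} for $\dot{\boldsymbol{\omega}}_{e}$ and the error kinematics \eqref{eq:kmtAttErr} for $\dot{\textbf{\textit{q}}}_{ev}$ introduces the control through $\textbf{J}^{-1}\textbf{\textit{u}}_{att}$, which I would split using $\textbf{J}=\textbf{J}_{0}+\delta\textbf{J}$ from \eqref{eq:uncertainMoI} as $\textbf{J}^{-1}\textbf{\textit{u}}_{att}=\textbf{J}_{0}^{-1}\textbf{\textit{u}}_{att}+(\textbf{J}^{-1}-\textbf{J}_{0}^{-1})\textbf{\textit{u}}_{att}$. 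Collecting the gyroscopic and reference-coupling torques, the external disturbance $\textbf{J}^{-1}\textbf{\textit{d}}_{att}$, the inertia-mismatch term $(\textbf{J}^{-1}-\textbf{J}_{0}^{-1})\textbf{\textit{u}}_{att}$, and the bounded kinematic terms $\boldsymbol{\upalpha}\rho\textbf{Q}\dot{\textbf{\textit{q}}}_{ev}+\boldsymbol{\upbeta}\dot{\textbf{\textit{q}}}_{ev}$ into a single lumped disturbance $\overline{\textbf{\textit{d}}}$ yields the Lemma~\ref{lemCtrlLaw} form $\dot{\textbf{\textit{s}}}=\textbf{J}_{0}^{-1}\textbf{\textit{u}}_{att}+\overline{\textbf{\textit{d}}}$ with an unknown but finite bound $\left\lVert\overline{\textbf{\textit{d}}}\right\rVert_{2}<\overline{D}$; boundedness follows from Assumption~3, the boundedness of the unit-quaternion components, and the bound on $K$.

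With this representation and the control law \eqref{eq:ctrlLaw}, the cross term becomes $\textbf{\textit{s}}^{T}\textbf{J}_{0}\dot{\textbf{\textit{s}}}=-\frac{K}{\varepsilon}\left\lVert\textbf{\textit{s}}\right\rVert_{2}^{2}+\textbf{\textit{s}}^{T}\textbf{J}_{0}\overline{\textbf{\textit{d}}}$, and I would bound $\textbf{\textit{s}}^{T}\textbf{J}_{0}\overline{\textbf{\textit{d}}}\leq\lambda_{max}(\textbf{J}_{0})\overline{D}\left\lVert\textbf{\textit{s}}\right\rVert_{2}$. Choosing $K^{*}>\lambda_{max}(\textbf{J}_{0})\overline{D}$ (the attitude analogue of $K^{*}>m_{max}\overline{D}$) makes the residual gain term strictly negative in the region $\left\lVert\textbf{\textit{s}}\right\rVert_{2}>\varepsilon$. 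Introducing an auxiliary constant $\lambda_{K}>0$ to separate the $\left\lvert K-K^{*}\right\rvert$ contribution then reproduces, step for step, the inequality chain of Theorem~\ref{thmROCSCtrl} and terminates in $\dot{V}<-\lambda V^{1/2}-L$ with $\lambda>0$, where the residual $L$ is rendered positive by selecting $\gamma_{att}$ below an explicit threshold of the same form as the $\gamma_{orb}$ bound (with $m_{min}$ replaced by $\lambda_{min}(\textbf{J}_{0})$), using $\dot{K}>0$ whenever $\left\lVert\textbf{\textit{s}}\right\rVert_{2}>\varepsilon$. Once $L>0$, we obtain $\dot{V}<-\lambda V^{1/2}$, and Lemma~\ref{lemVdotV} delivers finite-time convergence into $\left\lVert\textbf{\textit{s}}\right\rVert_{2}\leq\varepsilon$.

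I expect the principal obstacle to be the matrix inertia uncertainty rather than the scalar mass uncertainty of the ROCS case: because $\textbf{J}^{-1}$ does not factor cleanly out of the control term, the mismatch $(\textbf{J}^{-1}-\textbf{J}_{0}^{-1})\textbf{\textit{u}}_{att}$ must be argued bounded, and since $\textbf{\textit{u}}_{att}=-\frac{K}{\varepsilon}\textbf{\textit{s}}$ this introduces a self-referential dependence that must be closed using $K\leq K^{*}$ and the fact that $\textbf{\textit{s}}$ remains bounded while $V$ decreases. Confirming that $\textbf{J}_{0}$ legitimately plays the symmetric positive-definite role required by Lemma~\ref{lemCtrlLaw}, and that every gyroscopic and kinematic term is genuinely bounded, are the remaining supporting points.
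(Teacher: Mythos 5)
Your proposal is correct and follows essentially the same route as the paper's proof: the same $\textbf{J}_{0}$-weighted augmented Lyapunov candidate, the same split of $\textbf{J}^{-1}$ into nominal plus mismatch (your $(\textbf{J}^{-1}-\textbf{J}_{0}^{-1})\textbf{\textit{u}}_{att}$ is exactly the paper's $\delta\hat{\textbf{J}}$ term), the same lumping of gyroscopic, kinematic, external-disturbance, and mismatch terms into a bounded $\overline{\textbf{\textit{d}}}$, the same spectral-theorem bound, and the same auxiliary-constant inequality chain ending in $\dot{V}<-\beta V^{1/2}-\kappa$ with the $\gamma_{att}$ threshold enforced via $\dot{K}>0$ and Lemma~\ref{lemVdotV}. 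The only cosmetic difference is that the paper's threshold, $\gamma_{att}<\eta K_{0}/\left(\beta_{K}+\varepsilon\right)$, contains no inertia eigenvalue because $\textbf{\textit{s}}^{T}\textbf{J}_{0}\textbf{J}_{0}^{-1}\textbf{\textit{u}}$ collapses to $\textbf{\textit{s}}^{T}\textbf{\textit{u}}$, so the $\lambda_{min}(\textbf{J}_{0})$ substitution you anticipated turns out to be unnecessary.
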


\begin{proof}
    The Lyapunov candidate is defined as
\begin{equation}\label{eq:VAtt2}
    V=\frac{1}{2}\textbf{\textit{s}}^{T}\textbf{J}_{0}\textbf{\textit{s}}+\frac{1}{2\gamma_{att}}\left(K-K^{*}\right)^{2}
\end{equation}
where \begin{math}\gamma_{att}\end{math} is a positive constant and \begin{math}
    K^{*}
\end{math} is also a positive constant satisfying \begin{math}K(t) \leq K^{*}\end{math}. The variable $\gamma $ implies that it is the attitude variable from here on. 
By the definition of the sliding variable \eqref{eq:sAtt}, its time derivative is described as
\begin{equation}\label{eq:sAttDot}    \dot{\textbf{\textit{s}}}=\dot{\boldsymbol{\omega}}_{e}+\left(\rho\boldsymbol{\upalpha}\boldsymbol{\upTheta}+\boldsymbol{\upbeta}\right)\dot{\textbf{\textit{q}}}_{ev}
\end{equation}
where \begin{math}
    \boldsymbol{\upTheta}=diag\left\{\left\lvert q_{e1}\right\rvert^{\rho-1}, \left\lvert q_{e2}\right\rvert^{\rho-1}, \left\lvert q_{e3}\right\rvert^{\rho-1}\right\}
\end{math} is positive semi-definite.

The moments of inertia matrix is decomposed as in \eqref{eq:uncertainMoI}, and the inverse is derived as
\begin{equation}\label{eq:JInvDecompose}
\textbf{J}^{-1}=\left(\textbf{J}_{0}+\delta\textbf{J}\right)^{-1}=\textbf{J}_{0}^{-1}+\delta\hat{\textbf{J}}
\end{equation}
where \begin{math}
    \delta\hat{\textbf{J}}=-\textbf{J}_{0}^{-1}\delta\textbf{J}\left(\textbf{I}_{3\times3}+\textbf{J}_{0}^{-1}\delta\textbf{J}\right)^{-1}\textbf{J}_{0}^{-1}
\end{math} \cite{PMTiwariAST,KMa}.

\noindent Hence, the substitution of \eqref{eq:JInvDecompose} into \eqref{eq:dynAttErr} leads to the decomposition of $\dot{\boldsymbol{\omega}_{e}}$ in \eqref{eq:sAttDot} as 
\begin{equation}\label{eq:tmp4}  
    \begin{aligned}
    &\textbf{J}^{-1}\left[-\left(\boldsymbol{\omega}_{e} + \textbf{C}^{b}_{r}\boldsymbol{\omega}_{r}\right) \times
    \left(\textbf{J}\left(\boldsymbol{\omega}_{e}+\textbf{C}^{b}_{r}\boldsymbol{\omega}_{r}\right)
    +\textbf{J}_{w}\boldsymbol{\omega}_{w}\right) \right.\\
    &\left. \quad+\textbf{\textit{u}} + \textbf{\textit{d}}
    +\textbf{J}\left(\boldsymbol{\omega}_{e}^{\times}\textbf{C}^{b}_{r}\boldsymbol{\omega}_{r}
    -\textbf{C}^{b}_{r}\dot{\boldsymbol{\omega}}_{r}\right)\right]\\
    &=\left(\textbf{J}_{0}^{-1}+\delta\hat{\textbf{J}}\right)
    \left[-\left(\boldsymbol{\omega}_{e} + \textbf{C}^{b}_{r}\boldsymbol{\omega}_{r}\right) \times\left(\left(\textbf{J}_{0}+\delta\textbf{J}\right)\left(\boldsymbol{\omega}_{e}+\textbf{C}^{b}_{r}\boldsymbol{\omega}_{r}\right) \right.\right.\\
    &\left.\left. \quad+\textbf{J}_{w}\boldsymbol{\omega}_{w}\right)+\textbf{\textit{u}} + \textbf{\textit{d}}+\left(\textbf{J}_{0}+\delta\textbf{J}\right)\left(\boldsymbol{\omega}_{e}^{\times}\textbf{C}^{b}_{r}\boldsymbol{\omega}_{r}
    -\textbf{C}^{b}_{r}\dot{\boldsymbol{\omega}}_{r}\right) \right]\\
    &=\textbf{J}_{0}^{-1}\left[-\left(\boldsymbol{\omega}_{e} + \textbf{C}^{b}_{r}\boldsymbol{\omega}_{r}\right)\times\textbf{J}_{0}\left(\boldsymbol{\omega}_{e} + \textbf{C}^{b}_{r}\boldsymbol{\omega}_{r}\right) \right.\\
    &\left. \quad-\left(\boldsymbol{\omega}_{e} + \textbf{C}^{b}_{r}\boldsymbol{\omega}_{r}\right)\times\textbf{J}_{w}\boldsymbol{\omega}_{w}
    +\textbf{\textit{u}}+\textbf{J}_{0}\left(\boldsymbol{\omega}_{e}^{\times}\textbf{C}_{r}^{b}\boldsymbol{\omega}_{r}-\textbf{C}_{r}^{b}\boldsymbol{\omega}_{r}\right)\right]\\
    &+\textbf{J}_{0}^{-1}\left[-\left(\boldsymbol{\omega}_{e} + \textbf{C}^{b}_{r}\boldsymbol{\omega}_{r}\right)\times\delta\textbf{J}\left(\boldsymbol{\omega}_{e} + \textbf{C}^{b}_{r}\boldsymbol{\omega}_{r}\right) \right.\\ &\left.\quad+\textbf{\textit{d}}+\delta\textbf{J}\left(\boldsymbol{\omega}_{e}^{\times}\textbf{C}_{r}^{b}\boldsymbol{\omega}_{r}-\textbf{C}_{r}^{b}\boldsymbol{\omega}_{r}\right)\right]\\
    &+\delta\hat{\textbf{J}}\left[ - \left( \boldsymbol{\omega}_{e} + \textbf{C}^{b}_{r}\boldsymbol{\omega}_{r} \right) \times\textbf{J}_{0}
    \left( \boldsymbol{\omega}_{e} + \textbf{C}^{b}_{r}\boldsymbol{\omega}_{r} \right) \right.\\
    &\left.\quad- \left( \boldsymbol{\omega}_{e} + \textbf{C}^{b}_{r}\boldsymbol{\omega}_{r} \right)\times\delta\textbf{J} \left( \boldsymbol{\omega}_{e} + \textbf{C}^{b}_{r}\boldsymbol{\omega}_{r} \right) \right.\\ 
    &\left.\quad-\left(\boldsymbol{\omega}_{e} + \textbf{C}^{b}_{r}\boldsymbol{\omega}_{r}\right)\times\textbf{J}_{w}\boldsymbol{\omega}_{w}+\textbf{\textit{u}}+\textbf{\textit{d}} \right.\\
    &\left. \quad+\textbf{J}_{0}\left(\boldsymbol{\omega}_{e}^{\times}\textbf{C}_{r}^{b}\boldsymbol{\omega}_{r}-\textbf{C}_{r}^{b}\boldsymbol{\omega}_{r}\right) \right.\\ 
    &\left.\quad+\delta\textbf{J}\left(\boldsymbol{\omega}_{e}^{\times}\textbf{C}_{r}^{b}\boldsymbol{\omega}_{r}-\textbf{C}_{r}^{b}\boldsymbol{\omega}_{r}\right) \right]\\
        &=\textbf{J}_{0}^{-1}\textbf{\textit{u}}+\textbf{J}_{0}^{-1}\left[-\left(\boldsymbol{\omega}_{e} + \textbf{C}^{b}_{r}\boldsymbol{\omega}_{r}\right)\times\textbf{J}_{0}\left(\boldsymbol{\omega}_{e} + \textbf{C}^{b}_{r}\boldsymbol{\omega}_{r}\right) \right.\\
    &\left. \quad-\left(\boldsymbol{\omega}_{e}+\textbf{C}^{b}_{r}\boldsymbol{\omega}_{r}\right)\times\textbf{J}_{w}\boldsymbol{\omega}_{w}
    +\textbf{J}_{0}\left(\boldsymbol{\omega}_{e}^{\times}\textbf{C}_{r}^{b}\boldsymbol{\omega}_{r}-\textbf{C}_{r}^{b}\boldsymbol{\omega}_{r}\right)\right]\\
    &\quad+\textbf{J}_{0}^{-1}\left[-\left(\boldsymbol{\omega}_{e} + \textbf{C}^{b}_{r}\boldsymbol{\omega}_{r}\right)\times\delta\textbf{J}\left(\boldsymbol{\omega}_{e} + \textbf{C}^{b}_{r}\boldsymbol{\omega}_{r}\right) \right.\\
    &\left. \quad+\textbf{\textit{d}}+\delta\textbf{J}\left(\boldsymbol{\omega}_{e}^{\times}\textbf{C}_{r}^{b}\boldsymbol{\omega}_{r}
    -\textbf{C}_{r}^{b}\boldsymbol{\omega}_{r}\right)\right]\\
    &\quad+\delta\hat{\textbf{J}}\left[-\left(\boldsymbol{\omega}_{e} + \textbf{C}^{b}_{r}\boldsymbol{\omega}_{r}\right)\times\textbf{J}_{0}\left(\boldsymbol{\omega}_{e} + \textbf{C}^{b}_{r}\boldsymbol{\omega}_{r}\right) \right.\\
    &\left.\quad-\left(\boldsymbol{\omega}_{e} + \textbf{C}^{b}_{r}\boldsymbol{\omega}_{r}\right)\times\delta\textbf{J}\left(\boldsymbol{\omega}_{e} + \textbf{C}^{b}_{r}\boldsymbol{\omega}_{r}\right) \right.\\
    &\left.\quad-\left(\boldsymbol{\omega}_{e} + \textbf{C}^{b}_{r}\boldsymbol{\omega}_{r}\right)\times\textbf{J}_{w}\boldsymbol{\omega}_{w}\right.\\   
    &\left.\quad + \textbf{\textit{u}}+\textbf{\textit{d}}+\textbf{J}_{0}\left(\boldsymbol{\omega}_{e}^{\times}\textbf{C}_{r}^{b}\boldsymbol{\omega}_{r}-\textbf{C}_{r}^{b}\boldsymbol{\omega}_{r}\right)\right]\\
    &=\textbf{J}_{0}^{-1}\textbf{\textit{u}}+\overline{\textbf{\textit{d}}}_{1}+\overline{\textbf{\textit{d}}}_{2}
    +\overline{\textbf{\textit{d}}}_{3}
    \end{aligned}
\end{equation}
where \begin{math}\overline{\textbf{\textit{d}}}_{1}\end{math}, \begin{math}\overline{\textbf{\textit{d}}}_{2}\end{math}, and
\begin{math}\overline{\textbf{\textit{d}}}_{3}\end{math} are the second, third, and fourth terms in \eqref{eq:tmp4} that are assumed to be unknown but bounded.

Therefore, \eqref{eq:sAttDot} substituted by \eqref{eq:kmtAttErr} and \eqref{eq:tmp4} is derived as
\begin{equation}
    \begin{aligned}    \dot{\textbf{\textit{s}}}&=\textbf{J}_{0}^{-1}\textbf{\textit{u}}+\overline{\textbf{\textit{d}}}_{1}+\overline{\textbf{\textit{d}}}_{2}+\overline{\textbf{\textit{d}}}_{3}\\
    &+\frac{1}{2}\left(\rho\boldsymbol{\upalpha}\boldsymbol{\upTheta}+\boldsymbol{\upbeta}\textbf{I}_{3\times3}\right)
    \left(q_{e4}\textbf{I}_{3\times3}+\textbf{\textit{q}}_{ev}^{\times}\right)\boldsymbol{\omega}_{e}\\    &=\textbf{J}_{0}^{-1}\textbf{\textit{u}}+\overline{\textbf{\textit{d}}}_{1}+\overline{\textbf{\textit{d}}}_{2}+\overline{\textbf{\textit{d}}}_{3}+\overline{\textbf{\textit{d}}}_{4}\\
    &=\textbf{J}_{0}^{-1}\textbf{\textit{u}}+\lambda_{max}^{-1}\overline{\textbf{\textit{d}}}
    \end{aligned}
\end{equation}
where $\lambda_{max}$ is the maximum eigenvalue of $\textbf{J}_{0}$ and $\lambda_{max}^{-1}\overline{\textbf{\textit{d}}}=overline{\textbf{\textit{d}}}_{1}+\overline{\textbf{\textit{d}}}_{2}+\overline{\textbf{\textit{d}}}_{3}+\overline{\textbf{\textit{d}}}_{4}
$.

As a result, the time derivative of the Lyapunov function in \eqref{eq:VAtt2} leads to 
\begin{equation}
    \begin{aligned}
        \dot{V}&=\textbf{\textit{s}}^{T}\textbf{J}_{0}\dot{\textbf{\textit{s}}}+\frac{1}{\gamma}\left(K-K^{*}\right)\dot{K}\\
        &=\textbf{\textit{s}}^{T}\left(\textbf{\textit{u}} +\lambda_{max}^{-1}\textbf{J}_{0}\overline{\textbf{\textit{d}}}\right)+\frac{1}{\gamma}\left(K-K^{*}\right)\dot{K}\\
        &=\textbf{\textit{s}}^{T}\left(-\frac{K}{\varepsilon}\textbf{\textit{s}}+\lambda_{max}^{-1}\textbf{J}_{0}\overline{\textbf{\textit{d}}}\right)+\frac{1}{\gamma}\left(K-K^{*}\right)\dot{K}\\
        &=\lambda_{max}^{-1}\textbf{\textit{s}}^{T}\textbf{J}_{0}\overline{\textbf{\textit{d}}}-\frac{K}{\varepsilon}\textbf{\textit{s}}^{T}\textbf{\textit{s}}+
        \frac{1}{\gamma}\left(K-K^{*}\right)\dot{K}
    \end{aligned}
\end{equation}

Since \begin{math}\textbf{J}_{0}\end{math} is a symmetric matrix, the inequality equation is derived by the spectral theorem
\begin{equation}
    \textbf{\textit{s}}^{T}\textbf{J}_{0}\overline{\textbf{\textit{d}}}\leq
    \lambda_{max}\left\lVert\textbf{\textit{s}}\right\lVert_{2}
    \left\lVert\overline{\textbf{\textit{d}}}\right\lVert_{2}
\end{equation}
The disturbance is assumed to have an unknown bound as
\begin{equation}
    \left\lVert \lambda_{max}^{-1}\overline{\textbf{\textit{d}}} \right\lVert_{2} = 
    \left\lVert \overline{\textbf{\textit{d}}}_{1}  + \overline{\textbf{\textit{d}}}_{2} + \overline{\textbf{\textit{d}}}_{3} + \overline{\textbf{\textit{d}}}_{3} \right\lVert_{2} < \lambda_{max}^{-1}\overline{D}
\end{equation}
Therefore, it follows that
\begin{equation}
\begin{aligned}
    \dot{V}&\leq\left\lVert \textbf{\textit{s}} \right\lVert_{2}
    \left\lVert\overline{\textbf{\textit{d}}}\right\lVert_{2}-\frac{K}{\varepsilon}\left\lVert\textbf{\textit{s}}\right\lVert_{2}^{2}
    +\frac{1}{\gamma}\left(K-K^{*}\right)\dot{K}\\
    &<\left\lVert\textbf{\textit{s}}\right\lVert_{2}\left( \overline{D} -\frac{K}{\varepsilon} \left\lVert\textbf{\textit{s}}\right\lVert_{2}\right)+\frac{1}{\gamma}(K-K^{*})\dot{K}
\end{aligned}
\end{equation}
where \begin{math}\left\lVert\overline{\textbf{\textit{d}}}\right\lVert_{2}<\overline{D}\end{math}.
In the region \begin{math}\left\lVert\textbf{\textit{s}}\right\lVert_{2}>\varepsilon\end{math}, it leads to
\begin{equation}
\begin{aligned}
    \dot{V}&<\left\lVert\textbf{\textit{s}}\right\lVert_{2}\left(\overline{D}-K\right)+\frac{1}{\gamma}\left(K-K^{*}\right)\dot{K}\\
    &=\left\lVert\textbf{\textit{s}}\right\lVert_{2}\left(\overline{D}-K^{*}\right)+\frac{1}{\gamma}\left(K-K^{*}\right)\dot{K}-\left\lVert\textbf{\textit{s}}\right\lVert_{2}\left(K-K^{*}\right)\\
    &=-\beta_{s}\left\lVert\textbf{\textit{s}}\right\lVert_{2}-\left\lvert K-K^{*}\right\rvert\left(-\left\lVert\textbf{\textit{s}}\right\lVert_{2}+\frac{1}{\gamma}\dot{K}\right)
\end{aligned}
\end{equation}
where \begin{math}\beta_{s}\triangleq K^{*}-\overline{D}>0\end{math} is defined and the equality condition \begin{math}K-K^{*}=-\left\lvert K-K^{*} \right\rvert\end{math} is utilized. Next, a parameter \begin{math}\beta_{K}>0\end{math} is introduced such that
\begin{equation}
\begin{aligned}
    \dot{V}&<-\beta_{s}\left\lVert\textbf{\textit{s}}\right\lVert_{2}-\left\lvert K-K^{*} \right\lvert \left( -\left\lVert \textbf{\textit{s}} \right\lVert_{2} + \frac{1}{\gamma} \dot{K} \right) \\
    &\quad + \beta_{K} \left\lvert K-K^{*} \right\lvert - \beta_{K} \left\lvert K-K^{*} \right\lvert \\
    &= - \beta_{s}\left\lVert\textbf{\textit{s}}\right\lVert_{2} - \left\lvert K-K^{*} \right\lvert \left( - \left\lVert \textbf{\textit{s}} \right\lVert_{2} + \frac{1}{\gamma}\dot{K} - \beta_{K} \right) \\
    &\quad -\beta_{K} \left\lvert K-K^{*} \right\lvert \\
    &= - \beta_{s}\left\lVert\textbf{\textit{s}}\right\lVert_{2} - \beta_{K} \left\lvert K-K^{*} \right\lvert - \kappa
\end{aligned}
\end{equation}
where \begin{math}K-K^{*}=-\left|K-K^{*}\right|\end{math} because \begin{math}K \leq K^{*}
\end{math} and \begin{math}
    \kappa\triangleq\left\lvert K-K^{*} \right\lvert \left( - \left\lVert \textbf{\textit{s}} \right\lVert_{2} + \frac{1}{\gamma}\dot{K}-\beta_{K}\right)
\end{math}.

As a consequence, it is derived as
\begin{equation}\label{eq:tmp12}
    \begin{aligned}
        \dot{V}&< - \beta_{s}\sqrt{2} \cdot \frac{\left\lVert \textbf{\textit{s}} \right\lVert_{2}}{\sqrt{2}} - \beta_{K}\sqrt{2\gamma} \cdot \frac{\left\lvert K-K^{*} \right\lvert}{\sqrt{2\gamma}} - \kappa \\ 
        &\leq - \min{\left( \beta_{s}\sqrt{2}, \beta_{K}\sqrt{2\gamma} \right)} \cdot \left( \frac{\left\lVert \textbf{\textit{s}} \right\lVert_{2}}{\sqrt{2}} + \frac{\left\lvert K-K^{*} \right\lvert}{\sqrt{2\gamma}} \right) - \kappa \\
        &\leq -\beta \cdot V^{1/2} - \kappa
    \end{aligned}
\end{equation}
 where \begin{math}
     \beta\triangleq\sqrt{2}\min{\left( \beta_{s}, \beta_{K}\sqrt{\gamma} \right)}>0
 \end{math}.
 
 The positiveness of parameter \begin{math}\kappa\end{math} suggests that the \begin{math}\gamma\end{math} must be selected properly. If \begin{math}\kappa>0\end{math}, it yields
 \begin{equation}
    \gamma_{att} < \frac{\dot{K}}{\beta_{K} + \left\lVert \textbf{\textit{s}} \right\lVert_{2}}
 \end{equation}
 Let us define $\Gamma_{att} (t) = \frac{\dot{K}}{\beta_{K} + \left\lVert \textbf{\textit{s}} \right\lVert_{2}}$ and the substitution of the adaptive law \eqref{eq:adtLaw} derives
\begin{equation}
\begin{aligned}
    \Gamma_{att}(t) &= \eta \cdot \frac{K(t) \left( \frac{\left\lVert \textbf{\textit{s}} \right\lVert_{2}}{\varepsilon} - 1 \right) + K_{0}}{\beta_{K} + \left\lVert \textbf{\textit{s}} \right\lVert_{2}} \\
    &\geq \eta \cdot \frac{K_{0} \cdot \frac{\left\lVert \textbf{\textit{s}} \right\lVert_{2}}{\varepsilon} }{\beta_{K} + \left\lVert \textbf{\textit{s}} \right\lVert_{2}} \\
    &= \eta \cdot \frac{ \frac{K_{0}}{\varepsilon} }{ \frac{\beta_{K}}{\left\lVert \textbf{\textit{s}} \right\lVert_{2}} + 1 } \\
\end{aligned}
\end{equation}
where $K(t) \geq K_{0}$ is utilized. 

In the region \begin{math}\left\lVert\textbf{\textit{s}}\right\lVert_{2}>\varepsilon\end{math}, $\Gamma_{att}(t)$ is derived as
 \begin{equation}
     \Gamma_{att}(t) > \eta \frac{ \frac{K_{0}}{\varepsilon} }{ \frac{\beta_{K}}{\varepsilon} + 1 } = \eta \frac{ K_{0} }{ \beta_{K} + \varepsilon}
 \end{equation}

 Therefore, parameter \begin{math}\gamma_{att}\end{math} must be chosen to satisfy
 \begin{equation}
     \gamma_{att} < \frac{\eta K_{0}}{\beta_{K} + \varepsilon}.
 \end{equation}

Equation \eqref{eq:tmp12} shows that the Lyapunov function satisfies \begin{math}
    \dot{V} \leq -\beta\cdot V^{1/2} - \kappa < -\beta\cdot V^{1/2}
\end{math}. Hence, the finite-time convergence of the sliding variable into the region \begin{math}
    \left\lVert \textbf{\textit{s}} \right\lVert_{2} \leq \varepsilon
\end{math} is guaranteed by Lemma \ref{lemVdotV}. 

\end{proof}

\begin{lemma}\label{lemBoundOnQe}
If the $i$th component of the attitude sliding variable is bounded by the region $\left\lvert s_{i} \left(t\right) \right\lvert \leq \varepsilon_{i}$, then the quaternion error is bounded as $\left\lvert q_{ei} \left(t\right) \right\lvert \leq Q_{ei}$ for $i=1,2,3$. $Q_{ei}$ is the unique solution of the equation $F(Q_{ei})=0$, where the function $F(x)$ is defined in \eqref{eq:tmp19}. \textcolor{black}{Similarly to Lemma \ref{lemBoundOnRe}, this can also be applied to derive control parameters that have physical significance and adhere to the control requirement.}
\end{lemma}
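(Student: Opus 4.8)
The plan is to reproduce the structure of the proof of Lemma~\ref{lemBoundOnRe}, exploiting the fact that the $i$th component of the attitude sliding variable \eqref{eq:sAtt},
\[
    s_{i} = \omega_{ei} + \alpha_{i}\,\textnormal{sig}^{\rho}q_{ei} + \beta_{i}q_{ei},
\]
has exactly the same algebraic form as the orbital one, with the angular velocity error $\omega_{ei}$ now playing the role that $\dot{r}_{ei}$ played before and the quaternion component $q_{ei}$ playing the role of $r_{ei}$. First I would use the hypothesis $\left\lvert s_{i}\right\rvert \leq \varepsilon_{i}$ to sandwich the angular velocity error as $G(q_{ei}) \leq \omega_{ei} \leq F(q_{ei})$, with $F$ and $G$ the functions defined in \eqref{eq:tmp19} and \eqref{eq:tmp20}; since these are literally the same objects as in Lemma~\ref{lemBoundOnRe}, no recomputation is needed.

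Next I would transcribe the sign analysis of $F$ and $G$. By \eqref{eq:lemOrbfxPr} the derivative $dF/dx = -\rho\alpha_{i}\left\lvert x\right\rvert^{\rho-1} - \beta_{i}$ is strictly negative for all $x$, and $F(0) = \varepsilon_{i} > 0$, so $F$ is strictly decreasing and possesses a unique positive root $Q_{ei}$. The relation $G(x) = F(x) - 2\varepsilon_{i} = -F(-x)$ then forces $G$ to have the unique root $-Q_{ei}$, so that $F(q_{ei}) < 0$ for $q_{ei} > Q_{ei}$ and $G(q_{ei}) > 0$ for $q_{ei} < -Q_{ei}$. This reproduces the sign table of the orbital case and identifies $Q_{ei}$ as the solution of $F(Q_{ei}) = 0$ exactly as claimed.

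To close the argument I would take the component-wise Lyapunov candidate $V = \tfrac{1}{2}q_{ei}^{2}$, whose derivative is $\dot{V} = q_{ei}\dot{q}_{ei}$, and show $\dot{V} < 0$ whenever $\left\lvert q_{ei}\right\rvert > Q_{ei}$, thereby identifying $\{\left\lvert q_{ei}\right\rvert \leq Q_{ei}\}$ as the domain of attraction. The main obstacle — the one genuine difference from Lemma~\ref{lemBoundOnRe} — is that $\dot{q}_{ei}$ is \emph{not} equal to $\omega_{ei}$: the error kinematics \eqref{eq:kmtAttErr} give $\dot{\textbf{\textit{q}}}_{ev} = \tfrac{1}{2}\left(q_{e4}\textbf{I}_{3\times3} + \textbf{\textit{q}}_{ev}^{\times}\right)\boldsymbol{\omega}_{e}$, so each $\dot{q}_{ei}$ couples $\omega_{ei}$ (through the $q_{e4}$ term) with the remaining components of $\boldsymbol{\omega}_{e}$ and $\textbf{\textit{q}}_{ev}$ (through $\textbf{\textit{q}}_{ev}^{\times}$), rather than depending on $\omega_{ei}$ alone. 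I would resolve this by invoking the unwinding-free convention $q_{e4} > 0$ together with the fact that the estimate is a steady-state (post-convergence) bound, where $q_{e4} \to 1$ and the off-diagonal $\textbf{\textit{q}}_{ev}^{\times}$ contribution becomes negligible, so that $\dot{q}_{ei}$ inherits the sign of $\omega_{ei}$; the sandwich $G(q_{ei}) \leq \omega_{ei} \leq F(q_{ei})$ then yields $\dot{q}_{ei} < 0$ for $q_{ei} > Q_{ei}$ and $\dot{q}_{ei} > 0$ for $q_{ei} < -Q_{ei}$, hence $\dot{V} < 0$ outside $[-Q_{ei}, Q_{ei}]$. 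Making this sign-preservation step rigorous, rather than leaning on the near-equilibrium linearization, is where the real work lies; the remainder is a direct transcription of Lemma~\ref{lemBoundOnRe}, with a region-of-attraction figure analogous to \figurename{\ref{fig:boundOnRe}}.
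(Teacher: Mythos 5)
Your transcription of the sandwich $G(q_{ei}) \leq \omega_{ei} \leq F(q_{ei})$ and of the root analysis of $F$ and $G$ matches the paper exactly, and you correctly identified the one genuine difficulty: $\dot{q}_{ei} \neq \omega_{ei}$ because of the kinematic coupling in \eqref{eq:kmtAttErr}. However, your resolution of that difficulty is a gap, not a proof. You invoke a ``steady-state (post-convergence)'' regime where $q_{e4}\to 1$ and the $\textbf{\textit{q}}_{ev}^{\times}$ contribution is negligible, but the lemma must hold throughout the region $\left\lvert s_{i}\right\rvert \leq \varepsilon_{i}$, including the transient during which $\left\lvert q_{ei}\right\rvert$ may still exceed $Q_{ei}$ and $q_{e4}$ may be far from $1$; the statement is precisely an attractivity claim for that transient, so assuming near-equilibrium behavior is circular. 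As written, the sign-preservation step (``$\dot{q}_{ei}$ inherits the sign of $\omega_{ei}$'') is exactly what remains unproven, and you acknowledge as much.

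The paper avoids this obstacle entirely by a different choice of Lyapunov function: instead of the component-wise $V=\tfrac{1}{2}q_{ei}^{2}$, it takes the quaternion candidate $V=\textbf{\textit{q}}_{ev}^{T}\textbf{\textit{q}}_{ev}+(1-q_{e4})^{2}$ already used in Proposition \ref{proACSErr}. By the error kinematics \eqref{eq:kmtAttErr} the coupling terms cancel identically: the skew part contributes $\textbf{\textit{q}}_{ev}^{T}\textbf{\textit{q}}_{ev}^{\times}\boldsymbol{\omega}_{e}=0$, and the factors of $q_{e4}$ coming from the two pieces of $V$ sum to one, giving exactly
\begin{equation*}
    \dot{V}=\textbf{\textit{q}}_{ev}^{T}\boldsymbol{\omega}_{e}=\sum_{i=1}^{3}q_{ei}\,\omega_{ei},
\end{equation*}
with no $\dot{q}_{ei}$ appearing anywhere. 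Each summand $\dot{V}_{i}=q_{ei}\omega_{ei}$ is then bounded directly by your sandwich $G(q_{ei}) \leq \omega_{ei} \leq F(q_{ei})$, so $\dot{V}_{i}<0$ whenever $\left\lvert q_{ei}\right\rvert > Q_{ei}$, and the root analysis you already wrote closes the argument verbatim. If you replace your component-wise candidate and the near-equilibrium hand-waving with this choice of $V$, your proposal becomes the paper's proof.
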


\begin{proof}
    Let the Lyapunov function be 
    \begin{equation}
        V=\textbf{\textit{q}}_{ev}^{T}\textbf{\textit{q}}_{ev} + (1-q_{e4})^2,
    \end{equation}
    then its time derivative is obtained as 
    \begin{equation}
    \begin{aligned}
        \dot{V}&=\textbf{\textit{q}}_{ev}^{T} \boldsymbol{\omega}_{e}\\
        &=\sum_{i=1}^{3}{\dot{V}_{i}}
    \end{aligned}
    \end{equation}
    where $\dot{V}_{i}=q_{ei}\omega_{ei}$.
    The bound on the sliding variable $\left\lvert s_{i} \right\lvert \leq \varepsilon_{i}$ leads to the inequality equation as
    \begin{equation}
        G(q_{ei}) \leq \omega_{ei} \leq F(q_{ei})
    \end{equation}
    where $F(x)$ and $G(x)$ are the functions given in \eqref{eq:tmp19} and \eqref{eq:tmp20}.
    Since $F(x)$ and $G(x)$ are the same with \eqref{eq:tmp19} and \eqref{eq:tmp20} of Lemma \ref{lemBoundOnRe}, the uniqueness of the solution for the equations $F(Q_{ei,p})=0$ and $G(Q_{ei,n})=0$ and their relationship $Q_{ei,p} = -Q_{ei,n} = Q_{ei}$ are proven in the same way. Therefore, the region of attraction $\mathbb{L}_{2} = \{q_{ei} \in \mathbb{R}\, |\, \left| q_{ei} \right| \leq Q_{ei}\}$ derives that $\dot{V}_{i}$ and $\dot{V}$ are negative when $\left\lvert q_{ei} \right\lvert > Q_{ei}$.
    
\end{proof}

\section{Numerical Simulation}
\label{s:5sim}

\textcolor{black}{\subsection{Simulation Setting}}
The orbital perturbations induce the external disturbance force in space. The Earth's geopotential model JGM3 determines the aspherical gravity by degrees and orders of 70. The third-body gravity from the Sun and Moon that treated as point masses is applied. The gravity gradient torque from the point-mass Earth's gravity exerts a disturbance on the attitude control system. The residual dipole moments of $10^{-3}$ $\mathrm{Am^2}$ in a random direction and the Earth's magnetic field of international geomagnetic reference field model generates magnetic disturbance torque. The atmospheric drag and solar radiation pressure are the common disturbances for the relative orbit control and attitude systems. The atmospheric density is calculated from the exponential model, and the solar radiation pressure adopts a constant solar flux. The internal uncertainties on the deputy satellite originate from its mass and moments of inertia. The non-Gaussian random errors demonstrate these uncertainties for the software simulation.

\textcolor{black}{\subsection{Control System's Block Diagram}}
The ROCS and ACS \textcolor{black}{are developed based on separate dynamics as in \eqref{eq:dynOrb}, \eqref{eq:dynOrb_A1A2Eb}, \eqref{eq:dynAtt} and each is a complete control plant, respectively. In \figurename{\ref{fig:FSSDiagram}}, the blocks in the upper and lower dashed blue boxes constitute ROCS and ACS. However, two control plants should operate in a larger closed loop inter-dependently to organize an entire formation flying system.} \figurename{\ref{fig:FSSDiagram}} depicts the block diagram of both systems. The software algorithm of the ROCS, the orbit controller box, provides the relative orbit error states as inputs and generates the reference control input $\textbf{\textit{u}}_{orb,r}$. Let the reference control input from the control algorithm be denoted by $\textbf{\textit{u}}_{orb,0}^{l}$ that corresponds to $\textbf{\textit{u}}_{orb}$ in the equation. The superscript indicates the variable's frame as in Section~\ref{s:2mission}, and the subscript $0$ is a temporary number of the process. The zero-order hold function holds the reference control input constant for the sampling interval $T_{s,orb}$. The sampling interval for the ROCS system should be properly selected for both attitude control and immediate update of control input. $T_{s,orb}$ of 5 seconds is used in this simulation. The discretized signal $\textbf{\textit{u}}_{orb,1}^{l}$ is then converted into the body frame as $\textbf{\textit{u}}_{orb,1}^{b}$ to give an input to the orbital actuator and the propulsion system. 

The resolution and the saturation limit of thrust firing and the restricted degrees of freedom contribute to the thrust error and produce $\textbf{\textit{u}}_{orb,2}^{b}$. In the numerical simulation, the fired thrust $\textbf{\textit{u}}_{orb,2}^{b}$ is converted into the $\textbf{\textit{u}}_{orb,2}^{l}$ and $\textbf{\textit{u}}_{orb,2}^{i}$, sequentially. Since the control algorithm is formulated in the LVLH frame, $\textbf{\textit{u}}_{orb}$ in the simulation result graphs is the same with $\textbf{\textit{u}}_{orb,2}^{l}$. $\textbf{\textit{u}}_{orb,2}^{i}$ is the control input in the inertial frame and integrated with the orbital perturbations and mass uncertainties. The numerical integrator, ode45 of MATLAB provides the orbital states with every time step \textcolor{black}{($\Delta t$)} of 0.01 second.

The ACS generates the reference attitude from the zero-ordered thrust vector in the body frame $\textbf{\textit{u}}_{orb,1}^{b}$ that is transmitted to ACS every orbital sampling time. Since the reference attitude possibly alters every orbital sampling time, the ACS should be able to confine the sliding variable within $T_{s,orb}$. Therefore, the orbital sampling time is critical for the attitude control performance. The attitude control problem of the time limit $T_{s,orb}$ repeats for the whole simulation duration. The reference torque is integrated with the disturbance torques and the moment of inertia uncertainties using the attitude dynamics with reaction wheels. The numerical integrator and the time step for the ACS system are the same as the ones for the ROCS. 

\begin{figure*}
\begin{center}
\includegraphics[width=1.0\linewidth]{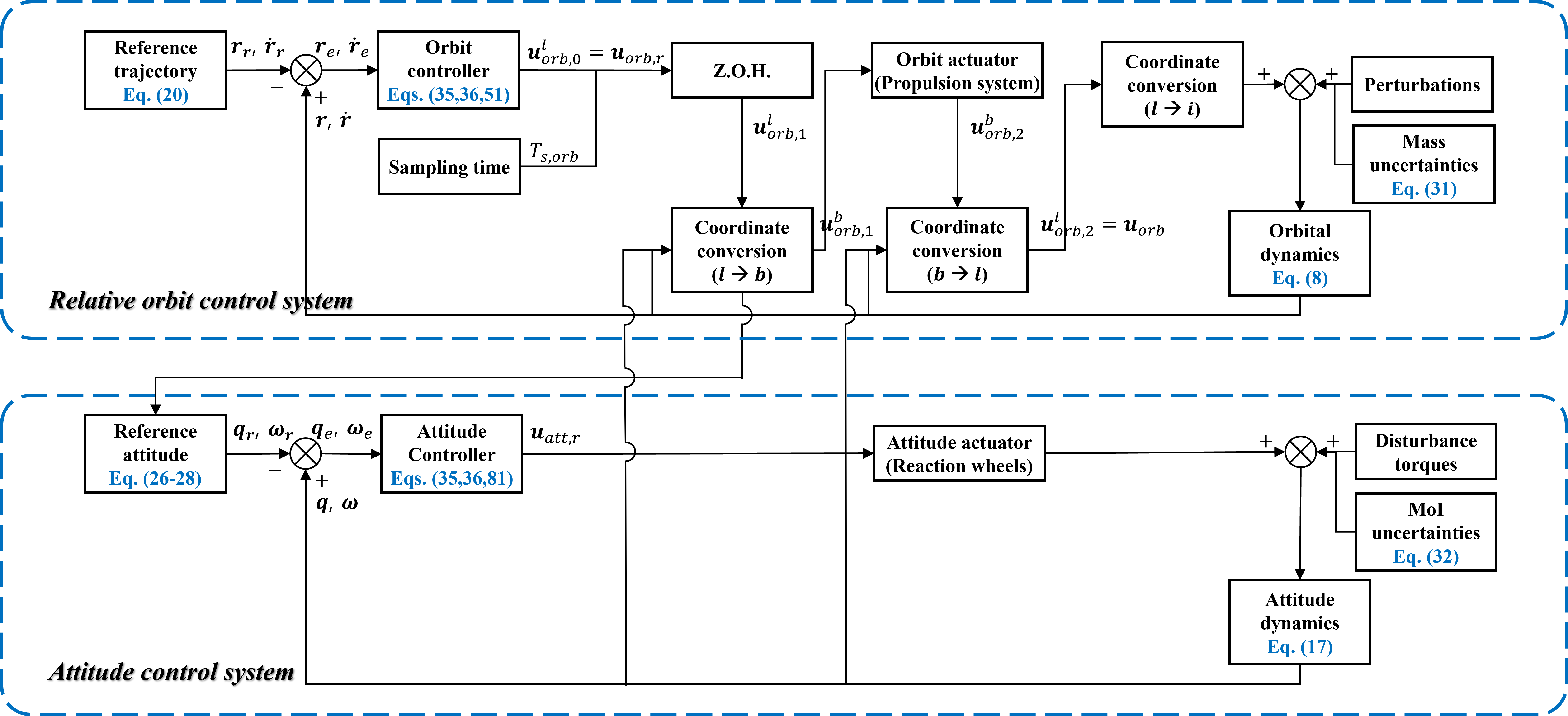}
\end{center}
\caption{Block diagram of control loops in formation flying system}
\label{fig:FSSDiagram}
\end{figure*}

\textcolor{black}{\subsection{Simulation Parameters}}

The procedure to search for the proper control parameters is as follows. The parameters that constitute the sliding variable are $\rho$, $\boldsymbol{\upalpha}$, and $\boldsymbol{\upbeta}$. \textcolor{black}{According to the sliding variable design in Eqs. \eqref{eq:sOrb} and \eqref{eq:sAtt}, the absence of the $\textnormal{sig}^{\rho}$ term implies a conventional linear sliding variable.} The parameter $\rho$ determines the convergence rate of the nonsingular fast terminal sliding variable when it retains zero. It can be readily shown that if $\left\lvert x \right\lvert < 1$ for an arbitrary variable $x$, $x$ achieves zero much faster as $\rho$ is smaller. On the contrary, when $1 < \left\lvert x \right\lvert$, larger $\rho$ shows faster convergence. Since the initial relative position error is likely to be larger than one meter and the definition of the quaternion restricts the initial quaternion error below one, it is recommended that $\rho_{orb}$ and $\rho_{att}$ are selected as 1.90 and 1.10, respectively. 

\textcolor{black}{
Lemma \ref{lemBoundOnRe} analytically determines $\boldsymbol{\upalpha}_{orb}$, $\boldsymbol{\upbeta}_{orb}$, and $\varepsilon_{orb}$ for ROCS and Lemma \ref{lemBoundOnQe} achieves the parameters for the ACS in the same manner. Lemma \ref{lemBoundOnRe} suggests that $R_{ei}$ is the unique solution of the function $F(x)=0$ and can be reorganized as 
\begin{equation}\label{eq:tmp29}
    \frac{F(R_{ei})}{\varepsilon_{i}} = 1 - \frac{\alpha_{i}}{\varepsilon_{i}}\textnormal{sig}^{\rho}R_{ei} - \frac{\beta_{i}}{\varepsilon_{i}}R_{ei} = 0
\end{equation}
Assuming that all three components of $\alpha_{i}$ and $\beta_{i}$ are chosen to be the same, the requirement in \eqref{eq:reqROCS} derives the inequality condition of $R_{ei}$ as
\begin{equation}
    R_{ei} \leq \sqrt{3}
\end{equation}
Assuming the largest $R_{ei}$, it leads \eqref{eq:tmp29} to be
\begin{equation}\label{eq:tmp30}
    1 - \frac{\alpha_{i}}{\varepsilon_{i}}\textnormal{sig}^{\rho}\sqrt{3} - \frac{\beta_{i}}{\varepsilon_{i}}\sqrt{3} = 0
\end{equation}
As a consequence, $\frac{\alpha_{i}}{\varepsilon_{i}}$ and $\frac{\beta_{i}}{\varepsilon_{i}}$ for ROCS should satisfy the inequality conditions as follows:
\begin{equation}\label{eq:tmp31}
    \frac{\alpha_{orb,i}}{\varepsilon_{orb,i}} \geq 0.29
\end{equation}
\begin{equation}\label{eq:tmp31_2}
   \frac{\beta_{orb,i}}{\varepsilon_{orb,i}} \geq 0.29
\end{equation}
In a similar manner, the requirement \eqref{eq:reqACS} derives $Q_{ei}$ in Lemma \ref{lemBoundOnQe} as 
\begin{equation}
    Q_{ei} \leq \sin{ \left( \frac{\sqrt{3}}{2} \cdot \frac{\pi}{180} \right) } = 1.51 \times 10^{-2}
\end{equation}
Following the same procedure as in \eqref{eq:tmp29}--\eqref{eq:tmp31_2}, the inequality conditions of $\frac{\alpha_{i}}{\varepsilon_{i}}$ and $\frac{\beta_{i}}{\varepsilon_{i}}$ for ACS are written as follows:
\begin{equation}\label{eq:tmp32}
    \frac{\alpha_{att,i}}{\varepsilon_{att,i}} \geq 24.0
\end{equation}
\begin{equation}\label{eq:tmp32_2}
    \frac{\beta_{att,i}}{\varepsilon_{att,i}} \geq 24.0
\end{equation}
}

Equations \eqref{eq:tmp31}--\eqref{eq:tmp32_2} suggest that the ratios of $\frac{\alpha_{i}}{\varepsilon_{i}}$ and $\frac{\beta_{i}}{\varepsilon_{i}}$ give the critical importance in the error bounds and $\varepsilon_{i}$ is not solely determined. In addition, the parameters $\boldsymbol{\upalpha}$ and $\boldsymbol{\upbeta}$ attach weights to the error state relative to its time derivatives. As $\boldsymbol{\upalpha}$ and $\boldsymbol{\upbeta}$ become sufficiently large, it mitigates the direct influence from the control input on the relative velocity or the angular velocity, and the controlled states suffer less from sudden rises. Several simulations and intuition help with the empirical tuning of those values.

\textcolor{black}{
\begin{table}[b]
\renewcommand{\arraystretch}{1.3}
\caption{\textcolor{black}{Simulation parameters and physical meaning}}
\label{tab:simParamPhys2}
\centering
\begin{tabular}{c|c}
\hline\hline
\textcolor{black}{Parameter} & \textcolor{black}{Physical meaning}  \\
\hline
\textcolor{black}{$\rho$} & \textcolor{black}{Convergence rate of the sliding variable}  \\
\hline
\textcolor{black}{$\alpha_{i} / \varepsilon$, $\beta_{i} / \varepsilon$} & \textcolor{black}{Pre-designated error bound}  \\
\hline
\textcolor{black}{$K\left( 0 \right)$} & \textcolor{black}{\makecell{Initial control gain that \\ should be chosen to be less than the actuator limit}}  \\
\hline
\textcolor{black}{$K_{0}$} & \textcolor{black}{Lower limit of the control gain $K\left( t \right)$}  \\
\hline
\textcolor{black}{$\eta$} & \textcolor{black}{\makecell{Proportional constant that the rate of change in $K\left( t \right)$ and \\ should satisfy $0 < \eta \cdot \Delta t \leq 1$}}   \\
\hline\hline
\end{tabular}
\end{table}}


The adaptive control law has the advantage that only a few parameters are to be selected. Suppose that the sliding variable approaches the vicinity of $\varepsilon$ to have the same order of $\varepsilon$. Then the control gain $K(t)$ has the same order as the control input $\textbf{\textit{u}}(t)$. Since the error is disposed to decrease, the initial control gain $K(0)$ could be chosen considering the actuator limit in Table \ref{tab:actuator}. The reorganized formula of the adaptive law in \eqref{eq:adtLaw2} suggests that the relative magnitude of $\left\lVert \textbf{\textit{s}} \right\lVert_{2} / \varepsilon$ and $\left( 1-\frac{K_{0}}{K(t)} \right)$ determines the sign of $\dot{K}$. 
\begin{equation}\label{eq:adtLaw2}
    \dot{K}(t)=\eta K(t) \left(\left\lVert-\frac{\textbf{\textit{s}}}{\varepsilon} \right\rVert_{2}-\left(1-\frac{K_{0}}{K(t)} \right)\right)
\end{equation}
Therefore, the parameters $K_{0}$ and $\eta$ affect how sensitively the sign and magnitude of $\dot{K}$ change. In the same manner as other parameters, it needs a number of simulations and intuition. 

\begin{table}[b]
\renewcommand{\arraystretch}{1.3}
\caption{Actuator specification of attitude and orbit control systems}
\label{tab:actuator}
\centering
\begin{tabular}{c|c|c}
\hline\hline
Actuator & \multicolumn{2}{c}{Specification} \\
\hline
\multirow{2}{*}{Thruster} & \makecell{Maximum thrust\\($u_{orb,i}^{*}$ for $i=x,y,z$)} & $1.00$ (mN) \\
\cline{2-3} & Thrust resolution & $0.01$ (mN) \\
\hline
Reaction wheels & \makecell{Maximum torque\\($u_{att,i}^{*}$ for $i=1,2,3$)} & $0.23$ (mNm) \\
\hline\hline
\end{tabular}
\end{table}


Since each of $\alpha_{att,i}$ for $i = 1,2,3$ is set to the same value, let $\alpha_{att}=\alpha_{att,i=1,2,3}$ be defined. In a similar manner $\beta_{att}$, $\alpha_{orb}$, and $\beta_{orb}$ are defined. If $\alpha_{att}$, $\beta_{att}$, $\rho_{att}$, $\alpha_{orb}$, $\beta_{orb}$, and $\rho_{orb}$ are set as $1.44$, $1.44$, $1.10$, $7.00\times10^{-3}$, $7.00\times10^{-3}$, and $1.90$ for $i = 1,2,3$, respectively, then the error and its time derivative have the same order. $\varepsilon_{att}$ and $\varepsilon_{orb}$ are respectively set as $6.00\times10^{-2}$ and $1.20\times10^{-2}$, considering the requirements for the error tolerance and the control input saturation. The control parameters $\eta_{att}$, $K_{att}(0)$, $K_{att,0}$, \textcolor{black}{$\eta_{orb}$}, $K_{orb}(0)$, and $K_{orb,0}$ are selected as $5.00\times10^{-2}$, $2.00\times10^{-4}$, $2.00\times10^{-7}$, \textcolor{black}{$4.50\times10^{-4}$}, $3.00\times10^{-5}$, and $1.00\times10^{-8}$. These parameters are summarized in Table \ref{tab:simParams}.

\begin{table}[b]
\renewcommand{\arraystretch}{1.3}
\caption{Simulation parameters for attitude and orbit control systems}
\label{tab:simParams}
\centering
\begin{tabular}{c|c|c|c}
\hline\hline
$\rho_{att}$ & $\alpha_{att}$ & $\beta_{att}$ & $ $ \\
\hline
$1.10$ & $1.44$ & $1.44$ & $ $ \\
\hline
$\varepsilon_{att}$ & $K_{att}(0)$ & $K_{att,0}$ & $\eta_{att}$ \\
\hline
$6.00\times10^{-2}$ & $2.00\times10^{-4}$ & $2.00\times10^{-7}$ & $5.00\times10^{-2}$ \\
\hline\hline
$\rho_{orb}$ & $\alpha_{orb}$ & $\beta_{orb}$ & $ $ \\
\hline
$1.90$ & $7.00\times10^{-3}$ & $7.00\times10^{-3}$ & $ $ \\
\hline
$\varepsilon_{orb}$ & $K_{orb}(0)$ & $K_{orb,0}$ & \textcolor{black}{$\eta_{orb}$}  \\
\hline
$1.20\times10^{-2}$ & $3.00\times10^{-5}$ & $1.00\times10^{-8}$ & \textcolor{black}{$4.50\times10^{-4}$}  \\
\hline\hline
\end{tabular}
\end{table}

\textcolor{black}{\subsection{Simulation Result}}

\textcolor{black}{According to the top panel of \figurename{\ref{fig:simAlignOrbS}}, the two-norm of the orbital sliding variable rapidly decreases and initially converges below $\varepsilon_{orb}$ at 205 seconds. Afterward, the sliding variable exceeds $\varepsilon_{orb}$ and is suppressed again. The time intervals that $\left\lVert \textbf{\textit{s}}_{orb} \right\lVert_{2}$ grows larger than $\varepsilon_{orb}$ are 345--350, 685--770 $\left(\Delta t_{s_{orb},1}\right)$, 920--1235 $\left( \Delta t_{s_{orb},2} \right)$, 2410--2705 $\left( \Delta t_{s_{orb},3} \right)$, 2755--2795 $\left( \Delta t_{s_{orb},4} \right)$, 2800--2810, 3325--3400 $\left( \Delta t_{s_{orb},5} \right)$, and 3405--3410 seconds. Since the sampling interval is 5 seconds, temporary spikes such as 345--350, 2800--2810, and 3405--3410 seconds last only for a few sampling intervals, and can be regarded as included in the previous intervals.}

\textcolor{black}{The appropriate choice of $\boldsymbol{\upalpha}_{orb}$ and $\boldsymbol{\upbeta}_{orb}$ holds the two-norm of the position error below 3 meters even during $\Delta t_{s_{orb},2}$ and $\Delta t_{s_{orb},3}$. The two-norm of the position error plotted in the middle panel of \figurename{\ref{fig:simAlignOrbS}} initially reduces below the requirement at 45.60 seconds, and then remains in the region thereafter.}

The position error in each direction of the LVLH frame is described at the bottom panel. \textcolor{black}{The cross-track position error mostly remains near zero because the perturbation forces are the smallest in the cross-track direction. The radial and along-track position errors increase due to the disturbances during 750--1200 seconds. If the errors and the sliding variable are accumulated enough to outgrow the thrust's threshold, then the thrust is fired to suppress the errors and the sliding variable. This is how the control law intends, so this phenomenon is repeated to control the errors after 2500 seconds.}

\begin{figure}[b!]
\begin{center}
\includegraphics[width=1.0\linewidth]{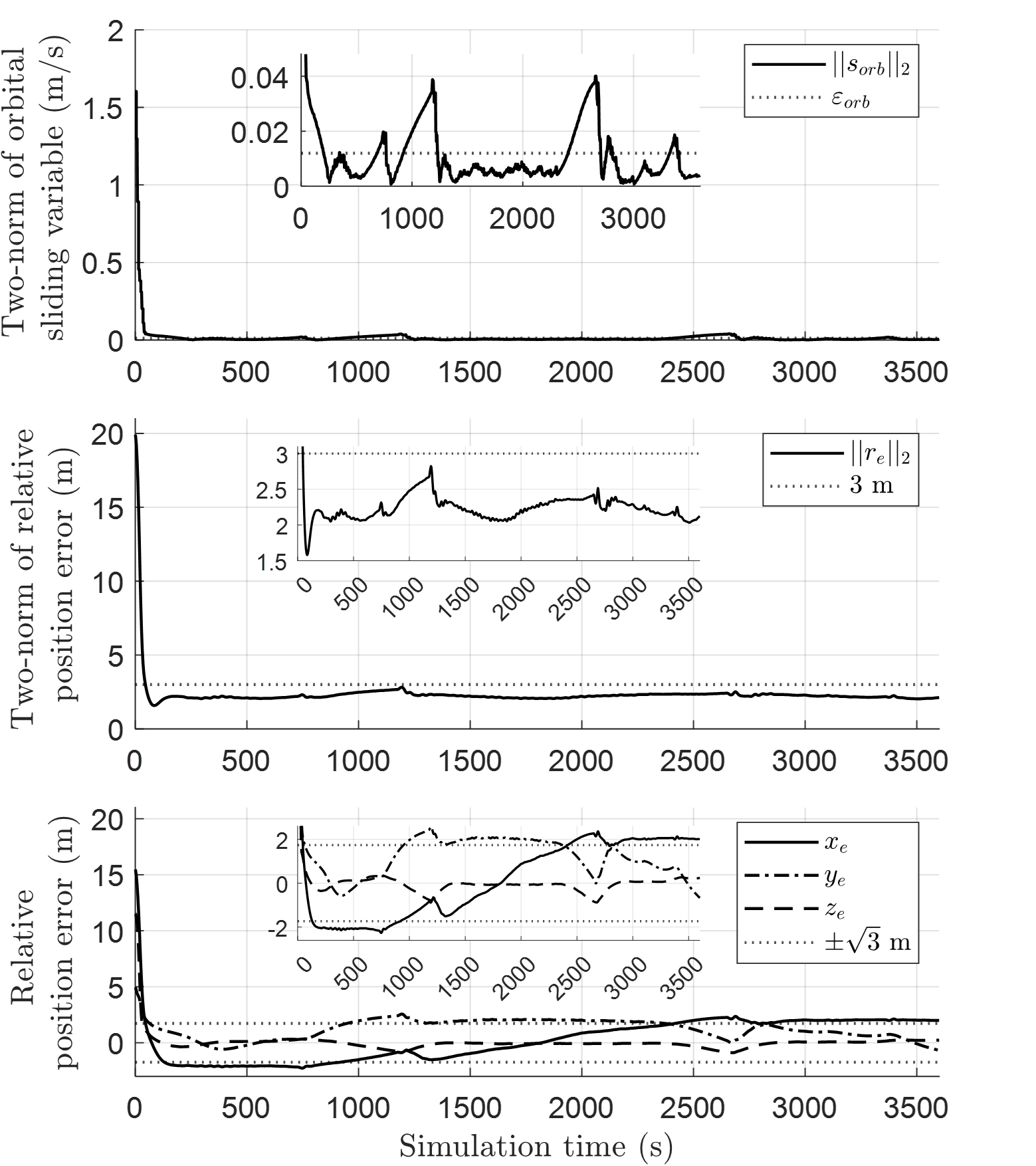}
\end{center}
\caption{Two-norm of the orbital sliding variable and position error}
\label{fig:simAlignOrbS}
\end{figure}


The difference between the reference thrust from the control algorithm $\textbf{\textit{u}}_{orb,r}$ and the fired thrust $\textbf{\textit{u}}_{orb}$ defines the thrust error $\textbf{\textit{u}}_{orb,e}$ as illustrated in \figurename{\ref{fig:simAlignUOrbK}}, where $\textbf{\textit{u}}_{orb,e}=\textbf{\textit{u}}_{orb}-\textbf{\textit{u}}_{orb,r}$. The fired thrust and the thrust error are depicted in the top three panels of \figurename{\ref{fig:simAlignUOrbK}}. The radial thrust error in the first panel initially has a positive value and rapidly decreases during the first 50 seconds. The positive thrust error suggests that the reference thrust is negative and controls the positive orbital error. It also explains the positive along-track and cross-track thrust errors in the second and third panels. After 50 seconds, the magnitude of thrust is almost zero and induces the offset of the radial position error. After the convergence of the sliding variable, only a small magnitude of thrust is fired discontinuously. Because the propulsion system has a thrust resolution of 0.01 mN, the thrust could be fired once the reference thrust is large enough. Compared to the radial and along-track cases, the cross-track thrust easily controls the position error and consumes less fuel after the convergence. \textcolor{black}{The thrust is generated in all three axes to counteract the drastic rise of the sliding variable.} The fourth and fifth panels in \figurename{\ref{fig:simAlignUOrbK}} depict $K$ and $\dot{K}$. According to the adaptation law in \eqref{eq:adtLaw2}, the negative sign of $\dot{K}$ and the converged $K$ indicate that the sliding variable is bounded. These simulation results confirm that the sliding variable is bounded \textcolor{black}{after 350 seconds, except $\Delta t_{s_{orb},\{1,2,...,5\}}$.}

\begin{figure}[t]
\begin{center}
\includegraphics[width=1.0\linewidth]{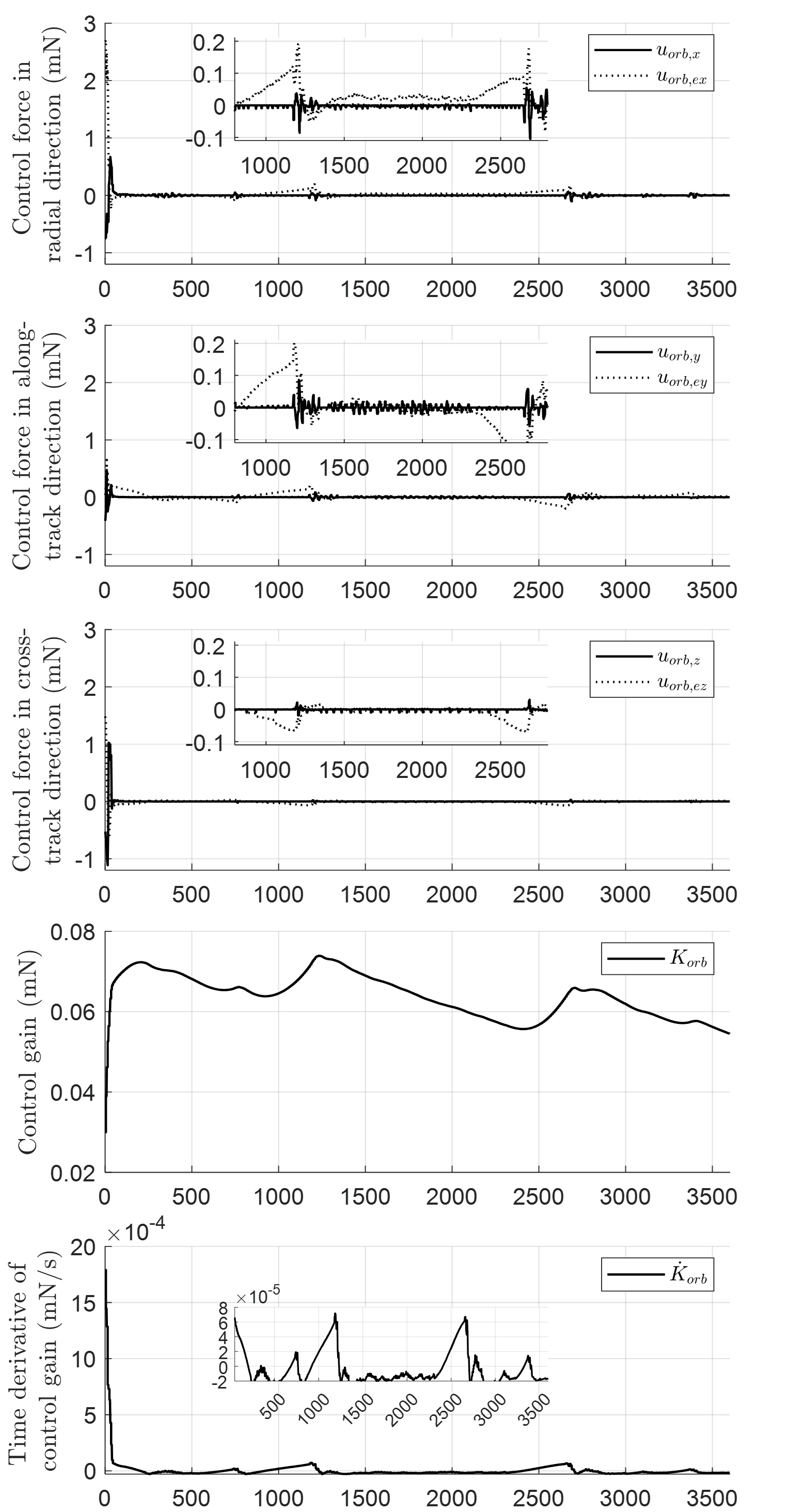}
\end{center}
\caption{Fired thrust, thrust error, orbital control gain and its time derivative}
\label{fig:simAlignUOrbK}
\end{figure}

The abrupt rises in the sliding variable and quaternion errors occur when the reference quaternion changes. According to the simulation result shown in \figurename{\ref{fig:simAlignAttS}}, the two reference quaternions switch \textcolor{black}{117 times for 3600 seconds.} The expectable reference quaternions make it possible to predict the convergence speed of the attitude sliding variable. The initial attitude is usually oriented far from the reference attitude in the beginning. The top two panels in \figurename{\ref{fig:simAlignAttS}} show that the change in the reference attitude could instantly reduce the control error during the first 50 seconds. The ROCS generates the reference thrust more frequently \textcolor{black}{when the orbital sliding variable exceeds during $\Delta t_{s_{orb},\{1,2,...,5\}}$.} The attitude maneuver is very regular and the convergence speed of the attitude sliding variable should be analyzed \textcolor{black}{except $\Delta t_{s_{orb},\{1,2,...,5\}}$.} \textcolor{black}{For example, the attitude sliding variable is bounded by $\left\lVert \textbf{\textit{s}} \right\lVert_{2} \leq \varepsilon_{att}$ within 8.55 seconds on average after 350 seconds. The attitude maneuver is regular, the convergence speed is predicted to be similar for other time intervals.}

As it is described in Section~\ref{s:2mission}, the reference attitude rotates $\pm 10$ deg along the $+\hat{\textbf{\textit{b}}}_{x}$ axis to achieve the controllability in $+\hat{\textbf{\textit{b}}}_{z}$ axis. Therefore, only the $q_{e1}$ in the second panel momentarily increases except for the initial 50 seconds. Since the three components of the quaternion vector are coupled by the cross term in \eqref{eq:qerrAtt}, the increase in $q_{e1}$ causes $q_{e2}$ and $q_{e3}$ to fluctuate (the third and fourth panels).

\begin{figure}[t]
\begin{center}
\includegraphics[width=1.0\linewidth]{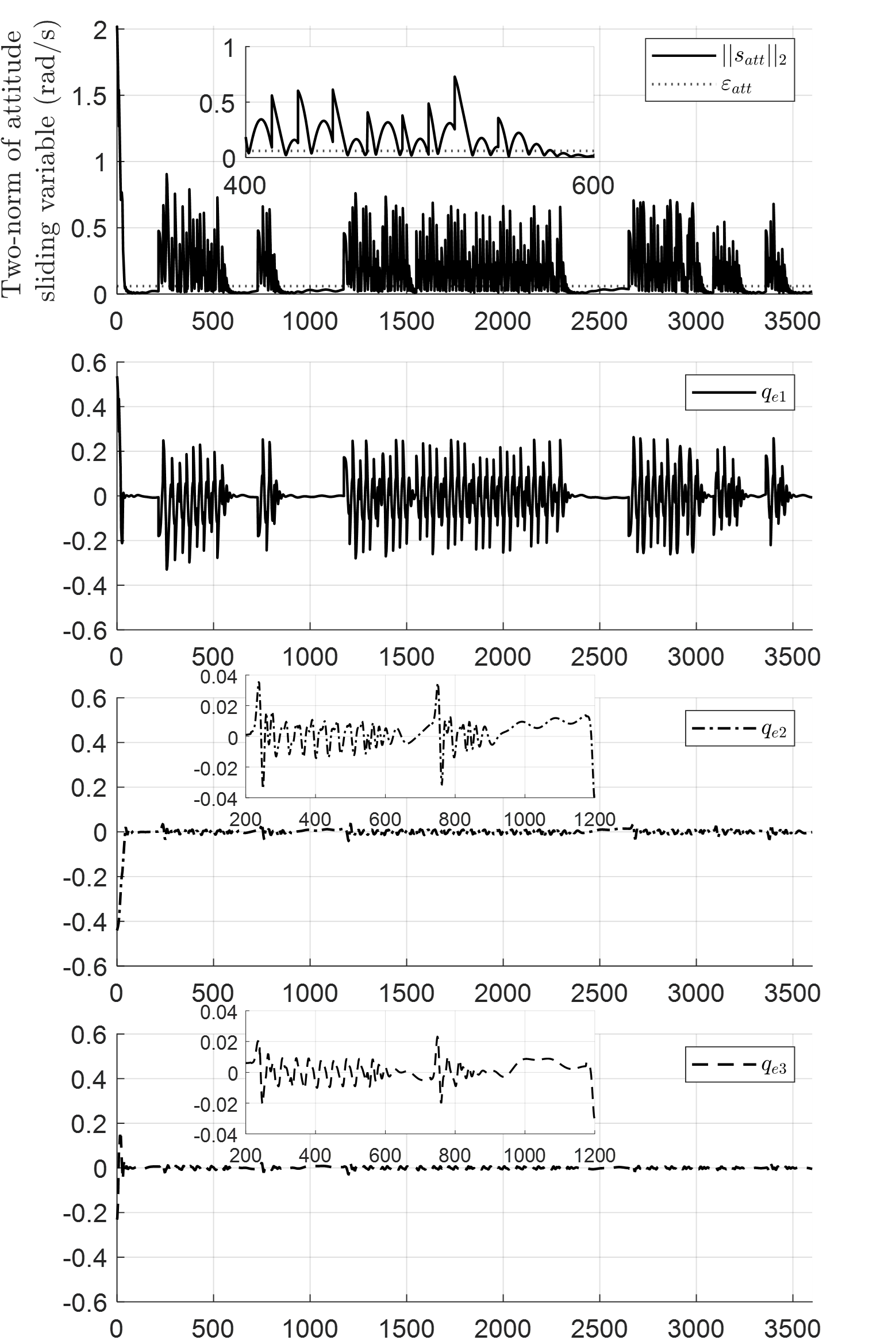}
\end{center}
\caption{Two-norm of the attitude sliding variable and quaternion error}
\label{fig:simAlignAttS}
\end{figure}

The control torque profile in the top three panels in \figurename{\ref{fig:simAlignUAttK}} corresponds to the trend of the quaternion errors. $ u_{att,1} $ is temporarily saturated for about 10 seconds when the $ q_{e1}$ is at the peak. $ u_{att,2} $ and $ u_{att,3} $ are relatively small but never zero unless the three components in quaternion error vectors become zero simultaneously. The fourth and fifth panels explain the rationale that the last convergence speed of the attitude sliding variable is latent. \textcolor{black}{When the attitude sliding variable stayed in the region within $\varepsilon_{att}$ exceeds the bound, $\dot{K}$ becomes positive and $K$ starts to increase.} (top panel of \figurename{\ref{fig:simAlignAttS}}). Theoretically, $\dot{K}$ is negative when $ \left\lVert \textbf{\textit{s}} \right\lVert_{2} \leq \left( 1-\frac{K_{0}}{K} \right) \cdot \varepsilon_{att}$. The simulation verifies that the time that $\dot{K}$ is negative and the time that $ \left\lVert \textbf{\textit{s}} \right\lVert_{2} \leq \left( 1-\frac{K_{0}}{K} \right) \cdot \varepsilon_{att}$ are different by only 0.12 second. Hence, $\dot{K}$ is negative for a considerable time duration and it provokes the significant reduction of $K$ in the fourth panel. 

\begin{figure}[t]
\begin{center}
\includegraphics[width=1.0\linewidth]{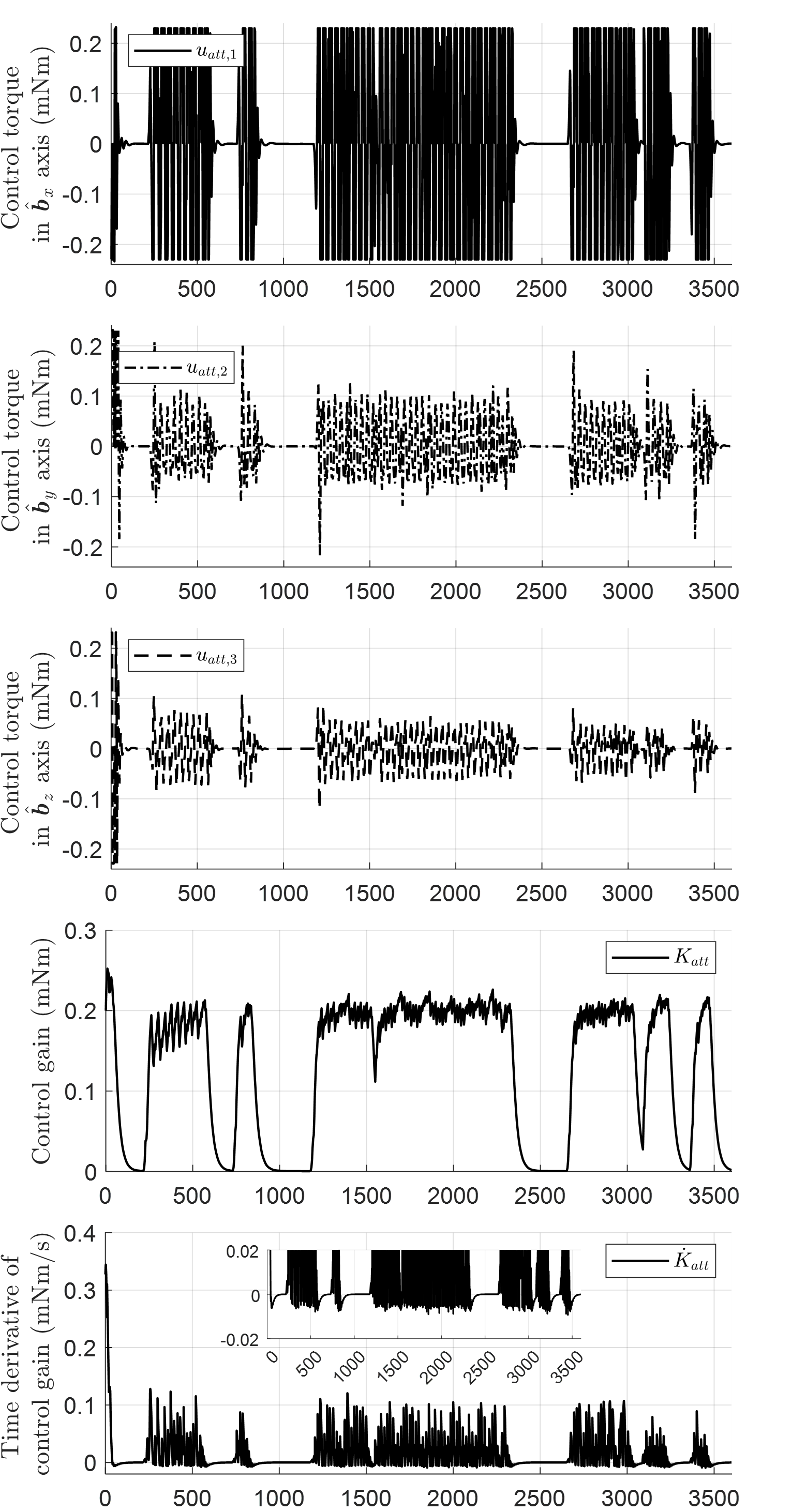}
\end{center}
\caption{Control torque, attitude control gain and its time derivative}
\label{fig:simAlignUAttK}
\end{figure}

\textcolor{black}{The operation time for the alignment in this mission is designed to be constrained within 20 minutes. The simulation time is extended to be 3600 seconds to clarify the control performance. The control performance is compared to the typical approachs such as PD and LQR, where the LQR control algorithms for ROCS and ACS refer to \cite{KTAlfriend} and \cite{LQRAtt}.} According to Table \ref{tab:simCompare}, this NFTSM control algorithm satisfies the mission requirement for \textcolor{black}{59.24 minutes and consumes $7.90\times 10^{-2}$ Ns of fuel.} The weight matrices for the LQR controllers are \textcolor{black}{$\textbf{Q}_{LQR,orb} = diag\{ 6.39\times 10^{-1}, 3.83 \times 10^{-1}, 5.27 \times 10^{-1}, 1.25 \times 10^{-3}, 1.25 \times 10^{-3}, 1.25 \times 10^{-3} \}$, $\textbf{R}_{LQR,orb} = diag \{3.47\times10^{5}, 3.47\times10^{5}, 3.47\times10^{5} \}$, $\textbf{Q}_{LQR,att} = diag\{ 3.00\times10^{-7}, 3.00\times10^{-7}, 3.00\times10^{-7}, 3.00 \times 10^{-4}, 3.00 \times 10^{-4}, 3.00 \times 10^{-4} \}$ and $\textbf{R}_{LQR,att} = diag \{1.16\times10^{4}, 1.16\times10^{4}, 1.16\times10^{4} \}$.} The LQR algorithms retain the satellite within the requirement only for \textcolor{black}{5.50} minutes while it consumes \textcolor{black}{7.02} Ns, \textcolor{black}{which means it drives the spacecraft only near the reference trajectory consuming the maximum thrusts.} \textcolor{black}{On the other hand, the PD controllers have gains as $K_{P,orb} = 6.39 \times 10^{-5}$, $K_{D,orb} = 1.45 \times 10^{-3}$, $K_{P,att} = 2.00 \times 10^{-4}$, and $K_{D,att} = 3.00 \times 10^{-3}$. This approach achieves the aligned time of 16.48 minutes while consuming $8.17 \times 10^{-1}$ Ns.} Therefore, \textcolor{black}{compared to PD and LQR controllers, the NFTSM control algorithm extends the aligned time and improves the fuel efficiency.}

\begin{table}[t]
\renewcommand{\arraystretch}{1.3}
\caption{Comparison of control performance}
\label{tab:simCompare}
\centering
\begin{tabular}{c|c|c|c}
\hline\hline
\textbf{ROCS} & \textbf{ACS} & \textbf{Aligned time} & \textbf{Fuel consumption} \\
\hline
\textcolor{black}{PD} & \textcolor{black}{PD} & \textcolor{black}{$16.48$ min} & \textcolor{black}{$8.17 \times 10^{-1}$ Ns} \\
\hline
\textcolor{black}{LQR} & \textcolor{black}{LQR} & \textcolor{black}{$5.50$ min} & \textcolor{black}{$7.02$ Ns} \\
\hline
NFTSM & NFTSM & \textcolor{black}{$59.24$ min} & \textcolor{black}{$7.90 \times 10^{-2}$ Ns} \\
\hline\hline
\end{tabular}
\end{table}

\textcolor{black}{Two indices are introduced to compare the control performance - aligned time and fuel consumption. Since the primary objective of the mission is to 'align 'two satellites, the system performance is defined from the perspective of orbit control. The aligned time means the total duration that the satellite satisfies \eqref{eq:reqROCS}. The fuel consumption is calculated as the total sum of control input multiplied by the time step. The longer the aligned time is and the less fuel consumption, the better the control performance is. The two-norm of the relative position error and the control force are compared and illustrated in \figurename{\ref{fig:simAlignComp}}. The top panel demonstrates that the LQR controller barely manages to suppress the position error near the requirement. While the PD controller outperforms the LQR regarding error reduction, the error converges after quite a long time. In contrast, the NFTSM successfully restrains the error. The bottom panel compares the magnitude of the control force. The LQR controller consistently generates almost the maximum magnitude of control force, whereas the NFTSM achieves the most efficient fuel consumption.}

\begin{figure}[t]
\begin{center}
\includegraphics[width=1.0\linewidth]{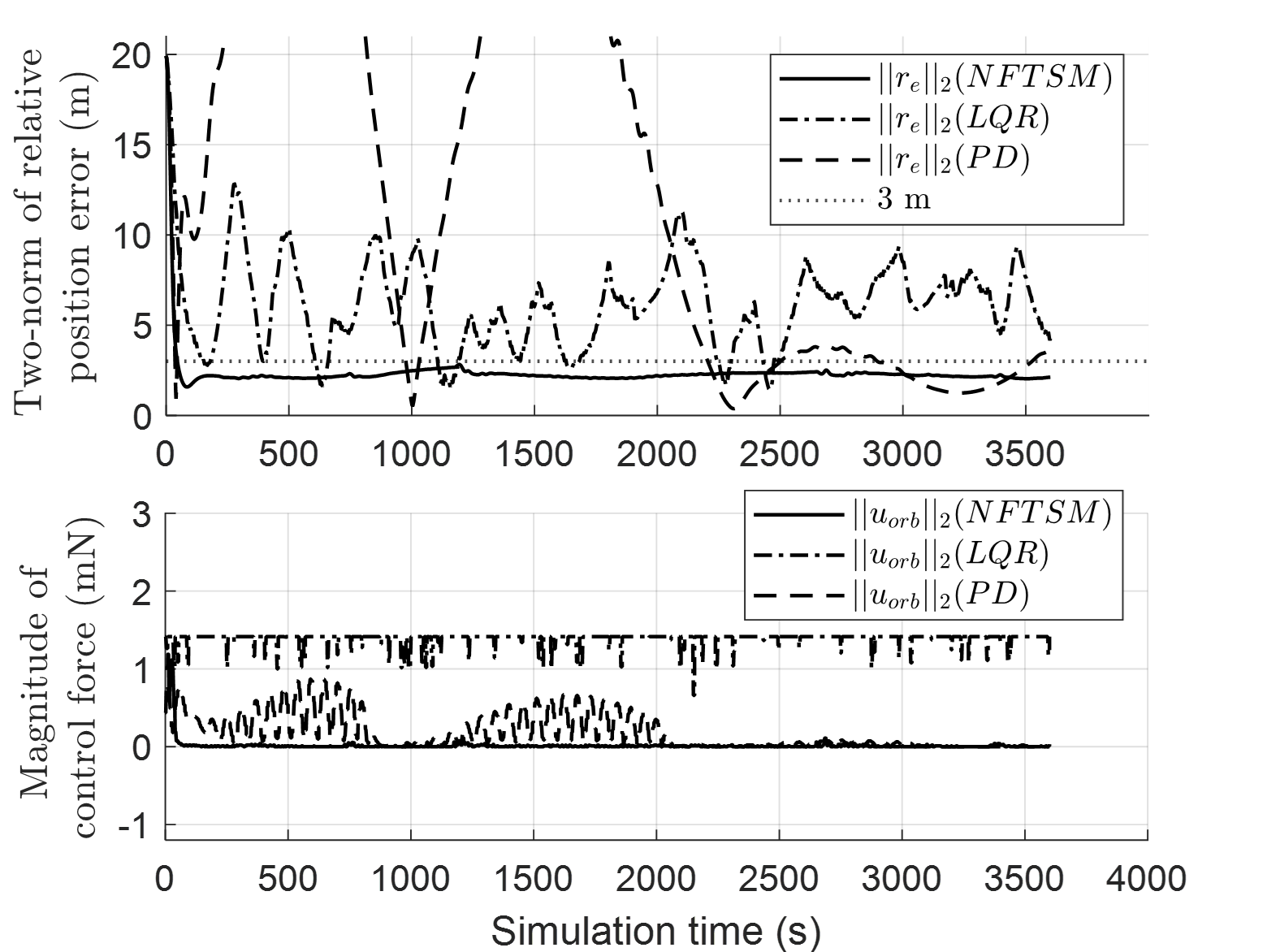}
\end{center}
\caption{Control performance comparison}
\label{fig:simAlignComp}
\end{figure}

\section{Conclusion}
\label{s:6conclusion}
The virtual telescope demonstration mission requires precise control performances in both attitude and relative orbit control systems. Since the propulsion system has only two degrees of freedom, the attitude control performance could degrade the mission success possibility. In this paper, the nonsingular fast terminal sliding surface and the adaptive smooth controller are introduced to the attitude and relative orbit control systems. The nonsingular fast terminal sliding mode naturally has a fast convergence rate and improves the alignment time. The control law is designed to generate continuous signals and prevents the system from chattering--the radical problem of the classical sliding mode control. The adaptive rule designed based on the vector norm of the sliding variable is able to consider the changes in the uncertainties and disturbances. The simulation results show that the adaptive smooth controller based on the nonsingular fast terminal sliding mode greatly improves the control performance with reasonable thrust consumption.

\section*{ACKNOWLEDGMENT}
This research was supported by the Challengeable Future Defense Technology Research and Development Program through the Agency For Defense Development(ADD) funded by the Defense Acquisition Program Administration(DAPA) in 2023(No.915027201)




\begin{IEEEbiography}[{\includegraphics[width=1in, height=1.25in,clip, keepaspectratio]{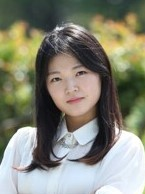}}]{Soobin Jeon}{\space}Soobin Jeon is a PhD candidate who will graduate in Aug 2024 at Yonsei University in South Korea. She received a B.S. degree in astronomy from the same university (2017). Her research field includes Guidance, Navigation, and Control (GN\&C), robust control, and satellite constellation. She was involved in CubeSat missions such as CANYVAL-C and MIMAN. This paper addresses another research interest, the robust control system development. She recently participated in the satellite constellation project that dealt with the analytical orbit design of communication constellations with continuous coverage.  
\end{IEEEbiography}

\begin{IEEEbiography}[{\includegraphics[width=1in, height=1.25in,clip, keepaspectratio]{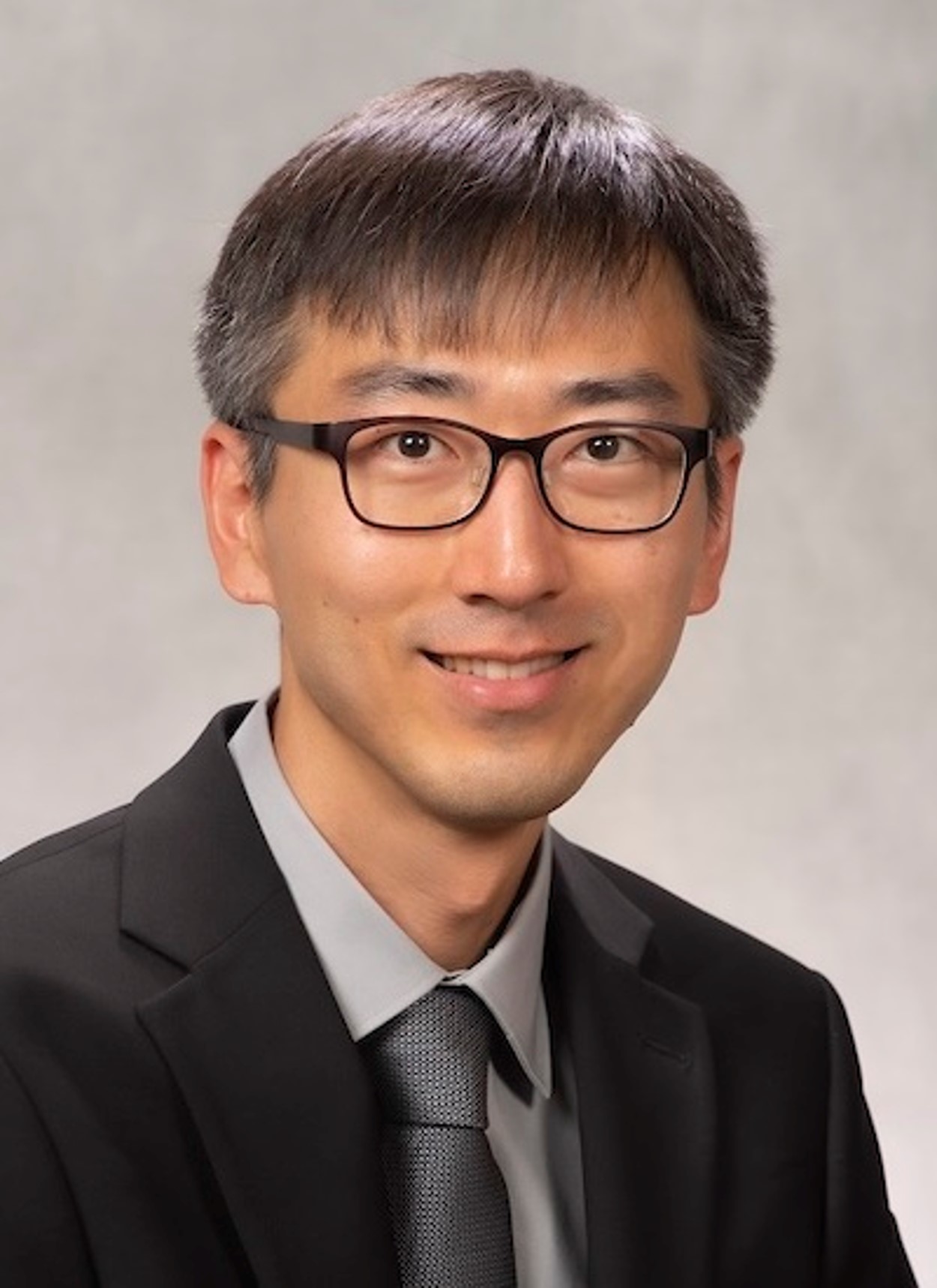}}]{Hancheol Cho}{\space}Dr. Hancheol Cho received the B.S. and M.S. degrees in astronomy from Yonsei University, South Korea, in 2006 and 2008, and the Ph.D. degree in aerospace engineering from the University of Southern California, Los Angeles, USA, in 2012.
    From 2013 to 2015, he was a senior research engineer with Samsung Techwin Co., Ltd. in South Korea. He was a Marie-Curie COFUND postdoctoral fellow with the University of Liege, Belgium and a postdoctoral appointee with Sandia National Laboratories, Albuquerque, USA. From 2021 to 2023, he was an assistant professor with the Department of Aerospace Engineering, Embry-Riddle Aeronautical University, Daytona Beach, USA. Since September 2023, he has been an associate professor with the Department of Astronomy and the Department of Satellite Systems, Yonsei University, South Korea. He also holds a visiting assistant professorship at Embry-Riddle Aeronautical University in Daytona Beach, USA. His research interests include spacecraft dynamics and control, robust adaptive controls, optimization, and robotics. He is an associate editor of the Journal of Control, Automation and Electrical Systems (Springer) and a member of the Astrodynamics Technical Committee of AIAA (American Institute of Aeronautics and Astronautics).
\end{IEEEbiography}

\begin{IEEEbiography}[{\includegraphics[width=1in, height=1.25in,clip, keepaspectratio]{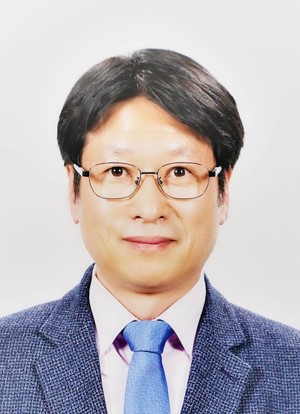}}]{Sang-Young Park}{\space}Dr. Sang-Young Park is a professor of the Astronomy Department as well as the Department of Satellite Systems at Yonsei University. He earned a Ph.D. degree in aerospace engineering from Texas A\&M University in 1996. Since joining Yonsei in 2003, he has contributed to a variety of research fields in spacecraft dynamics and control, optimal control, orbit determination, spacecraft formation flying, and CubeSat systems.

\end{IEEEbiography}

\end{document}